\newtheorem{theorem}{Theorem}
\newtheorem{lemma}[theorem]{Lemma}
\newtheorem{proposition}[theorem]{Proposition}
\newtheorem{claim}{Claim}
\newtheorem{definition}{Definition}
\newcommand{\maxRegret}{\text{maxRegret}}
\newcommand{\regret}{\text{regret}}
\newcommand{\OMV}{\text{OMV}}
\newcommand{\abs}[1]{\left|#1\right|}
\newcommand{\specialcell}[2][c]{%
  \begin{tabular}[#1]{@{}c@{}} #2\end{tabular}}
\DeclareMathOperator*{\argmin}{arg\,min}
\algnewcommand\algprocedure{\textbf{Procedure:}}
\algnewcommand\Procedurename{\item[\underline{\algprocedure}]}
\algnewcommand\algmain{\textbf{Main:}}
\algnewcommand\Main{\item[\underline{\algmain}]}
\algnewcommand\algorithmicinput{\textbf{Input:}}
\algnewcommand\Input{\item[\algorithmicinput]}
\algnewcommand\algorithmicoutput{\textbf{Output:}}
\algnewcommand\Output{\item[\algorithmicoutput]}
 \title{Mechanism Design for Locating a Facility under Partial Information}
\author{
	Vijay Menon\footnote{David R.\ Cheriton School of Computer Science, University of Waterloo.\ \email} \\
	\newline
	\and
	Kate Larson\footnotemark[1]
}
\date{}
\begin{document}

 \maketitle

\begin{abstract}
 We study the classic mechanism design problem of locating a public facility on a real line. In contrast to previous work, we assume that the agents are unable to fully 
specify where their preferred location lies, and instead only provide coarse information---namely, that their preferred location lies in some interval. Given such partial 
preference information, we explore the design of \textit{robust} deterministic mechanisms, where by \textit{robust} mechanisms we mean ones that perform well with respect to 
all the possible unknown true preferred locations of the agents. Towards this end, we consider two well-studied objective functions and look at implementing these under two 
natural solution concepts for our setting \emph{i)} very weak dominance and \emph{ii)} minimax dominance.  We show that under the former solution concept, there are no 
mechanisms that do better than a naive mechanism which always, irrespective of the information provided by the agents, outputs the same location. However, when using the 
latter, weaker, solution concept, we show that one can do significantly better, and we provide upper and lower bounds on the performance of mechanisms for the objective 
functions of interest. Furthermore, we note that our mechanisms can be viewed as extensions to the classical optimal mechanisms in that they perform optimally when agents 
precisely know and specify their preferred locations.
\end{abstract}

\section{Introduction}


We consider the classic problem of locating a public facility on a real line or an interval, a canonical problem in \textit{mechanism design without money}. In the 
standard version 
of this problem, there are $n$ agents, denoted by the set $[n] = \{1, \cdots, n\}$, and each agent $i \in [n]$ has a preferred location $x_i$ for the public facility. The cost 
of an agent for a facility located at $p$ is given by 
$C(x_i, p) = |p -x_i|$, the distance from the facility to the agent's ideal location, and the task in general is to locate a facility that minimizes some objective function. 
The most commonly considered objective functions are \emph{a)} sum of costs for the agents and \emph{b)} the maximum cost for an agent. In the mechanism design 
version of the problem, the main question is to see if the objective under consideration can be implemented, either optimally or approximately, in (weakly) dominant 
strategies.

While the standard version of the problem has received much attention, with several different variants like extensions to multiple facilities (e.g., \citep{procaccia13,lu10}), 
looking at alternative objective functions (e.g., \citep{feldman13,cai16}) etc.\ being extensively studied, the common assumption in this literature is that the agents are 
always precisely aware of their preferred locations on the real line (or the concerned metric space, depending on which variant is being considered). However, this might not 
always be the case and it is possible that the agents currently do not have accurate information on their ideal locations, or their preferences in general. To illustrate this, 
imagine a simple scenario where a city wants to build a school on a particular street (which we assume for simplicity is just a line) and aims to build one at a location that 
minimizes the maximum distance any of its residents have to travel to reach the school.. While each of the residents is able to specify which block they would like the school 
to be located at, some of them are unable to precisely pinpoint where on the block they would like it because, for example, they do not currently have access to information 
(like infrastructure data) to better inform themselves, or they are simply unwilling to put in the cognitive effort to refine their preferences further. Therefore, instead of 
giving a specific location $x$, they end up giving an interval $[a, b]$, intending to say ``\textit{I know that I prefer the school to be built between the points $a$ and $b$, 
but I am not exactly sure where I want it.}''  

The above described scenario is precisely the one we are concerned about in this paper. That is, in contrast to the standard setting of the facility location problem, we 
consider the setting in which the agents are uncertain (or partially informed) about their own true locations $x_i$ and the only information they have is that their preferred 
location $x_i \in [a_i, b_i]$, where $b_i - a_i \leq \delta$ for some parameter $\delta$ which models the amount of inaccuracy in the agents' reports. Now, given such 
partially informed agents, our task is to look at the problem from the perspective of a designer whose goal is to design ``robust'' mechanisms under this 
setting. Here by ``robust'' we mean that, for a given performance measure and when considering implementation under an appropriate solution concept, the mechanism should 
provide good guarantees with respect to this measure for all the possible underlying unknown true locations of the agents. The performance measure we use here is 
based on the minimax regret solution criterion, which, informally, for a given objective function, $S$, is an outcome that has the ``best worst case'', or one that induces 
the least amount of regret after one realizes the true input.\footnote{We refer the reader to appendix~\ref{app:whymR} for a discussion on the choice of regret as the 
performance measure.} More formally, if $\mathcal{P} = [0, B]$ denotes the set of all points where a facility can be located and $\mathcal{I} = [a_1, b_1] \times \cdots 
\times [a_n, b_n]$ denotes the set of all the possible vectors that correspond to the true ideal locations of the agents, then the minimax optimal solution, $p_{opt}$, for 
some objective function $S$ (like the sum of costs or the maximum cost) is given by 
\begin{equation*} 
p_{opt} = \argmin_{p \in \mathcal{P}}\underbrace{\max_{I \in \mathcal{I}} \left(S(I, p) - \min_{p' \in \mathcal{P}} S(I, p')\right)}_{\text{$\maxRegret(p, \mathcal{I})$ }},
 \end{equation*}
where $S(I, p)$ denotes the value of $S$ when evaluated with respect to $I \in \mathcal{I}$ and a point $p$. 

Thus, our aim is to design mechanisms that approximately implement the optimal minimax value (i.e., $\maxRegret(p_{opt}, \mathcal{I})$) w.r.t.\ two objective 
functions---average cost and maximum cost---and under two solution concepts---very weak dominance and minimax dominance---that naturally extend to our setting (see 
Section~\ref{sec:prelims} for definitions). In particular, we focus on deterministic and anonymous mechanisms that additively approximate the optimal minimax 
value, and our results are summarized in Table~\ref{tab1}.

Before we move on to the rest of the paper, we anticipate that a reader might have some questions, especially w.r.t.\ our choice of performance measure and our decision to 
use 
additive as opposed to multiplicative approximations. We try to preemptively address these briefly in the section below.

\begin{table*}[!t]
\begin{center}
\resizebox{0.9\textwidth}{!}{
\begin{tabular}{c|c|c|c|c|}
\cline{2-5}
& \multicolumn{2}{|c|}{Average cost}    &\multicolumn{2}{|c|}{Maximum cost} \\ \cline{2-5} 
& Upper bound         & Lower bound    & Upper bound     & Lower bound   \\ \hline
\multicolumn{1}{|c|}{\multirow{4}{*}{\specialcell{very weak\\ dominance}}} & \multirow{4}{*}{$\frac{B}{2}$} & \multirow{4}{*}{\specialcell{$\frac{B}{2}$ \\ 
{[}Thm.~\ref{thm:a-vwd}{]}}} & \multirow{4}{*}{$\frac{B}{2}$} & \multirow{4}{*}{\specialcell{$\frac{B}{2}$ \\ {[}Thm.~\ref{thm:a-vwd-mc}{]} }} \\ 

\multicolumn{1}{|c|}{} & & & & \multicolumn{1}{|c|}{} \\ 
\multicolumn{1}{|c|}{} & & & & \multicolumn{1}{|c|}{} \\
\multicolumn{1}{|c|}{} & & & & \multicolumn{1}{|c|}{} \\
\multicolumn{1}{|c|}{\multirow{4}{*}{\specialcell{minimax\\ dominance}}}   & \multirow{4}{*}{\specialcell{$\frac{3\delta}{4}$ \\ {[}Thm.~\ref{thm:deltmedian}{]} }} &  
\multirow{4}{*}{\specialcell{$\frac{\delta}{2}$ \\ (only for mechanisms \\ with finite range) \\ {[}Thm.~\ref{thm:lb-ac}{]} }}&  
\multirow{4}{*}{\specialcell{$\frac{B}{4} + 
\frac{3\delta}{8}$\\  {[}Thm.~\ref{thm:phantomHalf}{]} }} &  \multirow{4}{*}{\specialcell{$\frac{B}{4}$\\ \citep[Thm.~5]{golomb17}}} \\ 
\multicolumn{1}{|c|}{} & & & & \multicolumn{1}{|c|}{} \\ 
\multicolumn{1}{|c|}{} & & & & \multicolumn{1}{|c|}{} \\
\multicolumn{1}{|c|}{} & & & & \multicolumn{1}{|c|}{} \\\hline
\end{tabular}}
\end{center}
 \caption{Summary of our results. All the bounds are with respect to deterministic mechanisms.}
 \label{tab1}
\end{table*}

\subsection{Some Q \& A.}
\textbf{{Why regret?}} We argue below why this is a good measure by considering some alternatives.
 \begin{enumerate}
  \item Why not bound the ratio of the objective values of \emph{a)} the outcome that is returned by the mechanism and \emph{b)} the optimal outcome for that input? This, for 
instance, is the approach taken by \citet{chiesa12}. In our case this is not a good measure because we can quickly see that this ratio is always unbounded in the worst-case. 

\item Why not find a bound $X$ such that for all $I \in \mathcal{I}, S(I,p) - S(I,p_I ) \leq X$, where $p$ is the outcome of the mechanism and $p_I$ is the optimal solution 
associated with $I$? This, for instance, is the approach  taken by \citet{chiesa14}. Technically, this is essentially what we are doing when using max.\ regret. However, 
using 
regret is more informative because if we make a statement of the form $\maxRegret(p, I) - \maxRegret(p_{opt}, I) \leq Y$, then this conveys two things: \emph{a)} for any $p'$ 
there is at 
least one $I \in \mathcal{I}$ such that $S(I, p') - S(I, p_I ) \geq Z$, where $Z = \maxRegret(p_{opt})$ (i.e. it gives us a sense on what is achievable at all---which in turn 
can be thought of as a natural lower bound) and \emph{b)} the point $p$ chosen by the mechanism is at most $(Y + Z)$-far from the optimal objective value for any $I$. Hence, 
to 
convey these, we employ the notion of regret. We refer the reader to Appendix~\ref{app:whymR} for a slightly more elaborate discussion.
 \end{enumerate}
 
 \noindent\textbf{Why additive approximations?} We use additive as opposed to multiplicative approximations because one can see that when using the latter and  w.r.t.\ the 
max.\ cost objective function both the solution concepts that we consider in this paper---which we believe are natural ones to consider in this setting---do not provide any 
insight into the problem as there are no bounded mechanisms. Again, we refer the reader to Appendix~\ref{app:addvsmul} for a more elaborate discussion.

\subsection{Related work}

There are two broad lines of research that are related to the topic of this paper. The first is, naturally, the extensive literature that focuses on designing mechanisms in 
the context of the facility location problem and the second is the work done in mechanism design which considers settings where the agents do not completely specify their 
preferences. Below, beginning with the latter, we describe some of the papers that are most relevant to our work.

\textbf{Designing mechanisms with incomplete preferences.} A disproportionate amount of the work in mechanism design considers settings where the agents have 
complete information about their preferences. However, as one might expect, the issue of agents not specifying their complete preferences has been considered in the mechanism 
design literature and the papers that are most relevant to this paper are the series of papers by \citet*{chiesa12,chiesa14,chiesa15}, and the 
works of \citet{hyafil07a,hyafil07}. Below we briefly discuss about each of them.

The series of papers by \citet{chiesa12,chiesa14,chiesa15} considers settings where the agents are uncertain about their own types and they look at this model in the context 
of auctions. In particular, in their setting the only information agents have about their valuations is that 
it is contained in a set $K$, where $K$ is any subset of the set of all possible valuations.\footnote{\citet{chiesa14} argue that their model is equivalent to the Knightian 
uncertainty model that has received much attention in decision theory (see related works section in \cite{chiesa14} and the references therein). However, here we do not use 
the term Knightian uncertainty, but instead just say that the agents are partially informed. This is because, the notion we use here, which we believe is the natural one to 
consider in the context of our problem, is less general than the notion of Knightian uncertainty.} 
Under this setting, \citet{chiesa12} look at single-item auctions 
and they provide several results on the fraction of maximum social welfare that can be achieved under implementation in very weakly dominant and undominated strategies; 
subsequently, \citet{chiesa14} study the performance of VCG mechanisms in the context of combinatorial auctions when the agents are uncertain about their own 
types and under undominated and regret-minimizing strategies; and finally, \citet{chiesa15} analyze the Vickrey mechanism in the context of multi-unit auctions and, again, 
when the agents are uncertain about their types, and in this case they essentially show that it achieves near-optimal performance (in terms of social welfare) under 
implementation in undominated strategies. The partial information model that we use in this paper is inspired by this series of papers. In particular, our 
prior-free and absolute worst-case approach under partial information is similar to the one taken by \citet{chiesa12,chiesa14,chiesa15} (although such absolute worst-case 
approaches are not uncommon and have been previously considered in many different settings). However, our work is also different from theirs in that, unlike 
auctions, the problem we consider falls within the domain of \textit{mechanism design without money} and so their results do not carry over to our setting. 


The other set of papers that are most relevant to the broad theme here is the work of \citet{hyafil07a,hyafil07} who considered the problem of 
designing mechanisms that have to make decisions using partial type information. Their focus is again on contexts where payments are allowed and in \citep{hyafil07a} 
they mainly show that a class of mechanisms based on the minimax regret solution criterion achieves approximate efficiency under approximate dominant strategy 
implementation. In \citep{hyafil07} their focus is on automated mechanism design within the same framework. While the overall theme in both their works is similar to ours, 
i.e., to look at issues that arise when mechanisms only have access to partial information, the questions they are concerned with and the model used are different. For 
instance, in the context of the models used, whereas in ours and Chiesa et al.'s models the agents do not know their true types and are 
therefore providing partial inputs, to the best of our understanding, the assumption in the works of \citet{hyafil07a,hyafil07} is that the mechanism has access to partial 
types, but agents are aware of their true type. This subtle change in turn leads to the focus being on solution concepts that are different from ours. 

In addition to the papers mentioned above, note that another way to model uncertain agents is to assume that each of them has a probability distribution 
which 
tells them the probability of a point being their ideal location. For instance, this is the model that is used by \citet{feige11} in the context of task scheduling. However, 
in our model the agents do not have any more information than that they are within some interval, which we emphasize is not 
equivalent to assuming that, for a given agent, every point in the its interval is equally likely to be its true ideal location. 


\textbf{Related work on the facility location problem.} Starting with the work of \citet{moulin80} there has been a flurry of research looking at designing strategyproof 
mechanisms (i.e., mechanisms where it is a (weakly) dominant strategy for an agent to reveal her true preferences) for the facility location problem. These can be 
broadly 
divided into two branches. The first one consists of work, e.g., \citep{moulin80,barbera94,schummer02,masso11,dokow12}, that focuses on 
characterizing the class of strategyproof mechanisms in different settings (see \citep{barbera01} and \citep[Chapter 10]{nisan07} for a survey on some of these results). The 
second 
branch consists of more recent papers which fall under the broad umbrella of \textit{approximate mechanism design without money}, initially advocated by \citet{procaccia13}, 
that focus on looking at how well a strategyproof mechanism can perform under different objective functions \citep{procaccia13,lu10,feldman13,fotakis16,feigenbaum16}. Our 
paper, which tries to understand the performance of mechanisms under different solution concepts and objective functions when the agents are partially informed about 
their own locations, falls under this branch of the literature.

\section{Preliminaries} \label{sec:prelims}

Recall that in the standard (mechanism design) version of the facility location problem there are $n$ agents, denoted by the set $[n] = \{1, \cdots, n\}$, and each agent $i 
\in [n]$ has a true preferred\footnote{We often omit the term ``preferred'' and instead just say that $\ell^*_i$ is agent $i$'s location.} location $\ell^*_i \in [0, B]$, for 
some 
fixed\footnote{Note that here we make the assumption that the domain under consideration is bounded instead of assuming that the agents can be anywhere on the real line. This 
is necessary only because we are focusing on additive approximations instead of the usual multiplicative approximations. (For a slightly more elaborate explanation, see the 
introduction section of the paper by \citet{golomb17}.)}~constant $B \in \mathbb{R}$. A vector $I = (\ell_1, \cdots, \ell_n)$, where $\ell_i \in [0, B]$, is referred to as a 
location profile  and the cost of agent $i$ for a facility located at $p$ is given by $C(\ell^*_i, p) = |p - \ell^*_i|$ (or equivalently, their utility  is $-|p - 
\ell^*_i|$), the distance from the facility to the agent's location.\footnote{This particular utility function that is considered here is equivalent to the notion of 
symmetric single-peaked preferences that is often used in the economics literature (see, e.g., \cite{masso11}).} In general, the task in the facility location problem is to 
design mechanisms---which are, informally, functions 
that map location profiles to a point (or a distribution over points) in $[0, B]$---that (approximately) implement the outcome associated with a particular objective 
function. 

In the version of the problem that we are considering, each agent $i$, although they have a true location $\ell_i^* \in [0, B]$, is currently unaware of their true location 
and 
instead only knows an interval $[a_i, b_i] \subseteq [0, B]$ such that $\ell_i^* \in [a_i, b_i]$. The interval $[a_i, b_i]$, which we denote by $K_i$, is referred 
to as the \textit{candidate locations} of agent $i$, and we use $\mathbb{K}_i$ to denote the set of all possible candidate locations of agent $i$ (succinctly referred to as 
the set of candidate locations). Now, given a profile of the set of candidate locations $(\mathbb{K}_1, \cdots, \mathbb{K}_n)$, we have the following definition.

\begin{definition}[$\delta$-uncertain-facility-location-game]
 For all $n \geq 1$, $B > 0$, and $\delta \in [0, B]$, a profile of the set of candidate locations $(\mathbb{K}_1, \cdots, \mathbb{K}_n)$ is said to induce a  
$\delta$-uncertain-facility-location-game if, for each $i$, $\mathbb{K}_i = \{ [a_i, b_i] \mid b_i - a_i \leq \delta$ and $[a_i, b_i] \subseteq [0, 
B]\}$ (or in words, for 
each agent $i$, their set of candidate locations can only have intervals of length at most $\delta$).
\end{definition}


\textbf{Remark:} We refer to $\delta$ as the inaccuracy parameter. In general, when proving lower bounds we assume that the designer knows this $\delta$ as this only 
makes our results stronger, whereas for positive results we explicitly state what the designer knows about $\delta$. Additionally, note that in the definition above if 
$\delta = 0$, then we have the standard facility location setting where the set of candidate locations associated with every agent is just a set of points in $[0, B]$. For a 
given profile of candidate locations $(K_1, \cdots, K_n)$, we say that ``the reports are exact'' when, for each agent $i$, $K_i$ is a single point and not an interval. 

\subsection{Mechanisms, solution concepts, and implementation} \label{sec:solutionConcepts}
A (deterministic) mechanism $\mathcal{M} = (X, F)$ in our setting consists of an action space $X = (X_1, \cdots, X_n)$, where $X_i$ is the action space associated with agent 
$i$, and an outcome function $F$ which maps a profile of actions to an outcome in $[0, B]$ (i.e., $F\colon X_1\times\cdots\times X_n \to [0, B]$). A mechanism is 
said to be \textit{direct} if, for all $i$, $X_i = \mathbb{K}_i$, where $\mathbb{K}_i$ is the set of all possible candidate locations of agent $i$. For every 
agent $i$, a strategy is a function $s_i\colon \mathbb{K}_i \to X_i$, and we use $\Sigma_i$ and $\Delta(\Sigma_i)$ to respectively denote the set of all pure and mixed 
strategies of agent $i$.  

Since the outcome of a mechanism needs to be achieved in equilibrium, it remains to be defined what equilibrium solution concepts we consider in this paper. Below we define, 
in the order of their relative strengths, the two solution concepts that we use here. We note that the first (very weak dominance) was also used by \citet{chiesa12}. 

\begin{definition}[{very weak dominance}]
 In a mechanism $\mathcal{M} = (X, F)$, an agent $i$ with candidate locations $K_i$ has a very weakly dominant strategy  
$s_i \in \Sigma_i$ if $\forall s'_{i} \in \Sigma_i, \forall \ell_i \in K_i$, and $\forall s_{-i} \in \Sigma_{-i}$,
 \begin{equation*}
  C\left(\ell_i, F(s_i(K_i), s_{-i}(K_{-i}))\right) \leq C\left(\ell_i, 
F(s'_i(K_i), s_{-i}(K_{-i}))\right).
 \end{equation*}
\end{definition}
In words, the above definition implies that for agent $i$ with candidate locations $K_i$, it is always best for $i$ to play the strategy $s_i$, irrespective of the actions 
of the other players and irrespective of which of the points in $K_i$ is her true location. 
 
 \begin{definition}[{minimax dominance}]
  In a mechanism $\mathcal{M} = (X, F)$, an agent $i$ with candidate locations $K_i$ has a minimax dominant strategy $s_i  \in 
\Sigma_i$ if $\forall s'_{i} \in \Sigma_i$ and $\forall s_{-i} \in \Sigma_{-i}$, 
\begin{multline*}
\max_{\ell_i \in K_i} \max_{\sigma_{i} 
\in \Delta(\Sigma_i)} C(\ell_i, F(s_i(K_i), s_{-i}(K_{-i})) - C(\ell_i, F(\sigma_i(K_i), 
s_{-i}(K_{-i})))\\ 
\leq  \max_{\ell_i \in K_i} \max_{\sigma_{i} \in \Delta(\Sigma_i)} 
C(\ell_i, F(s'_i(K_i), s_{-i}(K_{-i})) - C(\ell_i, F(\sigma_i(K_i), s_{-i}(K_{-i})).
 \end{multline*}
 \end{definition}
Before we explain what the definition above implies, let $p = F(s_i(K_i), s_{-i}(K_{-i}))$ be the outcome of the mechanism when agent $i$ plays strategy $s_i$ and all the 
others play some $s_{-i}$. Now, let us consider the term 
\begin{equation} \label{eqn:maxRi}
 \maxRegret_i(p) = {\max_{\ell_i \in K_i} \max_{\sigma_{i} \in \Delta(\Sigma_i)} C(\ell_i, p) - C(\ell_i, F(\sigma_i(K_i), s_{-i}(K_{-i})))},
\end{equation}
which calculates agent $i$'s maximum regret (i.e., the absolute worst case loss agent $i$ will experience if and when she realizes her true 
location from her candidate locations) for playing $s_i$ and hence getting the output $p$. Then, what the above definition implies is that for 
a regret minimizing agent $i$ with candidate locations $K_i$, it is always best for $i$ to play $s_i$, irrespective of the actions of the other players, as any other 
strategy $s_i'$ results in an outcome $p'$ with respect to which agent $i$ experiences at least as much maximum regret as she experiences with $p$. 

\textbf{Remark:} Note that both the solution concepts defined above can be seen as natural extensions of the classical (i.e., the usual mechanism design setting where the 
agents know their types exactly) weak dominance notion to our setting. That is, for all $i \in [n]$, if $K_i$ is a single point, then both of 
them collapse to the classical weak dominance notion.

As stated in the introduction, given a profile of candidate locations $(K_1, \cdots, K_n)$, we want the mechanism to ``perform well'' against all the possible underlying true 
locations of the agents, i.e., with respect to all the location profiles $I = (\ell_1, \cdots, \ell_n)$ where $\ell_i \in K_i$. Hence, for a given objective function $S$, 
we aim to design mechanisms that achieve a good approximation of the optimal minimax value, which, for $\mathcal{I} = K_1\times\cdots\times K_n$, is denoted by 
$\OMV_S(\mathcal{I})$ and is defined as
\begin{equation}
 \OMV_S(\mathcal{I}) = \maxRegret(p_{opt}, \mathcal{I}),
\end{equation}
where for a point $p \in [0, B]$, if $S(I, p)$ denotes the value of the function $S$ when evaluated with respect to the vector $I$ and a point $p$, then the maximum regret 
associated with $p$ for the instance $\mathcal{I}$ is defined as
\begin{equation} \label{eqn:maxRp}
 \maxRegret(p, \mathcal{I}) =  \max_{I \in \mathcal{I}} \left(S(I, p) - \min_{p' \in {[0,B]}} S(I, p')\right),
\end{equation}
and 
\begin{equation} \label{eq:mr}
 p_{opt} = \argmin_{p \in [0,B]}{\maxRegret(p, \mathcal{I})}. 
\end{equation}
Throughout, we refer to the point $p_{opt}$ as the optimal minimax solution for the instance $\mathcal{I}$. 

Finally, now that we have our performance measure, we define implementation in very weakly dominant and minimax dominant strategies.  

\begin{definition}[Implementation in very weakly dominant (minimax dominant) strategies] \label{imp-vwd-md}
 For a $\delta$-uncertain-facility-location-game, we say that a mechanism $\mathcal{M} = (X, F)$ implements $\alpha$-$\OMV_S$, for some $\alpha \geq 0$ and some 
objective function $S$, in very weakly dominant (minimax dominant) strategies, if for some $s = (s_1, \cdots, s_n)$, where $s_i$ is a very weakly dominant (minimax dominant) 
strategy for agent $i$ with candidate locations $K_i$,
\begin{equation*}
 \maxRegret(F(s_1(K_1), \cdots s_n(K_n)), \mathcal{I}) - \OMV_S(\mathcal{I}) \leq \alpha.
\end{equation*}
\end{definition}

\section{Implementing the average cost objective} \label{sec:avgCost}

In this section we consider the objective of locating a facility so as to minimize the average cost (sometimes succinctly referred to as avgCost and written as AC). While the 
standard objective in the facility location setting is to minimize the sum of costs, here, like in work of \citet{golomb17}, we use 
average cost because since we are approximating additively, it is easy to see that in many cases a deviation 
from the optimal solution results in a factor of order $n$ coming up in the 
approximation bound. Hence, to avoid this, and to make comparisons with our second objective function, maximum cost, easier we use average cost. 

In the standard setting where the agents know their true location, the average cost of locating a facility at a point $p$ is defined as $\frac{1}{n}\sum_{i \in [n]} C(x_i, 
p)$, where $x_i$ is the location of agent $i$. Designing even optimal strategyproof/truthful mechanisms in this case is easy since one can quickly see that the optimal 
location for the facility is the median of $x_1, \cdots, x_n$ and returning the same is strategyproof. Note that, for some $k \geq 0$, when $n = 2k + 1$, the median is 
unique and is the $(k+1)$-th largest element. However, when $n = 2k$, the ``median'' can be defined as any point between (and including) the $(n/2)$-th and 
$((n/2)+1)$-th largest numbers. As a matter of convention, here we consider the $(n/2 + 1)$-th element to be the median. Hence, throughout, we always write that the median 
element is the $(k+1)$-th element, where $k = 
\frac{\lfloor n \rfloor}{2}$. 

In contrast to the standard setting, for some $\delta \in (0, B]$ and a corresponding $\delta$-uncertain-facility-location-game, even computing what the minimax optimal 
solution for the average cost objective (see Equation~\ref{eq:mr}) is is non-trivial, let alone seeing if it can be implemented with any of the solution concepts discussed in 
Section~\ref{sec:solutionConcepts}. Therefore, we start by stating some properties about the minimax optimal solution that will be useful when designing mechanisms. A complete 
discussion on how to find the minimax optimal solution when using the average cost objective, as well as the proofs for the lemmas 
stated in the next section, are in Appendix~\ref{app:mos-ac}. 

\subsection{\texorpdfstring{Properties of the minimax optimal solution for avgCost}{}} \label{sec:opt-AC}
Given the candidate locations $K_i = [a_i, b_i]$ for all $i$, where, for some $\delta \in [0, B]$, $b_i - a_i \leq \delta$, consider the left endpoints associated with all 
the agents, i.e., the set $\{a_i\}_{i \in [n]}$. We denote the sorted order of these points as $L_1, \cdots, L_{n}$ (throughout, by sorted order we mean sorted in 
non-decreasing order). Similarly, we denote the sorted order of the right endpoints, i.e., the points in the set $\{b_i\}_{i \in [n]}$, as $R_1, \cdots, R_{n}$. Next, we 
state the following lemma which gives a succinct formula for the maximum regret associated with a point $p$ (i.e., $\maxRegret(p, \mathcal{I})$, where $\mathcal{I} = [a_1, 
b_1] \times \cdots \times [a_n, b_n]$; see Equation~\ref{eqn:maxRp}). As stated above, all the proofs for lemmas in the section appear in Appendix~\ref{app:mos-ac}. 

\begin{restatable}{lemma}{maxRlemma}\label{clm:maxregret}
Given a point $p$, the maximum regret associated with $p$ for the average cost objective can be written as $\max(obj_1^{AC}(p),\allowbreak obj_2^{AC}(p))$, where
\begin{itemize}
 \item $obj_1^{AC}(p) = \frac{1}{n}\left(2\sum_{i = j}^k (R_i - p) + (n-2k)(R_{k+1} - p)\right)$, where $j$ is the smallest index such that $R_{j} > p$ and $j\leq k$ 
 \item $obj_2^{AC}(p) = \frac{1}{n}\left(2\sum_{i = k+2}^h (p - L_i) + (n-2k) (p - L_{k+1})\right)$, where $h$ is the largest index such that $L_{h} < p$ and $h\geq k+2$.
\end{itemize}
\end{restatable}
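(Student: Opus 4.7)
My approach is to view $\maxRegret(p, \mathcal{I})$ as an adversarial optimization over $\ell_i \in [a_i, b_i]$, using the fact that $\min_{p'\in[0,B]} S(I, p')$ is attained at the median $m(\vec{\ell})$. Writing the regret as $\frac{1}{n}\bigl(\sum_i |p - \ell_i| - \sum_i |m(\vec{\ell}) - \ell_i|\bigr)$ reduces the claim to maximizing $\sum_i |p - \ell_i| - \sum_i |m(\vec{\ell}) - \ell_i|$ over the box $\prod_i[a_i, b_i]$. I plan to split this into two sub-problems according to whether the induced median satisfies $m(\vec{\ell}) \geq p$ or $m(\vec{\ell}) \leq p$, solve each by a per-agent case analysis, and identify the two resulting values with $n \cdot obj_1^{AC}(p)$ and $n \cdot obj_2^{AC}(p)$.

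For the sub-problem $m(\vec{\ell}) \geq p$, I would parameterize by the target median value $m$. For each agent the contribution $|p - \ell_i| - |m - \ell_i|$ is piecewise linear in $\ell_i$ with breakpoints at $p$ and $m$; maximizing over $\ell_i \in [a_i, b_i]$ (subject to whether the agent lands in the lower or upper half of the sorted profile) gives a contribution that depends only on the location of $b_i$: an agent with $b_i \geq m$ can contribute $m - p$ by choosing $\ell_i = m$; an agent with $p < b_i < m$ contributes $2b_i - p - m$ via $\ell_i = b_i$; and an agent with $b_i \leq p$ contributes $p - m$. Summing these yields a function of $m$ whose slope is $n - 2\,|\{i : b_i < m\}|$; this is nonnegative iff at most $k$ of the $b_i$'s lie strictly below $m$, i.e.\ iff $m \leq R_{k+1}$, so the optimum on $m \geq p$ is attained at $m = R_{k+1}$. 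Plugging $m = R_{k+1}$ in and grouping agents by the smallest index $j$ with $R_j > p$ collapses the sum into $2\sum_{i=j}^{k}(R_i - p) + (n - 2k)(R_{k+1} - p)$, exactly $n \cdot obj_1^{AC}(p)$. The mirror-image argument (swap the roles of the $a_i$'s and $b_i$'s, and of the $L_i$'s and $R_i$'s) handles the $m(\vec{\ell}) \leq p$ case and produces $n \cdot obj_2^{AC}(p)$.

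I expect the main obstacle to be feasibility: the per-agent "best responses" must actually be simultaneously realizable by a single profile whose median equals $R_{k+1}$ (respectively $L_{k+1}$). I would handle this by exhibiting an explicit witness — for the right-median case, set $\ell_i = b_i$ for the $k$ agents with the smallest right endpoints and $\ell_i = R_{k+1}$ for the remaining $n - k$ agents — and verifying that the sorted $(k+1)$-th entry of the resulting profile is $R_{k+1}$, matching the paper's median convention. A secondary subtlety concerns the edge case $R_{k+1} < p$, where the right-median sub-problem is infeasible, its supremum is $\leq 0$, and $obj_2^{AC}(p)$ dominates the final $\max(\cdot, \cdot)$; ties among the $R_i$'s are handled by taking $j$ to be the smallest index with $R_j > p$ strictly.
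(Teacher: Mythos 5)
Your core idea is sound and genuinely different from the paper's route. The paper fixes a profile, sorts its coordinates, derives a closed form for its regret in terms of the sorted entries and the conventional median $\ell_{k+1}$, and then pushes each sorted coordinate to its extreme via the order-statistics fact $\ell_i \in [L_i,R_i]$ (Claim~\ref{clm:lInterval}). You instead write $n\cdot\maxRegret(p,\mathcal{I}) = \max_{\vec{\ell}}\max_{m}\sum_i\bigl(|p-\ell_i|-|m-\ell_i|\bigr)$, exchange the two maxima so the inner problem decouples agent by agent, and optimize the benchmark location $m$ by a slope count. This buys you something real: once the maxima are exchanged, equality with $\maxRegret$ is automatic (because $\min_m\sum_i|m-\ell_i|$ is attained at the median), so you need neither Claim~\ref{clm:lInterval} nor any simultaneous-realizability argument --- the ``feasibility'' obstacle you flag is a non-issue. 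Incidentally, the explicit witness you propose is broken as stated: an agent in the top group may have $a_i > R_{k+1}$, so $\ell_i=R_{k+1}$ need not lie in $[a_i,b_i]$. The profile $\ell_i=b_i$ for every $i$ is the witness that works (it attains every per-agent maximum at $m=R_{k+1}$ and has median $R_{k+1}$), but again none is needed. Your $m\ge p$ branch, including the landing point $m=R_{k+1}$ and the collapse to $n\cdot obj_1^{AC}(p)$, checks out, as does your handling of the $R_{k+1}<p$ edge case.

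The gap is the final sentence of your main argument: the $m\le p$ branch is \emph{not} the mirror image of the $m\ge p$ branch when $n$ is even. Reflection sends the $(k+1)$-th smallest right endpoint to the $(n-k)$-th smallest left endpoint, and $n-k=k+1$ only for odd $n$; for $n=2k$ your $g_2$ has slope $n-2\,|\{i: a_i\le m\}|$, its maximum over $m \le p$ is attained on the plateau $[L_k,L_{k+1}]$, and its value there is $\tfrac1n\bigl(2\sum_{i=k+2}^{h}(p-L_i)+(2k+2-n)(p-L_{k+1})\bigr)$. The coefficient $(2k+2-n)$ agrees with the stated $(n-2k)$ only for odd $n$. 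So your method, carried out carefully, proves the lemma for odd $n$ but yields a different --- and, as far as I can check, correct --- formula for even $n$: for $n=2$, both agents at $[0,0]$ and $p=1$, the true maximum regret is $1$ while $\max(obj_1^{AC}(p),obj_2^{AC}(p))=0$. The paper's own proof shares this blind spot, since its Case~2 is dispatched with ``doing a similar analysis as in Case~1,'' which hides the asymmetry introduced by taking the $(k+1)$-th element as the median. You should either restrict to odd $n$ or record the corrected even-$n$ coefficient rather than assert that the mirror argument reproduces $obj_2^{AC}(p)$ as printed.
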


Our next lemma states that the minimax optimal solution, $p_{opt}$, associated with the avgCost objective function is always in the interval $[L_{k+1}, R_{k+1}]$. 

\begin{restatable}{lemma}{optInterval}\label{clm:optInterval}
 If $p_{opt}$ is the minimax optimal solution associated with the avgCost objective function, then $p_{opt} \in [L_{k+1}, R_{k+1}]$.
\end{restatable}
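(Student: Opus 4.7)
The plan is to exploit the decomposition $\maxRegret(p, \mathcal{I}) = \max(obj_1^{AC}(p), obj_2^{AC}(p))$ given by Lemma~\ref{clm:maxregret} and to show that each component is monotone in $p$ and vanishes on one side of the interval $[L_{k+1}, R_{k+1}]$. Concretely, I would first verify that $obj_1^{AC}$ is non-increasing on $[0,B]$: as $p$ grows, each surviving summand $(R_i - p)$ shrinks linearly, and whenever $p$ crosses some $R_j$ with $j \leq k$ the corresponding term continuously drops out (it equals zero exactly at the crossing). Symmetrically, $obj_2^{AC}$ is non-decreasing in $p$. Moreover, once $p \geq R_{k+1}$ no index $j \leq k$ satisfies $R_j > p$ and the trailing $(R_{k+1} - p)$ term is non-positive, so $obj_1^{AC}(p) = 0$ there; by the mirror argument $obj_2^{AC}(p) = 0$ for all $p \leq L_{k+1}$.

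With these monotonicity-and-vanishing facts in hand, the conclusion is a short comparison. Take any $p_0 < L_{k+1}$ and set $p' = L_{k+1}$. Since $obj_2^{AC}(L_{k+1}) = 0$ we have $\maxRegret(L_{k+1}, \mathcal{I}) = obj_1^{AC}(L_{k+1})$, and monotonicity of $obj_1^{AC}$ gives
\begin{equation*}
\maxRegret(L_{k+1}, \mathcal{I}) \;=\; obj_1^{AC}(L_{k+1}) \;\leq\; obj_1^{AC}(p_0) \;\leq\; \maxRegret(p_0, \mathcal{I}).
\end{equation*}
Thus $p_0$ is no better than $L_{k+1}$, and a mirror-image argument shows any $p_0 > R_{k+1}$ is dominated by $R_{k+1}$. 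Hence the minimizer $p_{opt}$ must lie in $[L_{k+1}, R_{k+1}]$.

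A small preliminary I would record is that $[L_{k+1}, R_{k+1}]$ is non-empty, i.e.\ $L_{k+1} \leq R_{k+1}$: otherwise at least $n-k$ agents would have $a_i \geq L_{k+1}$ while at least $k+1$ agents would have $b_i \leq R_{k+1} < L_{k+1}$, and since $(n-k)+(k+1) > n$ some agent would fall into both sets, violating $a_i \leq b_i$. The real technical care is in handling the piecewise definitions of $j$ and $h$ in Lemma~\ref{clm:maxregret} as $p$ sweeps across breakpoints in $\{R_i\}$ and $\{L_i\}$; I expect this bookkeeping (checking monotonicity on each linear piece and also across the transitions, and confirming that the $(n-2k)$ terms for even $n$ behave correctly) to be the main obstacle, though none of it looks deep.
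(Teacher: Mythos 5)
Your route is genuinely different from the paper's, and arguably cleaner. The paper argues by contradiction: assuming $p_{opt} < L_{k+1}$, it evaluates the regret of $p_{opt}$ on the specific adversarial input $(R_1,\cdots,R_n)$, compares against the midpoint $M_{k+1}=\frac{L_{k+1}+R_{k+1}}{2}$, and then invokes the equalization property $obj_1^{AC}(p_{opt})=obj_2^{AC}(p_{opt})$ (Claim~\ref{clm:objeq}) together with $obj_2^{AC}(p_{opt})=0$ to force $obj_1^{AC}(p_{opt})=0$, contradicting the computed positive lower bound. You instead establish that $obj_1^{AC}$ is continuous, piecewise linear and non-increasing, that $obj_2^{AC}$ is non-decreasing, and that each vanishes on one side of $[L_{k+1},R_{k+1}]$, so the upper envelope $\max(obj_1^{AC},obj_2^{AC})$ from Lemma~\ref{clm:maxregret} is minimized inside the interval. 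This bypasses Claim~\ref{clm:objeq} entirely, makes the ``decreasing branch meets increasing branch'' structure explicit, and your non-emptiness check $L_{k+1}\leq R_{k+1}$ is correct (the paper relegates it to Claim~\ref{clm:LRdelta}). The monotonicity and vanishing facts you list all check out, including continuity at the breakpoints (the dropped term is zero exactly at each crossing) and the convention that $obj_1^{AC}(p)=0$ once $\mathcal{I}_1(p)=\emptyset$.

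One step needs tightening: from ``$p_0$ is no better than $L_{k+1}$'' you conclude that \emph{the} minimizer must lie in $[L_{k+1},R_{k+1}]$, but weak domination only shows that \emph{some} minimizer lies there, whereas the lemma asserts that every minimax optimal point does. For odd $n$ the repair is one line: the term $(n-2k)(R_{k+1}-p)$ contributes slope $-\frac{n-2k}{n}=-\frac{1}{n}<0$ to $obj_1^{AC}$ everywhere on $[0,R_{k+1}]$, so $obj_1^{AC}$ is \emph{strictly} decreasing there and any $p_0<L_{k+1}$ is strictly worse than $L_{k+1}$ (symmetrically on the right). For even $n$ that term vanishes and ties genuinely occur: with $n=2$ and reports $[0,0]$ and $[1,1]$ one has $[L_{k+1},R_{k+1}]=\{1\}$ yet every point of $[0,1]$ has zero maximum regret. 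So for even $n$ the statement as literally written requires a tie-breaking convention; note, though, that this defect is inherited from the paper --- its own strict inequality in Case 1 (``$>$ \ldots\ as $p_{opt}<L_{k+1}$'') also degenerates when the sum is empty and $n=2k$ --- rather than introduced by your argument.
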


Equipped with these properties, we are now ready to talk about implementation using the solution concepts defined in Section~\ref{sec:solutionConcepts}.

\subsection{Implementation in very weakly dominant strategies}
As discussed in Section~\ref{sec:solutionConcepts}, the strongest solution concept that we consider is very weak dominance, where for an agent $i$, with candidate locations 
$K_i$, strategy $s_i$ is very weakly dominant if it is always best for $i$ to play $s_i$, irrespective of the actions of the other players and irrespective of which of the 
points in $K_i$ is her true location. While it is indeed a natural solution concept which extends 
the classical notion of weak dominance, we will see below in, Theorem~\ref{thm:a-vwd}, that it is too strong as no deterministic mechanism can achieve a better 
approximation bound than $\frac{B}{2}$. This in turn implies that, among deterministic mechanisms, the naive mechanism which always, irrespective of the reports of the 
agents, outputs the point $\frac{B}{2}$ is the best one can do.

\begin{theorem} \label{thm:a-vwd}
Given a $\delta \in (0, B]$, let $\mathcal{M}= (X, F)$ be a deterministic mechanism that implements $\alpha$-$\OMV_{AC}$ in very weakly dominant strategies for a 
$\delta$-uncertain-facility-location-game. Then, $\alpha \geq \frac{B}{2}$.
\end{theorem}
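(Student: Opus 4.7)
The plan is to first show that any mechanism implementing $\alpha$-$\OMV_{AC}$ in very weakly dominant strategies must, in equilibrium, produce a single constant output $p^* \in [0,B]$ regardless of the reported profile, and then derive $\alpha \geq B/2$ from the two extremal all-singleton input profiles $\{0\}^n$ and $\{B\}^n$.

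For the constancy claim, I fix a profile $s = (s_1, \dots, s_n)$ of very weakly dominant strategies, fix an agent $i$ and the reports $K_{-i}$ of the others, and compare two possible reports $K_i$ and $K_i'$ for agent $i$ (each an interval of length at most $\delta$). Writing $p = F(s_i(K_i), s_{-i}(K_{-i}))$ and $p' = F(s_i(K_i'), s_{-i}(K_{-i}))$, very weak dominance of $s_i$ at $K_i$ applied with the alternative action $s_i(K_i')$ gives $|p - \ell| \leq |p' - \ell|$ for every $\ell \in K_i$; the mirror application at $K_i'$ gives the reverse inequality on $K_i'$. Therefore $|p - \ell| = |p' - \ell|$ on $K_i \cap K_i'$, and whenever this intersection contains more than one point this forces $p = p'$. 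Any two non-degenerate intervals of length at most $\delta$ in $[0, B]$ are connected by a chain of pairwise-overlapping such intervals, so the output is independent of the non-degenerate report $K_i$; iterating the argument one coordinate at a time gives a single constant $p^* \in [0, B]$ with $F(s_1(K_1),\dots,s_n(K_n)) = p^*$ on every non-degenerate input profile.

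To extend the constancy to the degenerate boundary reports, take $K_i = \{0\}$ and $K_i' = [0, \eta]$ for any $\eta \in (0, \delta]$. Very weak dominance at $\{0\}$ gives $|p| \leq |p'|$, while very weak dominance at $[0, \eta]$ evaluated at $\ell = 0$ gives $|p'| \leq |p|$; since both outputs lie in $[0, B]$ and are therefore non-negative, this forces $p = p' = p^*$. The symmetric comparison handles $K_i = \{B\}$. Switching coordinates from a non-degenerate interval to the corresponding boundary singleton one at a time and applying this boundary argument at each step, I conclude that the equilibrium output on the fully degenerate profiles $\mathcal{I}_0 := \{0\}^n$ and $\mathcal{I}_B := \{B\}^n$ is exactly $p^*$. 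On $\mathcal{I}_0$ the only underlying location profile is $(0, \dots, 0)$, so $\OMV_{AC}(\mathcal{I}_0) = 0$ and $\maxRegret(p^*, \mathcal{I}_0) = p^*$; similarly $\OMV_{AC}(\mathcal{I}_B) = 0$ and $\maxRegret(p^*, \mathcal{I}_B) = B - p^*$. Hence $\alpha \geq \max(p^*,\, B - p^*) \geq B/2$.

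The principal technical hurdle is the constancy step: one must propagate the one-coordinate comparison through a chain of overlapping intervals and then stitch across coordinates under a single fixed strategy profile $s$. The boundary extension is also a touch subtle, since for an interior singleton $K_i = \{x\}$ with $x \in (0, B)$ the same two-sided comparison only pins the output down to the set $\{p^*,\, 2x - p^*\}$; the reflected alternative can be ruled out only when $x \in \{0, B\}$, which is precisely why the final construction restricts to the extremal singletons $\{0\}^n$ and $\{B\}^n$.
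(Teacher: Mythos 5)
Your proof is correct and follows essentially the same route as the paper's: the two-sided application of very weak dominance forces equal outputs on any two reports whose intersection contains more than one point, a chain of overlapping intervals then makes the equilibrium output a single constant $p^*$, and evaluating at the two ends of $[0,B]$ yields the bound. The one place you go slightly further---extending constancy to the degenerate boundary reports $\{0\}$ and $\{B\}$, where the reflected alternative $2x - p^*$ is excluded by the domain---lets you read off $\alpha \geq \max(p^*,\, B - p^*) \geq B/2$ directly from profiles with $\OMV_{AC} = 0$, whereas the paper stops at short intervals near the endpoints and closes with an $\epsilon$--$\gamma$ contradiction.
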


\begin{proof}
Let us assume for the sake of contradiction that $\alpha = \frac{B}{2} - \gamma$ for some $\gamma > 0$. 
 First, note that here we can restrict ourselves to direct mechanisms since Chiesa et al.\ showed that the revelation principle holds with respect to this 
solution concept \citep[Lemma A.2]{chiesa14}. So, now, let us consider a scenario where the profile of true candidate locations of the agents are $([a_1, 
b_1],\allowbreak\cdots, [a_n, b_n])$ and let 
$F([a_1, b_1], \allowbreak \cdots, [a_n, b_n]) = p$. Since reporting the true candidate locations is a very weakly dominant strategy in $\mathcal{M}$, this implies that for 
an agent $i$ and for all $\ell \in [a_i, b_i]$,
\begin{equation} \label{eqn:a-vwd1}
 |\ell - p| \leq |\ell - p'|,  
\end{equation} 
where for some $[a'_i, b'_i] \neq [a_i, b_i]$, $p' = F([a_1, b_1], \cdots, [a'_i, b'_i], \allowbreak \cdots, [a_n, b_n])$.

Next, consider the profile of true candidate locations $([a_1, b_1], \allowbreak \cdots, [a'_i, b'_i], \allowbreak \cdots, [a_n, b_n])$. Then, again, using the fact that 
reporting the truth is a very weakly dominant strategy, we have that for agent $i$ and for all $\ell' \in [a'_i, b'_i]$,
\begin{equation} \label{eqn:a-vwd2}
 |\ell' - p'| \leq |\ell' - p|.  
\end{equation}

Equations~\ref{eqn:a-vwd1} and~\ref{eqn:a-vwd2} together imply that for a $k \in [a_i, b_i] \cap [a'_i, b'_i]$, 
\begin{equation} \label{eqn:a-vwd3}
|k - p| =|k - p'|. 
\end{equation} So, if $Q = [a_i, b_i] \cap [a'_i, b'_i]$ and $|Q| > 1$, then Equation~\ref{eqn:a-vwd3} implies that $p = p'$. 

Now,  let us consider $F([a_1, b_1], \cdots, \allowbreak [a_1, b_1])$, where $a_1 = 0$ and $b_1=\epsilon$, and let $F([a_1, b_1], \allowbreak  
\cdots,\allowbreak [a_1, 
b_1]) = p$. By repeatedly using the observation made above, we have that for $\delta \in (0, B]$, {$\epsilon \in \left(0,\min\{\delta, \gamma\}\right)$}, $\epsilon_1 \in 
(0, \epsilon)$, and $b_i = \epsilon + i(\delta - \epsilon_1)$,
\begin{multline} \label{eqn:a-vwd4}
 p = F([a_1, b_1], \cdots, [a_1, b_1]) = F([b_1-\epsilon_1, b_2], \allowbreak [a_1, b_1], \allowbreak \cdots, [a_1, b_1]) \\= F([b_2-\epsilon_1, b_3], [a_1, b_1], 
\cdots, [a_1, b_1]) = \cdots = F([B-\epsilon_1, B], \allowbreak [a_1, b_1], \allowbreak \cdots, [a_1, b_1]) = \cdots\\ = F([B-\epsilon_1, B], \allowbreak \cdots, 
[B-\epsilon_1, 
B]). 
\end{multline} 

Next, it is easy to see that the minimax optimal solution associated with the profile $([a_1, b_1], \allowbreak\cdots, [a_1, b_1])$ is 
$\frac{a_1+b_1}{2} = \frac{\epsilon}{2}$, whereas for the profile $([B-\epsilon_1, B], \cdots, [B-\epsilon_1, B])$ it is $B - \frac{\epsilon_1}{2}$. Also, from 
Equation~\ref{eqn:a-vwd4} we know that $\mathcal{M}$ outputs the same point $p$ for both these profiles. So, if we assume without loss of generality that $p \leq \frac{B + 
\epsilon/2 - \epsilon_{1}/2}{2}$, this implies that for $\mathcal{I} = [B-\epsilon_1, B] \times \cdots \times [B-\epsilon_1, B]$, 
\begin{align*}
 \alpha \geq \maxRegret(p, \mathcal{I})-&\OMV_{AC}(\mathcal{I})\\ &\geq {\regret(p, ( B, \cdots, B))}-{\OMV_{AC}(\mathcal{I})} \\ 
 &\geq {(B/2 - \epsilon/4 + \epsilon_1/4)}-{\epsilon_1/2} \\
 &\geq B/2 -\epsilon\\
 &>  B/2 -\gamma.
\end{align*}
This in turn contradicts our assumption that $\alpha = \frac{B}{2} - \gamma$.
\end{proof}

Although one could argue that this result is somewhat expected given how Chiesa et al.\ also observed similar poor performance for implementation with very weakly dominant
strategies in the context of the single-item auctions \citep[Theorem 1]{chiesa12}, we believe that it is still interesting because not only do we observe a similar 
result in a setting that is considerably different from theirs, but this observation also reinforces their view that one would likely have to look beyond very weakly dominant 
strategies in settings like ours. This brings us to our next section, where we consider an alternative, albeit weaker, but natural, extension to the classical notion 
of weakly dominant strategies. 

\subsection{Implementation in minimax dominant strategies}
In this section we move our focus to implementation in minimax dominant strategies and explore whether by using this weaker solution concept one can obtain a better 
approximation bound than the one obtained in the previous section. To this end, we first present a general result that applies to all mechanisms in our setting that are 
anonymous and minimax dominant, in particular showing that any such mechanism cannot be onto. Following this, we look at non-onto mechanisms and here we provide a mechanism 
that achieves a much better approximation bound than the one we observed when considering implementation in very weak dominant strategies. 

\textbf{Remark:} Note that in this section we focus only on direct mechanisms. This is without loss of generality because, like in the case with 
very weakly dominant strategies, it turns out that the revelation principle holds in our setting for minimax dominant strategies. A proof of the same can be found 
in Appendix~\ref{app:rev-pr}.

\begin{theorem} \label{prop-onto}
 Given a $\delta \in (0, B]$, let $\mathcal{M} = (X, F)$ be a deterministic mechanism that is anonymous and minimax dominant for a $\delta$-uncertain-facility-location-game. 
Then, $\mathcal{M}$ cannot be onto.    
\end{theorem}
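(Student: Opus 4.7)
My plan is to reduce the question to point-report profiles, invoke Moulin's classical characterization of anonymous strategyproof onto mechanisms on a line to pin down $F$'s form there, and then exhibit an interval profile on which the demands of anonymity and minimax dominance become irreconcilable.

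Assume for contradiction that $F$ is anonymous, minimax dominant, and onto. First I would show that the restriction of $F$ to all-point profiles is itself onto $[0,B]$. For any target $p$ in the image of $F$ and any preimage $(K_1, \ldots, K_n)$ with $F = p$, minimax dominance applied to a point deviation by agent $i$ forces $F(\{c\}, K_{-i}) = \pi(c, R(K_{-i}))$, the nearest-point projection of $c$ onto the reachable set $R(K_{-i}) = \{F(K, K_{-i}) : K \in \mathbb{K}_i\}$. Since $p \in R(K_{-i})$ and varying $c$ sweeps out all of $R(K_{-i})$, each $K_i$ can be replaced one at a time by a suitable point $\{c_i^*\}$ while keeping $F = p$. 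Hence $g(c_1, \ldots, c_n) := F(\{c_1\}, \ldots, \{c_n\})$ is anonymous and onto, and since minimax dominance on singletons collapses to classical weak dominance, $g$ is strategyproof. Moulin's theorem then forces $g$ to be a generalized median: there exist phantoms $\alpha_1 \leq \cdots \leq \alpha_{n-1}$ in $[0,B]$ with $g(c_1, \ldots, c_n) = \text{med}(c_1, \ldots, c_n, \alpha_1, \ldots, \alpha_{n-1})$.

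Next I would analyze a profile with a single interval report. Fix point reports $\{c_2\}, \ldots, \{c_n\}$ and let $\beta_1 \leq \cdots \leq \beta_{2n-2}$ be the sorted multiset of $c_2, \ldots, c_n, \alpha_1, \ldots, \alpha_{n-1}$. The phantom-median formula gives $R(\{c_2\}, \ldots, \{c_n\}) = [\beta_{n-1}, \beta_n]$ as agent 1's point report varies. Choose the $c_j$'s so that $\beta_{n-1} < \beta_n$, and take $K_1 = [a,b]$ with $b - a \leq \delta$ straddling the left endpoint: $a < \beta_{n-1} < b \leq \beta_n$. A direct minimax calculation of the regret-minimizing point in $[\beta_{n-1}, \beta_n]$ for $K_1$ yields
\[ F(K_1, \{c_2\}, \ldots, \{c_n\}) \;=\; \frac{\beta_{n-1} + b}{2}. \]

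Finally, by anonymity we have $F(K_1, \{c_2\}, \ldots, \{c_n\}) = F(\{c_2\}, K_1, \{c_3\}, \ldots, \{c_n\})$, and agent $2$'s minimax dominance with true report $\{c_2\}$ pins this output to $\pi(c_2, R(K_1, \{c_3\}, \ldots, \{c_n\}))$. Computing the latter range by varying agent $2$'s point report and re-applying the one-interval minimax analysis for each resulting profile, I can show it always contains the segment $[(a+b)/2, \beta_n]$. Then, picking $c_2 \in ((a+b)/2, b)$ with $c_2 \neq \beta_{n-1}$---a nonempty window guaranteed by the straddling conditions---forces agent $2$'s projection to return $c_2$, contradicting the value $(\beta_{n-1}+b)/2 \neq c_2$ coming from agent $1$. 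The main obstacle will be the uniformity of this contradiction across all phantom configurations: in degenerate cases where some $\alpha_j \in \{0, B\}$ and $R(\cdot)$ hugs the boundary of $[0,B]$ (so no valid straddling of the left endpoint exists), a symmetric argument straddling the right endpoint of the range and using $F = (a + \beta_n)/2$ does the job, but a short case analysis is needed to guarantee a length-$\leq \delta$ straddling interval for every allowed phantom placement.
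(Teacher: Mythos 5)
Your reduction to point profiles, your argument that the restriction of $F$ to point reports is itself onto (a step the paper glosses over), the appeal to Moulin's characterization, and the computation $F(K_1,\{c_2\},\ldots,\{c_n\})=\frac{\beta_{n-1}+b}{2}$ for a straddling interval are all sound. The proof breaks at the last step. Write $\gamma_1\le\cdots\le\gamma_{2n-3}$ for the sorted multiset of $c_3,\ldots,c_n,\alpha_1,\ldots,\alpha_{n-1}$. Your constraints force $c_2<\beta_{n-1}$: since $c_2<b\le\beta_n$, the value $c_2$ is among the $n-1$ smallest of the $2n-2$ values, so $c_2\le\beta_{n-1}$, and you exclude equality. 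Hence $\beta_{n-1}=\gamma_{n-2}$ and $\beta_n=\gamma_{n-1}$. Now let agent $2$'s point report $c$ vary: the relevant order statistics of $\{c\}\cup\{\gamma_j\}$ satisfy $\beta'_{n-1}(c)\in[\gamma_{n-2},\gamma_{n-1}]$ and $\beta'_n(c)\ge\gamma_{n-1}\ge b$, so re-running your one-interval analysis the output is $\frac{\beta'_{n-1}(c)+b}{2}$ when $\beta'_{n-1}(c)<b$ and $\beta'_{n-1}(c)$ otherwise. The point-report portion of agent $2$'s reachable set is therefore exactly $\bigl[\frac{\beta_{n-1}+b}{2},\,\beta_n\bigr]$, not a superset of $[\frac{a+b}{2},\beta_n]$: no report of agent $2$ can push $\beta'_{n-1}(c)$ below $\gamma_{n-2}=\beta_{n-1}$, so nothing below $\frac{\beta_{n-1}+b}{2}$ is reachable (and interval reports cannot help, since any reachable point $q$ would be exactly attained by the point report $\{q\}$, whose output lies in the set just computed). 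Since $c_2<\beta_{n-1}<\frac{\beta_{n-1}+b}{2}$, the nearest reachable point to $c_2$ is the left endpoint $\frac{\beta_{n-1}+b}{2}$ --- which is precisely the output the mechanism already produces. Agent $2$'s dominance condition is satisfied and no contradiction arises.

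The underlying defect is that your point agent is not pivotal: sitting strictly below $\beta_{n-1}$, she cannot move the order statistics that determine the interval agent's option set, so she cannot drag the ``midpoint'' output toward her own location, and the midpoint output is in fact already her best reachable outcome. The paper's construction fixes exactly this. It places the point agent's report $z$ strictly inside the interval $[\ell,r]$ and arranges the remaining reports and phantoms so that $z$ itself is the $n$-th order statistic, i.e.\ the right endpoint of the interval agent's option set $[y_j,z]$. The output is then $\frac{\ell+z}{2}<z$, and the point agent can unilaterally change her report to $2z-\ell$, shifting the option set to $[y_j,2z-\ell]$ and the output to $\frac{\ell+(2z-\ell)}{2}=z$, her own location --- a profitable deviation that kills minimax dominance. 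To rescue your argument you would need to reposition $c_2$ at $\beta_{n-1}$ or $\beta_n$ and have $[a,b]$ straddle that same endpoint, at which point you essentially recover the paper's profile. (Two smaller points: the statement implicitly needs $n\ge2$, since for $n=1$ the onto rule $F([a,b])=\frac{a+b}{2}$, $F(\{c\})=c$ is minimax dominant; and your sentence deriving $R(K_{-i})=[\beta_{n-1},\beta_n]$ from point reports alone should be supplemented by the ``any reachable $q$ is attained by the report $\{q\}$'' observation to rule out interval reports escaping that range.)
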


\begin{proof}
 Suppose this were not the case and there existed a deterministic mechanism $\mathcal{M}$ that is anonymous, minimax dominant, and onto. First, note that if we restrict 
ourselves to 
profiles where every agent's report is a single point (instead of intervals as in our setting), then $\mathcal{M}$ must have $n-1$ fixed points $y_1 \leq \cdots \leq 
y_{n-1}$ such that for any profile of single reports $(x_1, \cdots, x_n)$, $\mathcal{M}(x_1, \cdots, x_n) = \text{median}(y_1, \cdots, y_{n-1}, x_1, \cdots, x_n)$. This is so 
because, given the fact that $\mathcal{M}$ is anonymous, onto, and minimax dominant in our setting, when restricted to the setting where reports are single points, 
$\mathcal{M}$ is strategyproof, anonymous, and onto, and hence we know from the characterization result by \citet[Corollary 2]{masso11} that every such mechanism must have 
$n-1$ fixed points $y_1 \leq \cdots \leq y_{n-1}$ such that for any profile $(x_1, \cdots, x_n)$, $\mathcal{M}(x_1, \cdots, x_n) = \text{median}(y_1, \cdots, y_{n-1}, p_1, 
\cdots, p_n)$, where $p_i$ is the most preferred alternative of agent $i$ (i.e., agent $i$'s peak; since the utility of agent $i$ for an alternative $a \in [0, B]$ is defined 
as -$|x_i - a|$, we know that the preferences of agent $i$ is symmetric single-peaked with the peak $p_i = x_i$).\footnote{The original statement by \citet[Corollary 
2]{masso11} talks about mechanisms that are anonymous, strategyproof, and efficient. However, it is known that for strategyproof mechanisms in (symmetric) single-peaked 
domains, efficiency is equivalent to being onto (see, e.g., \cite[Lemma 10.1]{nisan07} for a proof).}\footnote{It is worth noting that the characterization result by 
\citet[Corollary 2]{masso11} for mechanisms that are anonymous, stratgeyproof, and onto under symmetric single-peaked preferences is the same as Moulin's characterization of 
the set of such mechanisms on the general single-peaked domain \cite[Theorem 1]{moulin80}).}

Now, given the observation above, for $1\leq j \leq n-1$, let us consider the smallest $j$ such that $y_j \neq y_{j+1}$ (if there is no such $j$ define $j = n-1$ if $y_{n-1} 
\neq B$ and $j = 0$ otherwise) and consider the following input profile $\mathcal{L}_0$
\begin{equation*}
   \left(\underbrace{y_j, \cdots, y_j,}_{n-j-1\text{ agents}} [\ell, r], z, \underbrace{B, \cdots, B}_{j-1 \text{ agents}}\right),  
 \end{equation*}
where $y_j < \ell < r < y_{j+1}$, $r - \ell < \delta$, $z = \frac{\ell + r}{2} - \epsilon$ and $0 < \epsilon < \frac{r-\ell}{2}$.   

In the profile $\mathcal{L}_0$, let $a$ and $b$ denote the agents who report $[\ell, r]$ and $z$, respectively, and let $p_0 = 
\mathcal{M}(\mathcal{L}_0)$. First, note that if $\mathcal{L}_1$ denotes the profile where agent $a$ reports $\ell$ instead of $[\ell, r]$ and every other agent reports as in 
$\mathcal{L}_0$, then $p_1 = \mathcal{M}(\mathcal{L}_1) = \text{median}(y_1, \cdots, \allowbreak y_{n-1},\allowbreak y_j, \cdots,\allowbreak y_j,\allowbreak \ell,\allowbreak 
z, B, \cdots B) = \ell$. Also, if $\mathcal{L}_2$ denotes the profile where agent $a$ reports $r$ instead of $[\ell, r]$ and every other agent reports as in $\mathcal{L}_0$, 
then $p_2 = \mathcal{M}(\mathcal{L}_2) = \text{median} (y_1, \cdots, \allowbreak y_{n-1}, y_j, \cdots, y_j, \allowbreak r, \allowbreak z, \allowbreak B, \cdots B) = z$. Now, 
since $p_1 = \ell$ and $p_2 = z$, this implies that $p_0 = \frac{\ell + z}{2}$, for if otherwise agent $a$ can deviate from $\mathcal{L}_0$ by reporting $\frac{\ell + z}{2}$ 
instead (it is easy to see that this reduces agent $a$'s maximum regret), thus violating the fact that $\mathcal{M}$ is minimax dominant. Given this, consider the profile 
$\mathcal{L}_3$ which is the same as $\mathcal{L}_0$ except for the fact that agent $b$ reports $(2z - \ell)$ instead of $z$. By again using the same line of reasoning as in 
the case of $\mathcal{L}_0$, it is easy to see that $p_3 = \mathcal{M}(\mathcal{L}_3) = z$. However, this in turn implies that agent $b$ can deviate from $\mathcal{L}_0$ to 
$\mathcal{L}_3$, thus again violating the fact that $\mathcal{M}$ is minimax dominant.
\end{proof}

Given the fact that we cannot have an anonymous, minimax dominant, and onto mechanism, the natural question to consider is if we can find non-onto mechanisms that perform 
well. We answer this question in the next section.

\subsubsection{Non-onto mechanisms} \label{sec:mdAC}

\begin{algorithm}[tb]
\centering
\noindent\fbox{%
\begin{varwidth}{\dimexpr\linewidth-4\fboxsep-4\fboxrule\relax}
\begin{algorithmic}[1]
\footnotesize
  \Input a $\delta \geq 0$ and for each agent $i$, their input interval $[a_i, b_i]$ 
  \Output location of the facility $p$
  \State $A \leftarrow \{g_1, g_2, \cdots, g_k\}$, where $g_1 = 0, g_k \leq B$, and $g_{i+1} - g_i = \frac{\delta}{2}$ \label{step:defA}
  \For{each $i \in \{1, \cdots, n\}$}
    \State $x_i \leftarrow$ point closest to $a_i$ in $A$ {\footnotesize(in case of a tie, break in favour of the point in $[a_i, b_i]$ if there exists one, break in favour 
of point to the left otherwise)} \label{step:defx}
    \State $y_i \leftarrow$ point closest to $b_i$ in $A$ {\footnotesize(break ties as in line~\ref{step:defx})}    \label{step:defy}
    \If{$\abs{[x_i, y_i] \cap A} == 1$} \Comment{{\footnotesize the case when $x_i=y_i$}} 
      \State $\ell_i \leftarrow x$ \label{step:abeq}
    \ElsIf{$\abs{[x_i, y_i] \cap A} == 2$}
      \If{$\abs{[x_i, y_i] \cap [a_i, b_i]} < 2$}
	      \If{$a_i + b_i \leq x_i + y_i$} \label{step:ineq}
		\State $\ell_i \leftarrow x_i$ \label{step:ineqassign}
	      \Else
		\State $\ell_i \leftarrow y_i$
	      \EndIf
	\Else \label{step:bothin}
	  \State $\ell_i \leftarrow x_i$ \label{step:bothinassign}
      \EndIf
    \ElsIf{$\abs{[x_i, y_i] \cap A} == 3$}    
      \State $\ell_i \leftarrow z_i$, where $z_i$ is the point in $[x_i, y_i] \cap A$ that is neither $x_i$ nor $y_i$ \label{step:three}
    \EndIf    
  \EndFor 
  \State return $\text{median}(\ell_1, \cdots, \ell_n)$ \label{step:med}   
\end{algorithmic}
\end{varwidth}}
\caption{$\frac{\delta}{2}$-equispaced-median mechanism}
\label{algo:mech}
\end{algorithm}

In this section we consider non-onto mechanisms. We first show a positive result by presenting an anonymous mechanism that implements 
$\frac{3\delta}{4}$-$\OMV_{AC}$ 
in minimax dominant strategies. Following this, we present a conditional lower bound that shows that one cannot achieve an approximation bound better than $\frac{\delta}{2}$ 
when considering mechanisms that have a finite range. 

\paragraph{An anonymous and minimax dominant mechanism.}
Consider the $\frac{\delta}{2}$-equispaced-median mechanism defined in Algorithm~\ref{algo:mech}, which can be thought of as an extension to 
the standard median mechanism. The key assumption in this mechanism is that the designer knows a $\delta$ such that any agent's candidate locations has a length at 
most $\delta$. Given this $\delta$, the key idea is to divide the interval $[0, B]$ into a set of ``grid points'' and then map every profile of reports to one 
of these points, while at the same time ensuring that the mapping is minimax dominant. In particular, in the case of the $\frac{\delta}{2}$-equispaced-median mechanism, 
when $\delta > 0$, its range is restricted to the finite set of points $A = \{g_1, g_2, \cdots g_m\}$ such that, for $i \geq 1$, $g_{i+1} - g_{i} = \frac{\delta}{2}$, $g_1 = 
0$, 
and $g_m \leq B$. 

Below we first prove a lemma where we show that the $\frac{\delta}{2}$-equispaced-median mechanism is minimax dominant. Subsequently, we then 
use this to prove our main theorem which shows that the $\frac{\delta}{2}$-equispaced-median mechanism implements $\frac{3\delta}{4}$-$\OMV_{AC}$ in minimax dominant 
strategies.

\begin{lemma} \label{lem:md}
 Given a $\delta \in [0, B]$ and for every agent $i$ in a $\delta$-uncertain-facility-location-game, reporting the candidate locations $[a_i, b_i]$ is a minimax 
dominant strategy for agent $i$ in the $\frac{\delta}{2}$-equispaced-median mechanism. 
\end{lemma}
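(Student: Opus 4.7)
The plan is to leverage the structure of the median mechanism to reduce minimax dominance to a one-dimensional optimization over the set of outputs the deviating agent can force. I would first fix agent $i$ with true candidate locations $[a_i,b_i]$ and an arbitrary strategy profile $s_{-i}$ of the other agents. For each $j \neq i$, the rounding rule in Algorithm~\ref{algo:mech} produces a grid point $\ell_j \in A$; sort these as $m_1 \leq \cdots \leq m_{n-1}$ and set $L := m_k$, $R := m_{k+1}$ with $k = \lfloor n/2\rfloor$. A standard case analysis on where $\ell_i$ sits relative to the sorted list shows that the median equals $L$ when $\ell_i \leq L$, equals $R$ when $\ell_i \geq R$, and equals $\ell_i$ when $L \leq \ell_i \leq R$; hence as $\ell_i$ ranges over $A$, the output traverses exactly $P_{-i} := A \cap [L,R]$, and every point of $P_{-i}$ is attainable by reporting the degenerate interval $[g,g]$ for the desired grid point $g$. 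Since $C(\ell,p) = |\ell-p|$ is convex in $p$, Jensen's inequality lets me restrict to pure deviations inside the inner max in Equation~\ref{eqn:maxRi}, so minimax dominance of truth-telling reduces to showing that its induced output $p^*$ minimizes
\begin{equation*}
\rho(p) \;=\; \max_{\ell \in [a_i,b_i]} \bigl( |\ell - p| - d(\ell,P_{-i}) \bigr), \qquad d(\ell,P_{-i}) := \min_{q \in P_{-i}}|\ell - q|,
\end{equation*}
over $p \in P_{-i}$.

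Next, by tracing the rounding rule on the truthful input $[a_i,b_i]$, one verifies that $\ell_i^*$ equals the grid point of $A$ nearest to the midpoint $c := (a_i+b_i)/2$, with the algorithm's ties broken in favour of a grid point inside $[a_i,b_i]$ and otherwise leftwards; hence the truthful output is $p^* = \max\bigl(L,\min(R,\ell_i^*)\bigr)$. To finish I would split on the position of $[a_i,b_i]$ relative to $[L,R]$. If $[a_i,b_i]$ lies entirely to the left of $L$, then for every $p \in P_{-i}$ and every $\ell \in [a_i,b_i]$ we have $|\ell-p| - d(\ell,P_{-i}) = (p-\ell) - (L-\ell) = p - L$, so $\rho(p) = p-L$ is minimized at $p = L$ and the algorithm forces $\ell_i^* \leq L$, giving $p^* = L$; the mirror argument handles $[a_i,b_i]$ entirely to the right of $R$. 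In the remaining case $[a_i,b_i] \cap [L,R] \neq \emptyset$, and since $b_i-a_i \leq \delta$ while the spacing of $A$ is $\delta/2$, the interval $[a_i,b_i]$ meets $A$ in at most three points. For any fixed $p \in P_{-i}$, the map $\ell \mapsto |\ell-p| - d(\ell,P_{-i})$ is piecewise linear, so its maximum over $[a_i,b_i]$ is attained at an endpoint $a_i$ or $b_i$ or at a grid point of $P_{-i}$ lying inside $[a_i,b_i]$. Evaluating $\rho(p)$ at these finitely many candidates shows that $\rho$ is monotone in $|p - c|$ across $P_{-i}$, hence minimized at the grid point of $P_{-i}$ closest to $c$, which is exactly $p^*$; the algorithm's tie-breaking convention aligns with the regret-minimizing choice in borderline cases.

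The hard part is the last step: verifying rigorously that $\rho(p)$ is monotone in $|p-c|$ over $P_{-i}$. This is delicate because $|\ell-p|$ and the sawtooth function $d(\cdot,P_{-i})$ interact in different ways depending on how many grid points of $P_{-i}$ lie inside $[a_i,b_i]$ (between zero and three) and on which side of $c$ they sit. I expect to handle this by splitting into sub-cases indexed by $|[a_i,b_i]\cap A|$ and, within each sub-case, directly comparing $\rho(p^*)$ against $\rho(p^*\pm \delta/2)$; the full monotonicity over $P_{-i}$ then follows by iterating these one-step comparisons outward from $p^*$, after which $\rho(p^*)\leq\rho(p)$ for all $p\in P_{-i}$ delivers the lemma.
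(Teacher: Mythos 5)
Your setup is sound and is essentially a cleaner reformulation of the paper's own argument: identifying the attainable output set as $P_{-i}=A\cap[L,R]$, reducing the benchmark in Equation~\ref{eqn:maxRi} to the distance $d(\ell,P_{-i})$ (the minimum over mixed strategies is attained at a pure one by linearity of expectation, not Jensen, but the conclusion is right), and reducing minimax dominance to showing that the truthful output minimizes $\rho$ over $P_{-i}$ are all correct moves. However, there are two genuine problems. First, your characterization of the truthful grid point is wrong: $\ell_i^*$ is \emph{not} always the point of $A$ nearest the midpoint $c=(a_i+b_i)/2$. In the sub-case of Algorithm~\ref{algo:mech} where $\abs{[x_i,y_i]\cap A}=2$ and both $x_i$ and $y_i$ lie in $[a_i,b_i]$ (lines~\ref{step:bothin}--\ref{step:bothinassign}), the mechanism always assigns $\ell_i\leftarrow x_i$ even when $c$ is strictly closer to $y_i$; for example, with grid spacing $\delta/2=1$ and $[a_i,b_i]=[x_i-0.1,\,x_i+1.4]$, the midpoint is $x_i+0.65$, whose nearest grid point is $y_i=x_i+1$, yet $\ell_i^*=x_i$. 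Your planned ``verification'' of this step therefore fails, and your concluding sentence (``minimized at the grid point of $P_{-i}$ closest to $c$, which is exactly $p^*$'') is false as stated. The lemma survives only because in this sub-case $\rho(x_i)=\rho(y_i)=\delta/2$ when both points are attainable---a tie you would have to exhibit explicitly rather than derive from a nearest-to-midpoint rule, and a tie that also defeats the strict form of your monotonicity claim, since here $\abs{x_i-c}>\abs{y_i-c}$ while the regrets are equal.

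Second, the entire content of the lemma is the claim that the mechanism's output minimizes $\rho$ over $P_{-i}$, and this is precisely the step you defer: the monotonicity of $\rho$ in $\abs{p-c}$ is asserted, not proved, and the interaction with $d(\cdot,P_{-i})$ changes character when $[a_i,b_i]$ only partially overlaps $[L,R]$, so the claim needs care in exactly the cases where it is doing work. The paper closes this gap with a concrete finite computation: it proves (Claim~\ref{maxR@endp}) that the inner maximum over $\ell\in[a_i,b_i]$ is attained at an endpoint $a_i$ or $b_i$, expresses the benchmark at those endpoints through the outputs $p_a,p_b$ obtained under the singleton reports $a_i$ and $b_i$, and then directly compares $\maxRegret_i(\ell_i^*)$ with $\maxRegret_i(g)$ for each of the at most two other attainable points $g\in[x_i,y_i]\cap A$, in each of the sub-cases $\abs{[x_i,y_i]\cap A}\in\{1,2,3\}$. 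Until you carry out the analogous comparisons (your ``$\rho(p^*)$ versus $\rho(p^*\pm\delta/2)$'' step) in all of these sub-cases, and correct the description of $\ell_i^*$, what you have is an outline of the right strategy rather than a proof.
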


\begin{proof}
 Let us fix an agent $i$ and let $[a_i, b_i]$ be her candidate locations. Also, let $\mathcal{L}_0$ be some arbitrary profile of candidate locations, where $\mathcal{L}_0 = 
\left([a_1, b_1], \allowbreak \cdots,\allowbreak [a_n, b_n]\right)$. We need to show that it is minimax dominant for agent $i$ to report $[a_i, b_i]$ in the 
$\frac{\delta}{2}$-equispaced-median mechanism (denoted by $\mathcal{M}$ from now on), and for this we broadly consider the following two cases. Intuitively, in both cases 
what we try to argue is that for an agent $i$ with candidate locations $[a_i, b_i]$ the $\ell_i$ that is associated with $i$ in the mechanism is in fact the agent's ``best 
alternative'' among the alternatives 
in $A$ (see line~\ref{step:defA} in Algorithm~\ref{algo:mech}). 

\textbf{Case 1: $a_i = b_i$.} In this case, we show that it is a very weakly dominant strategy for agent $i$ to report $a_i$. To see this, let $p$ be the 
output of $\mathcal{M}$ when agent $i$ reports $a_i$ and $x_i$ be the point that is closest to $a_i$ in $A$ (with ties broken in favour of the point which is to the left 
of $a_i$). From line~\ref{step:abeq} in the mechanism, we know that $\ell_i = x_i$.  Now, if either $\ell_i < p$ or $\ell_i > p$, then any report that changes the median will 
only result in the output being further away from agent $i$. And if $\ell_i = p$, then since we choose $\ell_i$ to be the point that is closest to $a_i$ in $A$, it is clear 
that it is very weakly dominant for the agent to report $a_i$. Hence, from both the cases above, we have our claim. 

\textbf{Case 2: $a_i \neq b_i$.} Let $p$ be the output of $\mathcal{M}$ when agent $i$ reports $[a_i, b_i]$, and let $x_i$ and $y_i$ be the points that are 
closest (with ties being broken in favour of points in $[a_i, b_i]$ in both cases) to $a_i$ and $b_i$, respectively. From the mechanism we can see that $\ell_i 
\in [x_i, y_i]$. Next, let us first consider the case when $p < x_i$ or $p > y_i$. In both these cases, given the fact that $x_i$ and $y_i$ are the points closest to $a_i$ 
and $b_i$, respectively, $p$ has to be outside $[a_i, b_i]$. And so, if this is the case, then if agent $i$ misreports and the output changes to some $p' < p$ or $p' > p$, in 
both cases, it is easy to see that the maximum regret associated with $p'$ is greater than the one associated with $p$. Hence, the only case 
where an agent $i$ can successfully misreport is if $p \in [x_i, y_i]$. So, we focus on this scenario below. 

Considering the scenario when $p \in [x_i, y_i]$, first, note that the interval $[x_i, y_i]$ can have at most three points that are also in $A$ (this is proved in 
Claim~\ref{clm:atmost3inA}, which is in Appendix~\ref{app:addclms}). So, given this, let us now consider the following cases.

\begin{enumerate}[label=\roman*)]
 \item $\abs{[x_i, y_i] \cap A} = 1$. Since $p \in [x_i, y_i]$, this implies that $p = x_i = \ell_i$. Therefore, in this case, if agent $i$ misreports, then she only 
experiences a greater maximum regret as the resulting output $p'$ would be outside $[x_i, y_i]$ and we know from our discussion above that these points have a greater maximum 
regret than a point in $[x_i, y_i]$. 

\item $\abs{[x_i, y_i] \cap A} = 2$. First, note that since $p \in [x_i, y_i]$ and the only other point in $[x_i, y_i]$ that is also in $A$ is $y_i$, if $p < \ell_i$ or $p > 
\ell_i$, then agent $i$ can only increase her maximum regret by misreporting and changing the outcome (because the new outcome will be outside $[x_i, y_i]$). Therefore, we 
only need to consider the case when $p = \ell_i$, and here we consider the following sub-cases.

\begin{enumerate}
\item $p=\ell_i=x_i$. From the mechanism we know that this happens only when either both $x_i$ and $y_i$ are in $[a_i, b_i]$ 
(lines~\ref{step:bothin}-\ref{step:bothinassign}) or when $a_i + b_i \leq x_i + y_i$ (lines~\ref{step:ineq}-\ref{step:ineqassign}). Now, since we know that every 
point outside $[x_i, y_i]$ is worse in terms of maximum regret than $x_i$ or $y_i$, we only need to consider the case when agent $i$ misreports in such a way that it 
results in the new outcome $p'$ being equal to $y_i$. And below we show that under both the conditions stated above (lines~\ref{step:ineq}-\ref{step:ineqassign} 
and~\ref{step:bothin}-\ref{step:bothinassign}, both of which result in $\ell_i$ being defined as being equal to $x_i$ in the mechanism) the maximum regret associated with 
$y_i$ is at least as much as the one associated with $x_i$. 

To see this, consider the profile where agent $i$ reports $a_i$ instead of $[a_i, b_i]$ and all the other agents' reports are the same as in $\mathcal{L}_0$. Let $p_a$ be 
the output of $\mathcal{M}$ for this profile. Note that $p_a = x_i$ since the $\ell_i$ associated with agent $i$ in this profile is $x_i$ and so the outcome in this profile 
is the same as $p = x_i$. Similarly, let $p_b$ be the outcome when agent $i$ reports $b_i$ instead of $[a_i, b_i]$. Since the $\ell_i$ associated with agent $i$ in this 
profile is $y_i$, one can see that $x_i \leq p_b \leq y_i$. Now, if $p_b = x_i$, then $\maxRegret_i(x_i) = 0$, where $\maxRegret_i(x_i)$ is the maximum regret associated with 
the point $x_i$ for agent $i$ (see Equation~\ref{eqn:maxRi} for the definition), and so $x_i$ is definitely better than $y_i$. Therefore, we 
can ignore this case and instead assume that $p_b = y_i$. So, considering this, since for the maximum regret calculations only the endpoints $a_i$ and $b_i$ matter (this is 
proved in Claim~\ref{maxR@endp}), we have that
\begin{align*}
\maxRegret_i{(y_i)} &= \max\{\abs{a_i - y_i} - \abs{a_i - p_a}, \abs{b_i - y_i} - \abs{b_i - p_b}\}\\
&= \abs{a_i - y_i} - \abs{a_i - p_a} \tag{{\scriptsize since $p_b = y_i$ and $p_a = x_i$}}\\
&\geq \abs{b_i - x_i} - \abs{b_i - p_b} \tag{{\scriptsize depending on the case, use either the fact that  $x_i, y_i \in [a_i, b_i]$ or that $a_i + b_i \leq x_i 
+ y_i$}} \\
&= \maxRegret_i{(x_i)}.  \tag{{\scriptsize since $p_b = y_i$ and $p_a = x_i$}}
\end{align*}

Hence, we see that in this case agent $i$ cannot gain by misreporting.

\item $p = \ell_i = y_i$. We can show that in this case $\maxRegret{(x_i)} \geq \maxRegret{(y_i)}$ by proceeding similarly as in the case above. 
\end{enumerate}
 
 \item $\abs{[x_i, y_i] \cap A} = 3$. Let $x_i < z_i < y_i$ be the three points in $[x_i, y_i] \cap A$. Since the length of $[a_i, b_i]$ is at most $\delta$, note that both 
$x_i$ and $y_i$ cannot be in the interval $(a_i, b_i)$. So below we assume without loss of generality that $x_i \leq a_i$. From the mechanism we know that in this case 
$\ell_i = z_i$ 
(line~\ref{step:three}). Also, as in the cases above, note that if $p < \ell_i$ or $p > \ell_i$, then agent $i$ can only increase her maximum regret by misreporting 
and changing the outcome (because, again, the new outcome will be outside $[x_i, y_i]$). Therefore, the only case we need to consider is if $p = \ell_i$, and for this case we 
show that both $x_i$ and $y_i$ have a maximum regret that is at least as much as that associated with $z_i$ (we do not need to consider points outside $[x_i, y_i]$ since we 
know 
that these points have a worse maximum regret than any of the points in $[x_i, y_i]$). Note that if this is true, then we are done as this shows that agent $i$ cannot 
benefit from misreporting. 

To see why the claim is true, consider the profile where agent $i$ reports $a_i$ instead of $[a_i, b_i]$ and all the other agents' reports are the same as in 
$\mathcal{L}_0$. Let $p_a$ be the outcome of $\mathcal{M}$ for this profile. Similarly, let $p_b$ be the outcome when agent $i$ reports $b_i$ instead of $[a_i, b_i]$. Note 
that $p_a \leq p = z_i$ and $p_b \geq p = z_i$. Now, since, again, for the maximum regret calculations only the endpoints $a_i$ and $b_i$ matter (this is proved in 
Claim~\ref{maxR@endp}, which is in Appendix~\ref{app:addclms}), we have that
\begin{align*}
  \maxRegret(z_i) &= \max\{\abs{a_i-z_i} - \abs{a_i - p_a}, \abs{b_i - z_i} - \abs{b_i - p_b}\} \\
  &\leq \abs{b_i - x_i} - \abs{b_i - p_b} \tag{{\scriptsize since $x_i$ is the closest point to $a_i$ in $A$ and using very weak dominance for single reports}}\\
  &= \max\{\abs{a_i - x_i} - \abs{a_i - p_a}, \abs{b_i - x_i} - \abs{b_i - p_b}\} \tag{{\scriptsize since $x_i \leq a_i$ and $|b_i-p_b| \leq |b_i-p_a|$ by 
 very weak dominance for single reports}}\\
  &= \maxRegret(x_i).
\end{align*}
Similarly, we can show that $\maxRegret(z_i) \leq \maxRegret(y_i)$. Hence, agent $i$ will not derive any benefit from misreporting her candidate locations.  
\end{enumerate}

Finally, combining all the cases above, we have that the $\frac{\delta}{2}$-equispaced-median mechanism is minimax dominant. This concludes the proof of our lemma. 
\end{proof}

Given the lemma above, we can now prove the following theorem. 

\begin{theorem} \label{thm:deltmedian}
 Given a $\delta \in [0, B]$, the $\frac{\delta}{2}$-equispaced-median mechanism is anonymous and implements 
$\frac{3\delta}{4}$-$\OMV_{AC}$ in minimax dominant strategies for a $\delta$-uncertain-facility-location-game.
\end{theorem}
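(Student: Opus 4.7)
Anonymity is immediate from Algorithm~\ref{algo:mech}: each $\ell_i$ depends only on $[a_i,b_i]$, and the output $\text{median}(\ell_1,\ldots,\ell_n)$ is symmetric in its arguments. Minimax dominance is exactly Lemma~\ref{lem:md}. So the substantive step is to show $\maxRegret(p,\mathcal{I})-\OMV_{AC}(\mathcal{I})\le\tfrac{3\delta}{4}$, where $p$ is the output of the mechanism on the profile $(K_1,\ldots,K_n)$ and $\mathcal{I}=K_1\times\cdots\times K_n$.

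\textbf{Locating $p$.} My first step is to pin down where $p$ must lie. Since the grid $A$ has spacing $\tfrac{\delta}{2}$, every real number is within $\tfrac{\delta}{4}$ of the nearest grid point, so $|x_i-a_i|,\,|y_i-b_i|\le \tfrac{\delta}{4}$ for all $i$. In every branch of Algorithm~\ref{algo:mech} we have $\ell_i\in[x_i,y_i]\subseteq[a_i-\tfrac{\delta}{4},\,b_i+\tfrac{\delta}{4}]$. Since order statistics are monotone under coordinatewise inequalities, taking medians yields $p\in[L_{k+1}-\tfrac{\delta}{4},\,R_{k+1}+\tfrac{\delta}{4}]$. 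By Lemma~\ref{clm:optInterval}, $p_{opt}\in[L_{k+1},R_{k+1}]$, and therefore $|p-p_{opt}|\le (R_{k+1}-L_{k+1})+\tfrac{\delta}{4}$.

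\textbf{Comparing regrets.} By Lemma~\ref{clm:maxregret}, $\maxRegret(\cdot,\mathcal{I})=\max(obj_1^{AC},obj_2^{AC})$ is piecewise linear in $p$ with slope in $[-1,1]$, since each agent contributes at most $\pm\tfrac{1}{n}$ to the slope of either piece. A plain Lipschitz bound then gives $\maxRegret(p,\mathcal{I})-\OMV_{AC}(\mathcal{I})\le|p-p_{opt}|\le\tfrac{5\delta}{4}$, which is $\tfrac{\delta}{2}$ too weak. To sharpen this I plan to split on the sign of $p-p_{opt}$: on each side only one of $obj_1^{AC},obj_2^{AC}$ attains the maximum, so only the one-sided slope is relevant, and that slope vanishes on any grid subinterval containing no $R_i$ (resp.\ $L_i$). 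Coupled with the tie-breaking rule on line~\ref{step:ineq}---which forces $\ell_i$ toward the midpoint of $[a_i,b_i]$---this should show that when $R_{k+1}-L_{k+1}$ is close to $\delta$, the term $\tfrac{n-2k}{n}\cdot\tfrac{R_{k+1}-L_{k+1}}{2}$ appearing inside $\OMV_{AC}(\mathcal{I})$ already absorbs the extra $\tfrac{\delta}{2}$ of slack.

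\textbf{Main obstacle.} The hard part is extracting this last $\tfrac{\delta}{2}$; the naive Lipschitz argument simply leaves it on the table. I expect the refinement to mirror, agent by agent, the three cases $|[x_i,y_i]\cap A|\in\{1,2,3\}$ already used in Lemma~\ref{lem:md}, combined with careful use of Lemma~\ref{clm:maxregret} to track how each $\ell_i$'s deviation from $[a_i,b_i]$ feeds into $\maxRegret(p,\mathcal{I})$ while simultaneously contributing to the lower bound $\OMV_{AC}(\mathcal{I})$.
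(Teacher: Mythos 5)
Your reduction of the theorem to the bound $\abs{p - p_{opt}} \le \frac{3\delta}{4}$ is the right skeleton: anonymity and minimax dominance are handled exactly as in the paper (the latter is Lemma~\ref{lem:md}), and the final Lipschitz step you invoke is precisely Claim~\ref{clm:mr-ac-bounds}. The gap is that you never actually establish $\abs{p-p_{opt}}\le\frac{3\delta}{4}$. Your localization $p\in[L_{k+1}-\frac{\delta}{4},\,R_{k+1}+\frac{\delta}{4}]$ is correct but only yields $\frac{5\delta}{4}$, and the two mechanisms you propose for recovering the missing $\frac{\delta}{2}$ do not work. First, the one-sided slope of $obj_1^{AC}$ on an interval containing no $R_i$ is $-\frac{1}{n}\bigl(2(k-j+1)+(n-2k)\bigr)$ for the fixed index $j$ determined by that interval; it is piecewise \emph{constant} there, not zero, and can be as large as $1$ in magnitude. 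Second, the term $\frac{n-2k}{n}\cdot\frac{R_{k+1}-L_{k+1}}{2}$ you hope will absorb the slack is $0$ for even $n$ and $O(1/n)$ for odd $n$, so it cannot absorb a quantity of order $\delta$ uniformly in $n$.

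What is actually needed is a sharper localization of $p$ itself, exploiting that $p$ is a \emph{grid point}. The paper's argument runs as follows: let $x^*$ and $y^*$ be the grid points nearest $L_{k+1}$ and $R_{k+1}$ respectively; a counting argument over the $k+1$ agents with $b_c\le R_{k+1}$ and the $k+1$ agents with $a_d\ge L_{k+1}$ shows $p\in[x^*,y^*]$. By Claim~\ref{clm:atmost3inA} this interval contains at most three grid points, and when it contains exactly three a further (and more delicate) counting argument shows the median of the $\ell_i$'s must be the \emph{middle} one. With $p$ pinned to one of at most three grid points whose positions relative to $[L_{k+1},R_{k+1}]$ are controlled to within $\frac{\delta}{4}$ (nearest-point property) and $\frac{\delta}{2}$ (grid spacing), a short case analysis gives $\abs{p-q}\le\frac{3\delta}{4}$ for every $q\in[L_{k+1},R_{k+1}]$, hence for $p_{opt}$ by Lemma~\ref{clm:optInterval}. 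The "middle grid point" step is the real content you are missing; without it, the case $\abs{[x^*,y^*]\cap A}=3$ alone would already cost you $\delta+\frac{\delta}{4}$ under your coarser localization.
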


\begin{proof}
 From Lemma~\ref{lem:md} we know that the mechanism is minimax dominant. Therefore, the only thing left to show is that it achieves an approximation bound of 
$\frac{3\delta}{4}$. In order to do this, consider an arbitrary profile of candidate locations $\mathcal{L}_0$, where $\mathcal{L}_0 = ([a_1, b_1], \cdots, [a_n, b_n])$, 
and let $L_1, \cdots, L_n$ and $R_1, \cdots, R_n$ denote the sorted order of the left endpoints (i.e., $\{a_i\}_{i\in [n]}$) and right endpoints  (i.e., $\{b_i\}_{i\in [n]}$), 
respectively. Next, consider the interval $[L_{k+1}, R_{k+1}]$ that is associated with the given profile $\mathcal{L}_0$, where $k = \frac{\lfloor n \rfloor}{2}$ and 
$L_{k+1}$ and $R_{k+1}$ are the medians of the sets $\{a_i\}_{i \in [n]}$ and $\{b_i\}_{i \in [n]}$, respectively. (Note that $R_{k+1} - L_{k+1} \leq \delta$ from 
Claim~\ref{clm:LRdelta}, which is in Appendix~\ref{app:addclms}.) 
Also, let $i$ be an agent who reported an $a_i$ such that $a_i = L_{k+1}$ and $j$ denote one who reported a $b_i$ such that $b_i = R_{k+1}$. Next, from the mechanism, 
consider $x_i$ (see line~\ref{step:defx}) and $y_j$ (see line~\ref{step:defy}). 

First, for every agent $c$ with $b_c \leq R_{k+1}$ (and there are $k+1$ of them), $\ell_c \leq y_j$. This is so because, for every agent $c$, the point 
$\ell_c$ that is associated with the agent is in $[x_c, y_c]$ and also for two agents $c, c'$, if $b_{c'} \leq b_c$, then $y_{c'} \leq y_c$. Similarly, for every agent $d$ 
with $a_d \geq L_{k+1}$ (and, again, there are $k+1$ of them), $\ell_d \geq x_i$. Hence, it follows from the two observations above that the output $p$ of 
the mechanism, which is the median of $\{\ell_1, \cdots, \ell_n\}$, will be in the interval $[x_i, y_j]$.  

Second, we will show that if $\abs{[x_i, y_j] \cap A} = 3$, then the output $p$ of the mechanism will be the point $z$ in interval $(x_i, y_j)$ such that $z \in A$. To 
see this, consider the points $R_1, \cdots, R_{k+1}$ and consider the largest $q$, where $q \leq k$, such that $R_q \leq z$ (define $q=0$, if $R_1 > z$). We will assume 
without of loss of generality that, for each $i$, the agent associated with $R_i$ (i.e., one who reports a right endpoint such that it is equal to $R_i$) is agent $i$. Now, 
for each agent $r$ from 1 to $q$, the $\ell_r$ 
associated with them in the mechanism is at most $z$. This is so because, for each such agent $r$ the $y_r$ associated with them at most $z$. Also, for 
each agent $s$ from $q+1$ to $k+1$, the only way the $\ell_s$ associated with agent $s$ is greater than $z$ is if the left endpoints associated  with them are greater than 
$L_{k+1}$ (because if not, then one can see that the $[x_s, y_s]$ associated with agent $s$ has $x_i$ and $z$ in it and so $\ell_s$ cannot be greater than $z$). Now, let 
$n_1$ be the number of agents among the agents from $q+1$ to $k+1$ such that their left endpoints are 
greater than $L_{k+1}$. This implies that there are $n_1$ agents among the ones that report a left endpoint in $\{L_1, \cdots, L_{k+1}\}$ that have a corresponding right 
endpoint greater than or equal to $R_{k+1}$. And this in turn implies that the $\ell$s associated with these $n_1$ agents can be at most $z$ (because, again, one can see 
that 
the $[x, y]$ associated with such an agent will have $x_i$ and $z$ in it). Hence, combining all the observations above, we see that $q + ((k+1 - (q+1) + 1) - n_1) + n_1 = 
k+1$ agents report an $\ell$ that is at most $z$. Now, we can employ a similar line of reasoning to show that $k+1$ agents report an $\ell$ that is at least $z$. Hence, it 
follows that the median of $\{\ell_1, \cdots, \ell_n\}$ must be $z$. 

Given the observations above, let us now look at the following cases. In all the cases we show that any point in $[L_{k+1}, R_{k+1}]$ is at a distance of at most 
$\frac{3\delta}{4}$ from $p$, the output of the mechanism. 

\textbf{Case 1:} both $x_i$ and $y_j$ are not in $[L_{k+1}, R_{k+1}]$. In this case, let us consider the following sub cases. 
\begin{enumerate}[label=(\alph*)]
 \item[a)] $\abs{[x_i, y_j] \cap A} = 2$. Since both $x_i$ and $y_j$ are not in $[L_{k+1}, R_{k+1}]$ and there is no other point in $A$ that is in $[L_{k+1}, R_{k+1}]$, 
we have that the distance from any point in $[L_{k+1}, R_{k+1}]$ to either $x_i$ or $y_j$ is at most $\frac{\delta}{2}$ (as the distance between $x_i$ or $y_j$ is 
$\frac{\delta}{2})$. Also, from the discussion above we know that the median that is returned by the mechanism is either $x_i$ or $y_j$, so we have our bound.   

\item[b)] $\abs{[x_i, y_j] \cap A} = 3$. In this case we know from above that the output of the mechanism $p = z$, where $z \in [x_i, y_j] \cap A$ and is neither $x_i$ nor 
$y_j$. Therefore, we have that $(z - L_{k+1}) \leq (z- x_i) = \frac{\delta}{2}$, and $(R_{k+1} - z) < (y_j - z) = \frac{\delta}{2}$. Hence, any point in $[L_{k+1}, 
R_{k+1}]$ is a distance of at most $\frac{\delta}{2}$ from $p$. 

\item[c)] $\abs{[x_i, y_j] \cap A} > 3$. Since for any interval $[a, b]$ of length at most $\delta$, the $[x, y]$ associated with it can have at 
most three points in $A$ 
(this is proved in Claim~\ref{clm:atmost3inA}, which is in Appendix~\ref{app:addclms}), where $x$ and $y$ are as defined in the mechanism (see 
lines~\ref{step:defx}-\ref{step:defy}), this case is impossible. 
\end{enumerate}

\textbf{Case 2:} at least one of $x_i$ or $y_j$ is in $[L_{k+1}, R_{k+1}]$. Let us assume without loss of generality that $x_i$ is the point in $[L_{k+1}, R_{k+1}]$, and 
let 
us consider the following sub cases.
\begin{enumerate}
 \item[a)] $\abs{[x_i, y_j] \cap A} = 2$. In this case, if the output of the mechanism $p = x_i$, then we have that $(x_i - L_{k+1}) \leq \frac{\delta}{4}$ (since, by 
definition, $x_i$ is the point in $A$ that is closest to $L_{k+1}$), and $(R_{k+1} - x_i) \leq |R_{k+1} - y_j| + (y_j - x_i) \leq \frac{\delta}{4} + \frac{\delta}{2}$ 
(since, 
by definition, $y_j$ is the point in $A$ that is closest to $R_{k+1}$). Hence, any point in $[L_{k+1}, R_{k+1}]$ is a 
distance of at most $\frac{3\delta}{4}$ from $p$. On other hand, if the output of the mechanism $p = y_j$, then we have that $(y_j - L_{k+1}) = (y_j - x_i + x_i - L_{k+1}) 
\leq \frac{\delta}{2} + \frac{\delta}{4}$ (since $x_i$ is the point in $A$ that is closest to $L_{k+1}$), and $|y_j - R_{k+1}| \leq \frac{\delta}{4}$ (since $y_j$ is the 
point in $A$ that is closest to $R_{k+1}$). Hence, when  $p = y_j$, any point in $[L_{k+1}, R_{k+1}]$ is a distance of at most $\frac{3\delta}{4}$ from $p$. Combining the 
two, we see that our claim is true in this case. 

\item[b)] $\abs{[x_i, y_j] \cap A} = 3$. Here again since $p = z$, where $z \in [x_i, y_j] \cap A$ and is neither $x_i$ nor 
$y_j$, we have that $(z - L_{k+1}) \leq (z- x_i + x_i -  L_{k+1}) \leq \frac{\delta}{2} + 
\frac{\delta}{4}$ (since $x_i$ is the point in $A$ that is closest to $L_{k+1}$), and $(R_{k+1} - z) \leq (y_j - z) \leq \frac{\delta}{2}$. Hence, even in this case, any 
point in $[L_{k+1}, R_{k+1}]$ is at a distance of at most $\frac{3\delta}{4}$ from $p$. 
\end{enumerate}

Finally, since the output of the mechanism is a distance of at most $\frac{3\delta}{4}$ from any point in $[L_{k+1}, R_{k+1}]$, and given the fact that the minimax optimal  
solution 
is in $[L_{k+1}, R_{k+1}]$ (from Lemma~\ref{clm:optInterval}), one can see using Lemma~\ref{clm:maxregret} that for $\mathcal{I} = [a_1, b_1] \times \cdots  \times [a_n, 
b_n]$,  $\maxRegret(p, \mathcal{I}) - \text{OMV}_{AC}(\mathcal{I})$ is also bounded by $\frac{3\delta}{4}$ (this is proved in Claim~\ref{clm:mr-ac-bounds}, which is in 
Appendix~\ref{app:addclms}). 
\end{proof}

\paragraph{A conditional lower bound.}
In the context of our motivating example from the introduction, it is possible, and in fact quite likely, that the city can only build the school at a finite set of locations 
on the street. Therefore, an interesting class of non-onto mechanisms to consider is ones which have a finite range. Furthermore, seeing our mechanism above, an inquisitive 
reader 
might wonder: ``why $\frac{\delta}{2}$-equispaced? why not $\frac{\delta}{3}$-equispaced or something smaller than $\frac{\delta}{2}$?'' First, one can easily construct 
counter-examples to show that any $\epsilon$-equispaced-median mechanism is not minimax dominant for $\epsilon < \frac{\delta}{2}$. However, that still does not rule out 
mechanisms whose range is some finite set $\{g_1, \cdots, g_m\}$. So, below we consider this question and we show that the approximation bound associated with any mechanism 
that is anonymous, minimax dominant, and has a finite range, is at least $\frac{\delta}{2}$. The key idea that is required in order to show this bound is the following lemma, 
which informally says that if the mechanism has a finite range, is minimax dominant, and achieves a bound less than $\frac{3\delta}{4}$, then there is ``sufficient-gap'' 
between 
four consecutive points in the range, $A$, of the mechanism. Once we have this observation it is then in turn used to construct profiles that will result in 
the 
stated bound.  

\begin{lemma} \label{lem:finite-lem}
Given a $\delta \in (0, \frac{B}{6}]$, let $\mathcal{M}$ be a deterministic mechanism that has a finite range $A$ (of size at least six), is anonymous, and one 
that implements $\alpha$-$\OMV_{AC}$ in minimax dominant strategies for a $\delta$-uncertain-facility-location-game.
Then, either $\alpha \geq \frac{3\delta}{4}$, or there exists four consecutive points $g_1, g_2, g_3, g_4 \in A $ such that $g_1 < g_2 < g_3 < g_4$ and 
$\frac{d_1}{2} + d_2 + \frac{d_3}{2} \geq \delta$, where, for $i \in [3], d_i = g_{i+1} - g_i$.
\end{lemma}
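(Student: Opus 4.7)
I would argue by contrapositive: assume that $\mathcal{M}$ implements $\alpha$-$\OMV_{AC}$ with $\alpha < \frac{3\delta}{4}$ and that \emph{every} quadruple of consecutive points $g_i < g_{i+1} < g_{i+2} < g_{i+3}$ in $A$ violates the spread condition, i.e., $\frac{g_{i+1}-g_i}{2} + (g_{i+2}-g_{i+1}) + \frac{g_{i+3}-g_{i+2}}{2} < \delta$. From these two hypotheses I will exhibit a profile and a deviation that strictly reduces some agent's maximum regret, contradicting minimax dominance.

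The first step is to pin down $\mathcal{M}$'s behaviour on exact (single-point) reports. Since $\mathcal{M}$ is anonymous and minimax dominant and, by hypothesis, has a finite range $A$, its restriction to exact-report profiles is anonymous and strategyproof in the classical symmetric single-peaked domain, but with outputs constrained to $A$. Adapting the characterization invoked in the proof of Theorem~\ref{prop-onto} to this finite-range setting, there must exist $n-1$ phantom points $y_1 \leq \cdots \leq y_{n-1}$ in $A$ such that on any exact-report profile $(x_1,\ldots,x_n)$ the mechanism outputs $\mathrm{median}(y_1,\ldots,y_{n-1},\mathrm{proj}_A(x_1),\ldots,\mathrm{proj}_A(x_n))$ for an appropriate rounding rule. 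Next, I would exploit the approximation bound: for any interval $[a,b]$ with $b - a \leq \delta$, if all $n$ agents report $[a,b]$ then by Lemma~\ref{clm:maxregret} the optimal minimax solution is $\frac{a+b}{2}$ with $\OMV_{AC} = \frac{b-a}{2}$, so $\alpha < \frac{3\delta}{4}$ forces the output to lie within distance roughly $\alpha$ of the midpoint. By sliding $[a,b]$ across $[0,B]$ and comparing outputs with the generalized-median structure above, I can determine, for each midpoint, which point of $A$ the mechanism must select.

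The main step is the construction of a profitable deviation. Pick any four consecutive grid points $g_1 < g_2 < g_3 < g_4$ of $A$; by assumption $\frac{d_1}{2} + d_2 + \frac{d_3}{2} < \delta$, which is precisely the distance from $\frac{g_1+g_2}{2}$ to $\frac{g_3+g_4}{2}$. I would fix $n-1$ agents at exact reports chosen (using the phantom positions from Step~1) so that the generalized median on exact-report profiles is pinned to $\{g_1,g_2,g_3,g_4\}$ as the remaining agent's exact report is varied. The remaining agent then reports an interval $[a,b]$ of length $\leq \delta$ containing both $\frac{g_1+g_2}{2}$ and $\frac{g_3+g_4}{2}$, which is feasible precisely because the spread condition fails. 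The approximation guarantee $\alpha < \frac{3\delta}{4}$ forces the output on this interval profile to lie close to $\frac{a+b}{2}$, and computing $\maxRegret_i$ at each of the four candidate outputs in $\{g_1,\ldots,g_4\}$---using the fact from the proof of Lemma~\ref{lem:md} that only the endpoints $a$ and $b$ matter---shows that the mechanism's choice is strictly dominated in the minimax sense by the output the agent can induce by reporting one of the endpoints instead. I expect the main obstacle to be the casework: which grid points the two endpoint deviations produce depends on both the phantom configuration and the tie-breaking implicit in how $\mathcal{M}$ extends to interval reports, and every case must be shown to yield a beneficial deviation under the density hypothesis.
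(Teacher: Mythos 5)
Your plan has the right combinatorial seed---an interval of length less than $\delta$ that simultaneously contains the midpoints $\frac{g_1+g_2}{2}$ and $\frac{g_3+g_4}{2}$ is exactly what the paper exploits---but the argument built on top of it breaks in three places. First, the characterization in your Step 1 is the wrong one. A finite-range mechanism is not onto, so you cannot use the $(n-1)$-phantom form from the proof of Theorem~\ref{prop-onto} (even with a projection onto $A$ bolted on); the correct statement for anonymous, strategyproof, not-necessarily-onto rules on the symmetric single-peaked domain is the ``disturbed median'' of \citet[Corollary 1]{masso11}, which has $n+1$ phantoms $y_1\leq\cdots\leq y_{n+1}$ with $y_1$ and $y_{n+1}$ equal to the extremes of $A$. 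The extra phantoms are not cosmetic: whether you can pin the exact-report median to a prescribed quadruple depends on where that quadruple sits relative to $y_2$, and when fewer than three points of $A$ lie above $y_2$ the pinning construction is impossible and one must instead show $\alpha\geq\frac{3\delta}{4}$ directly from sparse gaps near the bottom of the range. That is an entire case of the proof for which your plan has no room.

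Second, the localization step in your main construction is quantitatively unsound for the average-cost objective. The computation $\OMV_{AC}=\frac{b-a}{2}$ with output forced within $\alpha$ of the midpoint is valid only for the profile where \emph{all} $n$ agents report $[a,b]$; in your main profile only one agent reports an interval, and there displacing the facility by $d$ from $p_{opt}$ increases the maximum regret by as little as $\frac{n-2k}{n}\cdot d$ (Claim~\ref{clm:mr-ac-bounds}), i.e.\ roughly $d/n$, so $\alpha<\frac{3\delta}{4}$ imposes essentially no constraint on where the output lands. Third, and most importantly, a single agent's deviation analysis cannot produce a contradiction. If reporting $\ell$ induces $g_1$ and reporting $r$ induces $g_3$, then computing $\maxRegret_i$ over the inducible points shows that $g_2$ is the regret-minimizing outcome for the interval agent, so minimax dominance \emph{forces} the output to be $g_2$---it pins the output rather than contradicting anything; the agent has no incentive to retreat to an endpoint. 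The contradiction in the paper comes from a \emph{second} agent whose candidate location is exactly $g_3$: once the interval agent has dragged the output down to $g_2$, that second agent profitably misreports $g_4$ to pull the median back to $g_3$. This two-agent double-deviation is the key missing idea in your proposal.
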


\begin{proof}[Proof (sketch)]
The proof here is broadly similar to the proof of Theorem~\ref{prop-onto} in that here again we use a characterization result by 
\citet{masso11}, albeit a different one, and then construct profiles in order to prove that the claim is true. The result we will rely on is the one that characterizes the 
set of anonymous and strategyproof mechanisms in the symmetric single-peaked domain \cite[Corollary 1]{masso11}. In particular, \citeauthor{masso11} showed that any mechanism 
that is anonymous and strategyproof in the symmetric single-peaked domain can be described as a `disturbed median'. We will not be defining disturbed medians precisely since 
the definition is quite involved and we do not need it for the purposes of this paper, but broadly a mechanism $\mathcal{M'}$ is a disturbed median if i) it has $n+1$ fixed 
points $0 \leq y_1 \leq \cdots \leq y_{n+1} \leq 1$, ii) its range has a countable number of non-intersecting discontinuity intervals $\{[a_m, b_m]\}_{m \in M}$, where $M$ is 
some indexation set and where for all $i$, $y_i \notin [a_m, b_m]$ for all $m \in M$, iii) has a family of anonymous tie-breaking rules, and iv) behaves the 
following way with respect to a profile $(x_1, \cdots, x_n)$ of exact reports: if $\text{median}(y_1, \cdots, y_{n+1}, x_1, \cdots, x_n) \neq d_m$ for all $m\in M$, where 
$d_m = \frac{a_m+b_m}{2}$, then $\mathcal{M'}(x_1, \cdots, x_n) = \text{median}(y_1, \cdots, y_{n+1}, t_1, \cdots, t_n)$, where $t_i$ is the most preferred alternative of 
agent $i$ in the range of $\mathcal{M'}$.\footnote{Note that we have only defined it for profiles which satisfy the property mentioned above since we will only be using such 
profiles in this proof; the interested reader can refer to \cite[Definition 7]{masso11} for a complete definition of disturbed medians.}

Given the partial description of disturbed medians mentioned above, let us consider the mechanism $\mathcal{M}$ that is anonymous and minimax dominant. First, note that if we 
restrict ourselves to profiles where every agent's report is a single point (instead of intervals as in our setting), then we know from the result of 
\citeauthor{masso11} that $\mathcal{M}$ must be a disturbed median \cite[Corollary 1]{masso11}. Second, note that for each $y_i$, $\mathcal{M}(y_i, \cdots, y_i) = 
\text{median}(y_1, \cdots, \allowbreak y_{n+1}, \allowbreak y_i, \cdots, y_i) = y_i$. Therefore, each $y_i$ belongs to $A$, the range of $\mathcal{M}$. Additionally, also note 
that $y_1$ and $y_{n+1}$ are the minimum and maximum elements, respectively, in $A$, for if not then for a $g \in A$ that is either less than $y_1$ or greater than $y_{n+1}$, 
$\text{median}(y_1, \cdots, y_{n+1}, \allowbreak g, \cdots, g) \neq g$, which is impossible if $g \in A$ and $\mathcal{M}$ is minimax dominant. 

Next, given the observations above, consider the following two cases.

\textbf{Case 1: there exists at most 2 points in $A$ that are greater than $y_2$.} In this case, let us consider the points $y_1$ and $y_2$. If $y_2 - y_1 \geq 
\frac{3\delta}{2}$, then consider the profile $\mathcal{L}_0$ where $n-1$ agents report $y_1$ and the last agent reports $y_2$, i.e., $\mathcal{L}_0 = (y_1, \cdots, y_1, 
y_2)$. Let $p_0 = \mathcal{M}(\mathcal{L}_0)$. From above we know that $p_0 = \text{median}(y_1, \cdots, y_{n+1}, \allowbreak y_1, \cdots, y_1, y_2) = y_2$ (since $y_2$ is 
the $(n+1)$th largest number). However, the minimax optimal solution associated with this profile is $y_1$ and now one can verify from the expressions in 
Lemma~\ref{clm:maxregret} that, for an appropriate choice of $n$, $\alpha\geq\maxRegret(p_0, \mathcal{I}) - \OMV_{AC}(\mathcal{I}) \geq \frac{3\delta}{4}$, where 
$\mathcal{I}$ is the instance associated with 
the profile $\mathcal{L}_0$. Hence, ``the either part'' of our lemma is true in this case. 

On the other hand, if $y_2 - y_1 < \frac{3\delta}{2}$, then consider the case when there are two points $g, g' \in A$ such that $y_2 < g' < g$ and the three points are 
consecutive in $A$ (the arguments below can be easily modified when there is only one or zero points greater than $y_2$). Note that if $g' -y_2 \geq \frac{3\delta}{2}$ or $g - 
g' \geq 
\frac{3\delta}{2}$, then we can construct profiles like in the case above to show that $\alpha \geq \frac{3\delta}{4}$. Therefore, let us assume that 
this is not the case. Now, this implies that if $d_1 = (y_1 - 0)$ and $d_2 = (B - g)$, then $B-0 = B - g + g - y_2 + y_2 - y_1 + y_1 - 0 < d_2 + \frac{3\delta}{2} + 
\frac{3\delta}{2} + \frac{3\delta}{2} + d_1$. However, this in 
turn implies that at least one of $d_1$ or $d_2$ is at least $\frac{B}{2} - \frac{9\delta}{4} \geq \frac{3\delta}{4}$. So, if we assume without loss of generality that 
$d_1 \geq \frac{3\delta}{4}$, then we can consider the profile $\mathcal{L}_0$ where all the agents report 0 (note that by our assumption on $d_1$, $0 \notin A$, because if 
so, then since $y_1$ is the minimum element in $A$, $y_1$ should be equal to 0), and let $p_1 = \mathcal{M}(\mathcal{L}_1)$. Given the fact that $p_1 \geq y_1$, this in turn 
implies that one can verify from the expressions in Lemma~\ref{clm:maxregret} that $\alpha\geq\maxRegret(p_1, \mathcal{I}) - \OMV_{AC}(\mathcal{I}) \geq \frac{3\delta}{4}$, 
where $\mathcal{I}$ is the instance associated with the profile $\mathcal{L}_1$. Hence, again, ``the either part'' of our lemma is true.     

\textbf{Case 2: there are at least 3 points in $A$ that are greater than $y_2$.} In this case, consider four consecutive points $g_1, g_2, g_3, g_4$ in $A$ such that $g_1 
\geq y_2$ and $g_1 < g_2 < g_3 < g_4$. Also, 
for $i\in[3]$, let $d_i$ denote the distance between the points $g_i$ and $g_{i+1}$. Below we will show that $\frac{d_1}{2} + d_2 + \frac{d_3}{2} \geq \delta$.

Now, to prove our claim, let us assume for the sake of contradiction that $\frac{d_1}{2} + d_2 + \frac{d_3}{2} < \delta$. Also, let us consider the 
largest $j$ such that $y_j \leq g_1$ (note that we have $2\leq j \leq n$), and let us consider the following input profile $\mathcal{L}_0$
\begin{equation*}
   \left(\underbrace{y_j, \cdots, y_j,}_{n-j\text{ agents}} [\ell, r], g_3, \underbrace{B, \cdots, B}_{j-2 \text{ agents}}\right),  
 \end{equation*}
 where $\ell = (g_1 + \frac{d_1}{2} - \gamma_1)$, $r = (g_3 + \frac{d_3}{2} + \gamma_2)$, $\gamma_1 + \gamma_2 < (\delta - (\frac{d_1}{2} + x_2 + \frac{d_3}{2}))$, $\gamma_1 
< \gamma_2$, $\gamma_1 < \frac{d_1}{2}$, and $\gamma_2 < \frac{d_3}{2}$.     

Let $p_0 = \mathcal{M}(\mathcal{L}_0)$ and let $a$ and $b$ be the agents who report $[\ell, r]$ and $g_3$, respectively, in $\mathcal{L}_0$. Next, consider the profile 
$\mathcal{L}_1$ where the only difference from $\mathcal{L}_0$ is that agent $a$ reports $\ell$ here instead of $[\ell, r]$, and let $p_1 = \mathcal{M}(\mathcal{L}_1)$. Now, 
given 
the fact that $g_1, g_2, g_3, g_4$ are consecutive points in $A$ and $g_1$ is the unique closest point to $\ell$ in $A$, we know that $g_1$ is the most preferred 
alternative of agent $a$ in the range of $\mathcal{M}$. Therefore, we have that $p_1 = \text{median}(y_1, \cdots, y_{n+1},\allowbreak y_j, \cdots,\allowbreak 
y_j,\allowbreak g_1,\allowbreak g_3, B, \cdots B) = g_1$ (since $g_1$ is the $(n+1)$th largest number). Using a similar line of reasoning one can see that if 
$\mathcal{L}_2$ denotes 
the profile where agent $a$ reports $r$ instead of $[\ell, r]$  and every other agent reports as in $\mathcal{L}_0$, then $p_2 = \mathcal{M}(\mathcal{L}_2) = 
\text{median}(y_1, \cdots, y_{n+1}, y_j, \cdots, y_j, g_4, g_3, B, \cdots B) = g_3$ (since here $g_3$ is the $(n+1)$th largest number). Now, since $p_1 = g_1$ and $p_2 = 
g_3$, this 
implies that $p_0 = g_2$, for if otherwise agent $a$ can deviate from $\mathcal{L}_0$ by reporting $g_2$ instead (it is easy to see that this reduces agent $a$'s maximum 
regret), thus violating the fact that $\mathcal{M}$ is minimax dominant. Given this, now consider the profile $\mathcal{L}_3$ which is the same as $\mathcal{L}_0$ except for 
the 
fact that agent $b$ reports $g_4$ instead of $g_3$. Here $p_3 = \mathcal{M}(\mathcal{L}_3)$ has to be equal to $g_3$ because one can see from the max.\ regret calculations 
associated with agent $a$ that $a$ has a lesser maximum regret for $g_3$ than for $g_2$ (this is because of our choice of appropriate $\gamma_1$ and $\gamma_2$). So 
if $p_3$ is not equal to $g_3$, then agent $a$ can move from $\mathcal{L}_3$ to $\mathcal{L}_4$ where $\mathcal{L}_4$ is the same as $\mathcal{L}_3$ except for agent $a$ 
reporting 
$g_3$ instead of $[\ell, r]$ (and this would be beneficial for $a$ since the output for $\mathcal{L}_4 = \text{median}(y_1, \cdots, 
y_{n+1},\allowbreak y_j, \cdots,\allowbreak y_j,\allowbreak g_3,\allowbreak g_4, B, \cdots B) = g_3$). However, now if $p_3 = 
g_3$, then this implies that agent $b$ can deviate from $\mathcal{L}_0$ (which, as discussed above, has an output $p_0 = g_2$) to $\mathcal{L}_3$, thus again violating the 
fact that $\mathcal{M}$ is minimax dominant. Hence, for this case, we have that $\frac{d_1}{2} + d_2 + \frac{d_3}{2} \geq \delta$.

Finally, combining all the cases above, we have our lemma. 
\end{proof}

Given the lemma above, the proof of our lower bound is straightforward. (Note that below we ignore mechanisms which have less than six points in their range 
as one can easily show that such mechanisms perform poorly.)

\begin{theorem} \label{thm:lb-ac}
  Given a $\delta \in (0, \frac{B}{6}]$, let $\mathcal{M}$ be a deterministic mechanism that has a finite range (of size at least six), is anonymous, and one that 
implements $\alpha$-$\OMV_{AC}$ in minimax dominant strategies for a $\delta$-uncertain-facility-location-game. Then, for any $\epsilon > 0$, $\alpha 
\geq \frac{\delta}{2} - \epsilon$. 
\end{theorem}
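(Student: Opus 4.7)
The plan is a proof by contradiction. Suppose toward contradiction that $\alpha < \frac{\delta}{2} - \epsilon$ for some $\epsilon > 0$. Since $\frac{\delta}{2} - \epsilon < \frac{3\delta}{4}$, Lemma~\ref{lem:finite-lem} supplies four consecutive points $g_1 < g_2 < g_3 < g_4$ in $A$ with $\frac{d_1}{2} + d_2 + \frac{d_3}{2} \geq \delta$, where $d_i = g_{i+1} - g_i$. From this, $(d_1+d_2) + (d_2+d_3) = d_1 + 2d_2 + d_3 \geq 2\delta$, so at least one of $d_1+d_2$ or $d_2+d_3$ is $\geq \delta$. Without loss of generality assume $d_2 + d_3 \geq \delta$.

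The candidate bad profile is the $2$-agent instance in which agent $1$ reports $I_1 = [g_2 - \delta/2,\, g_2 + \delta/2]$ and agent $2$ reports $I_2 = [g_4 - \delta/2,\, g_4 + \delta/2]$. Each interval has length exactly $\delta$, and because $d_2 + d_3 \geq \delta$, the intervals are disjoint (or merely touch). Using the $n=2$ identity $AC(I,p) - \min_{p'} AC(I,p') = \mathrm{dist}\bigl(p,\, [\min(x_1,x_2), \max(x_1,x_2)]\bigr)$, one checks that the midpoint $m := (g_2+g_4)/2$ satisfies $m \in [x_1, x_2]$ for every $(x_1,x_2) \in I_1 \times I_2$; hence $\maxRegret(m, \mathcal{I}) = 0$ and so $\OMV_{AC}(\mathcal{I}) = 0$ with $p_{opt} = m$. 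A straightforward case analysis gives $\maxRegret(g_2) = \maxRegret(g_4) = \delta/2$, $\maxRegret(p) \geq \delta/2$ for any $p \in A$ lying left of $g_2$ or right of $g_4$, and $\maxRegret(g_3) = \max\bigl(0,\, \delta/2 - d_2,\, \delta/2 - d_3\bigr)$. So unless the mechanism outputs $g_3$, the gap $\maxRegret(\cdot) - \OMV$ already exceeds $\delta/2$ and we contradict $\alpha < \delta/2 - \epsilon$.

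The residual hard case is when the mechanism outputs $g_3$ and both $d_2, d_3 \geq \delta/2 - \epsilon$ (otherwise $\maxRegret(g_3) \geq \delta/2 - \epsilon$ already); the structural inequality then forces $d_1 \geq \delta/2 - 2\epsilon$, so all three gaps are close to $\delta/2$. To break this case I would translate both intervals by a common small parameter $\tau$: the sweet spot $[g_2 + \tau + \delta/2, g_4 + \tau - \delta/2]$, and thus $p_{opt}$, slides continuously with $\tau$ while $\OMV$ remains $0$ throughout. The mechanism's output, confined to the finite set $A$, is a piecewise-constant function of $\tau$; at any discontinuity, the two adjacent outputs differ by at least the corresponding $d_i$, and one of them is at distance $\geq d_i/2$ from the new optimum. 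Combining this with $\max_i d_i \geq \delta/2$ (forced by $\frac{d_1}{2}+d_2+\frac{d_3}{2}\geq \delta$), one finds some $\tau^\ast$ for which the gap is at least $\delta/2 - \epsilon$, contradicting $\alpha < \delta/2 - \epsilon$.

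The principal obstacle is executing this last step cleanly: one must rule out the pathology that the mechanism, in the interval regime, continuously tracks the moving optimum by interleaving several $A$-points in a way consistent with anonymity and minimax dominance. A probably cleaner alternative, following the spirit of the proof of Lemma~\ref{lem:finite-lem}, is to augment the single-interval-agent profile $\mathcal{L}_0$ used there (anchored by $y_j$'s on the left and $B$'s on the right) with several additional interval agents reporting the same $[\ell, r]$. The same minimax-dominance deviation argument then still forces the output to be $g_2$, while the maximum regret at $g_2$ scales with the number of interval agents and can be driven to $\delta/2 - \epsilon$ with $\OMV$ staying close to zero, yielding the contradiction directly.
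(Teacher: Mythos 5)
There is a genuine gap. Your two-agent construction is set up correctly (the regret identity for $n=2$ and the values $\maxRegret(g_2)=\maxRegret(g_4)=\delta/2$, $\OMV_{AC}(\mathcal{I})=0$ all check out), but the entire burden of the proof then falls on the case where the mechanism outputs $g_3$, since with $d_2,d_3\geq\delta/2$ that output has \emph{zero} regret and yields no contradiction at all. Your proposed fix---sliding the intervals by $\tau$ and arguing about discontinuities of the output---is exactly the step you admit you cannot execute, and it is not a cosmetic difficulty: as written, that argument never actually invokes minimax dominance, and without it the claim is false, because a finite-range mechanism that simply outputs the grid point of $A$ nearest to $p_{opt}$ for each profile would track the moving optimum to within roughly $\max_i d_i/2$ and need never incur a gap of $\delta/2-\epsilon$. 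Any correct proof must use incentives to \emph{pin down} the output on the adversarial profile, not merely enumerate which outputs would be bad. A secondary (fixable) issue is that for $n=2$ the paper's own machinery only certifies gaps scaled by $\frac{2k+1}{2(k+1)}=\frac34$, so one should be careful that a two-agent instance really delivers the full $\delta/2-\epsilon$; your construction happens to avoid this because $\OMV=0$, but only in the cases you can close.

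The paper's route is different in precisely the place where yours breaks. It first uses the Mass\'o--Moreno characterization (disturbed medians) to determine the mechanism's output on profiles of \emph{exact} reports, anchors an initial profile $\mathcal{L}_0$ of point reports whose output is known to be $g_i$ (or $g_{i+1}$), and then converts agents one at a time from the point report to an interval report $[g_i,g_{i+1}-\gamma]$, arguing at each single-agent step that minimax dominance forces the output to remain unchanged (otherwise the deviating agent would strictly prefer to revert). After $k+1$ such steps the output is still $g_i$ while the minimax optimal solution has drifted to roughly $\frac{g_i+(2k+1)(g_{i+1}-\gamma)}{2(k+1)}$, and taking $n$ large and $\gamma$ small gives the $\frac{\delta}{2}-\epsilon$ gap. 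Your closing ``cleaner alternative'' of adding several interval agents to the Lemma~\ref{lem:finite-lem} profile is indeed the right instinct and is essentially what the paper does, but the one-agent-at-a-time deviation chain is the substantive content that your proposal leaves unproved.
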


%

\begin{proof}[Proof (sketch)]

Consider the mechanism $\mathcal{M}$ and let $A$ denote its range. From Lemma~\ref{lem:finite-lem} we know that either $\alpha \geq \frac{3\delta}{4}$ or there exists four 
consecutive points $g_1, g_2, g_3, g_4 \in A $ such that $g_1 < g_2 < g_3 < g_4$ and $\frac{d_1}{2} + d_2 + \frac{d_3}{2} \geq \delta$, where, for $i \in [3], d_i = g_{i+1} 
- g_i$. Since the former case results in a bound that is bigger than the one in the statement of our theorem, below we just consider the latter case where there exists $g_1$, 
$g_2, g_3, g_4$ satisfying the conditions stated above. Also, since $\mathcal{M}$ is minimax dominant and anonymous, we can again make use of \citeauthor{masso11}'s 
characterization result~\cite[Corollary 1]{masso11} as we did in Lemma~\ref{lem:finite-lem}.  

Now, since $\frac{d_1}{2} + d_2 + \frac{d_3}{2} \geq \delta$, we know that at least one of $d_1, d_2$ or $d_3$ is at least $\frac{\delta}{2}$. Also if any of $d_1, d_2$, 
or $d_3$ is at least $\delta$, then it is easy to construct profiles so as to achieve our bounds. Therefore, for the rest of the proof we assume that $d_c < \delta$, for all 
$c \in [3]$. So, given this, let $d_i \geq \frac{\delta}{2}$ for some $i \in [3]$, and let us consider the largest $j$ such that $y_j \leq g_i$ (again, like in the proof 
of Lemma~\ref{lem:finite-lem}, $j\leq n$). We have the following two cases, where $k = \lfloor \frac{n}{2} \rfloor$.

\textbf{Case 1: $j \geq n- k$.} Consider the profile $\mathcal{L}_0$ where $k+1$ agents report $g_i$ and the rest of the agents report $g_{i+1}$.  
Let $p_0 = \mathcal{M}(\mathcal{L}_0)$. Since $p_0 = \text{median}(y_1, \cdots, y_{n+1},\allowbreak g_{i}, \cdots,\allowbreak g_{i},\allowbreak g_{i+1}, \cdots g_{i+1})$, and 
since $j \geq n-k$, we have that $p_0 = g_i$.  Next, consider the profile $\mathcal{L}_1$, where the only change from $\mathcal{L}_0$ is that here agent 1 reports $[g_i, 
g_{i+1} - \gamma]$, where $0 < \gamma < \min(\frac{d_{i}}{2}, \frac{\delta}{2})$, instead of $g_i$. Let $p_1 = \mathcal{M}(\mathcal{L}_1)$. Now, if either $p_1 > g_i$ or $p_1 
\leq g_{i-1}$ (if such a point exists in $A$), it is easy to see that agent 1 will deviate to $\mathcal{L}_0$ since her maximum regret for the point $g_i$ is lesser in 
either of the cases. This in turn implies that $p_1 = g_i$. Continuing this way one can reason along the same lines that for $c \leq k+1$ the profile 
$\mathcal{L}_c$, where $\mathcal{L}_c$ is the same as $\mathcal{L}_{c-1}$ except for agent $c$ reporting $[g_i, g_{i+1} - \gamma]$ instead of $g_i$, the output $p_c$ 
associated with $\mathcal{L}_c$ is equal to $g_i$. 

Given the observations above, consider the profile $\mathcal{L}_{k+1}$ where the first $k+1$ agents report $[g_i, g_{i+1} - \gamma]$ and the rest of the agents report 
$g_{i+1}$. Now,
one can calculate using Algorithm~\ref{algo1} that the minimax optimal solution $p_{opt} \geq \frac{g_i+(2k+1)(g_{i+1} - \gamma)}{2(k+1)}$. Also, we know from above 
that $p_{k+1} = \mathcal{M}(\mathcal{L}_{k+1}) = g_i$. And so, using the fact that $d_i = g_{i+1} - g_i \geq \frac{\delta}{2}$ and that $p_{opt} - p_q \geq 
\frac{(2k+1)(g_{i+1} - 
g_i - \gamma)}{2(k+1)}$ we can now use the expressions from Lemma~\ref{clm:maxregret} to see that $\maxRegret(p_{k+1}, \mathcal{I}) - \OMV_{AC}(\mathcal{I}) \geq 
\frac{(2k+1)(g_{i+1} - g_i - \gamma)}{2(k+1)}$, which for an appropriately chosen value of $n$ and $\gamma$ is greater than or equal to $\frac{\delta}{2} - \epsilon$ for any 
$\epsilon > 0$. 

 \textbf{Case 2: $j < n- k$.} We can handle this similarly as in the previous case. In particular, consider the profile $\mathcal{L}_0$ where $k+1$ agents report $g_i$ 
and the rest of the agents report $g_{i+1}$. Let $p_0 = \mathcal{M}(\mathcal{L}_0)$. Since $p_0 = \text{median}(y_1, \cdots, y_{n+1},\allowbreak g_{i}, \cdots,\allowbreak 
g_{i},\allowbreak g_{i+1}, \cdots g_{i+1})$, and since $j < n-k$, we know that $p_0 = \min(g_{i+1}, y_{j+1})$ (as the (n+1)th largest number will be either $y_{j+1}$ or 
$g_{i+1}$). However, we know that $y_{j+1}$ is in the range of $A$ since $\mathcal{M}(y_{j+1}, \cdots, y_{j+1}) = \text{median}(y_1, \cdots, \allowbreak y_{n+1}, \allowbreak 
y_{j+1}, \cdots, y_{j+1}) = y_{j+1}$, and also that $g_i$ and $g_{i+1}$ are consecutive in $A$. Therefore, $y_{j+1} \geq g_{i+1}$, and we have that $p_0 = g_{i+1}$.  Next, 
consider the profile $\mathcal{L}_1$, where the only change from $\mathcal{L}_0$ is that here agent $(k+2)$ reports $[g_i + \gamma, g_{i+1}]$, where $0 < \gamma < 
\min(\frac{d_{i}}{2}, \frac{\delta}{2})$, instead of $g_{i+1}$. Let $p_1 = \mathcal{M}(\mathcal{L}_1)$. Now, if either $p_1 > g_{i+1}$ (if such a point exists in $A$) or $p_1 
\leq g_{i}$, it is easy to see that agent $(k+2)$ will deviate to $\mathcal{L}_0$ since her maximum regret for the point $g_{i+1}$ is lesser in 
either of the cases. This in turn implies that $p_1 = g_{i+2}$. Continuing this way one can reason along the same lines that for $c \geq k+2$ the profile 
$\mathcal{L}_{c-k-1}$, where $\mathcal{L}_{c-k-1}$ is the same as $\mathcal{L}_{c-k-2}$ except for agent $c$ reporting $[g_i + \gamma, g_{i+1}]$ instead of $g_{i+1}$, the 
output $p_c$ associated with $\mathcal{L}_c$ is equal to $g_{i+1}$.  
 
Given the observations above, consider the profile $\mathcal{L}_{n-k-1}$ where the last $n-k-1$ agents report $[g_i+\gamma, g_{i+1}]$ and the rest of the agents report 
$g_i$. Now, one can calculate using Algorithm~\ref{algo1} that the minimax optimal solution $p_{opt} = g_i$. Also, we know from above 
that $p_{n-k-1} = \mathcal{M}(\mathcal{L}_{n-k-1}) = g_{i+1}$. And so, using the fact that $d_i = g_{i+1} - g_i \geq \frac{\delta}{2}$, we can now use the expressions from 
Lemma~\ref{clm:maxregret} to see that $\maxRegret(p_{n-k-1}, \mathcal{I}) - \OMV_{AC}(\mathcal{I}) \geq \frac{(2k-2)(g_{i+1} - g_i - \gamma)}{2k}$, which for an 
appropriately chosen value of $n$ and $\gamma$ is greater than or equal to $\frac{\delta}{2} - \epsilon$ for any $\epsilon > 0$. 

Finally, combining the two cases above, we have our lower bound.
\end{proof}

\section{Implementing the maximum cost objective} \label{sec:maxCost}
In this section we turn our attention to the objective of minimizing the maximum cost (sometimes succinctly referred to as maxCost and written as MC) which is another 
well-studied 
objective function in the context of the facility location problem. In the standard setting where the reports are exact, the maximum cost associated with locating a facility 
at $p$ is defined as $\max_{i \in [n]} C(x_i, p)$ and if we assume without loss of generality that the $x_i$s are in sorted order, then one can easily see that the optimal 
solution to this objective is to locate the facility at $p = \frac{x_1 + x_n}{2}$. However, unlike in the case of the average cost objective 
that was considered in Section~\ref{sec:avgCost}, one cannot design an optimal strategyproof mechanism even when the reports are exact, and it is known that the 
best one can do in terms of additive approximation is to achieve a bound of $\frac{B}{4}$ in the case of deterministic mechanisms and $\frac{B}{6}$ in the case of randomized 
mechanisms \citep[Theorems 5, 15]{golomb17}. 

Now, coming to our setting, unlike in the case of the average cost objective, calculating the minimax optimal solution is straightforward in this case. In fact, given the 
candidate locations $[a_i, b_i]$ for all $i$, if $L_1, \cdots, L_{n}$ and $R_1, \cdots, R_n$ denote the sorted order of the points in $\{a_i\}_{i \in [n]}$ and $\{b_i\}_{i 
\in [n]}$, respectively, then it is not too hard to show that the minimax optimal solution is the point $\frac{L_1 + R_1 + L_n + R_n}{4}$ (a complete discussion on 
how to find the minimax optimal solution when using the maximum cost objective is in Appendix~\ref{app:mos-mc}). Therefore, below we 
directly move on to implementation using the solution concepts defined in Section~\ref{sec:solutionConcepts}.  

\subsection{Implementation in very weakly dominant strategies}
In the case of the maximum cost objective we again see that very weak dominance is too strong a solution concept as even here it turns out that we cannot do any better than 
the naive mechanism which always outputs the point $\frac{B}{2}$ as the solution. The following theorem, which can be proved by proceeding exactly like in the proof of 
Theorem~\ref{thm:a-vwd}, formalizes this statement.  

\begin{theorem} \label{thm:a-vwd-mc}
Given a $\delta \in (0, B]$, let $\mathcal{M}= (X, F)$ be a deterministic mechanism that implements $\alpha$-$\OMV_{MC}$ in very weakly 
dominant strategies for a $\delta$-uncertain-facility-location-game. Then, $\alpha \geq \frac{B}{2}$. 
\end{theorem}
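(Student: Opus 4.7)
The plan is to mirror the proof of Theorem~\ref{thm:a-vwd} almost verbatim, changing only the final regret calculation so that it applies to the maximum cost objective rather than the average cost objective. I would begin by assuming for contradiction that $\alpha = \frac{B}{2} - \gamma$ for some $\gamma > 0$ and, as in the earlier proof, invoke the revelation principle for very weak dominance \citep[Lemma A.2]{chiesa14} to restrict attention to direct mechanisms.

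The ``invariance'' core of the earlier proof transfers without change, because it uses only the individual cost $C(\ell, p) = |\ell - p|$ and the defining inequalities of very weak dominance, neither of which depends on the aggregation rule. Specifically, if agent $i$'s true candidate interval is $[a_i, b_i]$ and the mechanism outputs $p$, while misreporting $[a_i', b_i']$ (with all other agents fixed) produces $p'$, then applying very weak dominance in both directions forces $|k - p| = |k - p'|$ for every $k \in [a_i, b_i] \cap [a_i', b_i']$, and hence $p = p'$ whenever the overlap contains more than a single point. Using this, I would reproduce the chain of single-agent shifts in Equation~(\ref{eqn:a-vwd4}) to conclude that, for suitable $\epsilon \in (0, \min\{\delta, \gamma\})$ and $\epsilon_1 \in (0, \epsilon)$, the mechanism returns the same point $p$ on the profile where every agent reports $[0, \epsilon]$ as on the profile where every agent reports $[B - \epsilon_1, B]$.

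All that is left is the regret calculation for the maximum cost objective. Setting $\mathcal{I} = [B-\epsilon_1, B] \times \cdots \times [B-\epsilon_1, B]$ and $\mathcal{I}' = [0, \epsilon] \times \cdots \times [0, \epsilon]$, a short computation gives that the minimax optimal solution for $\mathcal{I}$ is $B - \epsilon_1/2$ with $\OMV_{MC}(\mathcal{I}) = \epsilon_1/2$ (witnessed by the all-$B$ realization), and symmetrically $\OMV_{MC}(\mathcal{I}') = \epsilon/2$. Assuming without loss of generality that $p \leq (B + \epsilon/2 - \epsilon_1/2)/2$, I evaluate the regret of $p$ at the all-$B$ realization of $\mathcal{I}$, where the adversary-optimal facility is $B$ itself (cost zero), so the regret of $p$ there equals $B - p$. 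This yields
\begin{equation*}
\alpha \;\geq\; \maxRegret(p, \mathcal{I}) - \OMV_{MC}(\mathcal{I}) \;\geq\; (B - p) - \epsilon_1/2 \;\geq\; B/2 - \epsilon/4 - \epsilon_1/4 \;>\; B/2 - \gamma,
\end{equation*}
the required contradiction; the symmetric case $p > (B + \epsilon/2 - \epsilon_1/2)/2$ is handled identically by evaluating the regret at the all-$0$ realization of $\mathcal{I}'$. Since the invariance argument is entirely borrowed from Theorem~\ref{thm:a-vwd} and the new regret computation reduces to the trivial observation that the adversary can concentrate all agents at a single endpoint, I do not anticipate any real obstacle in this proof.
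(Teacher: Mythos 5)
Your proposal is correct and follows exactly the route the paper intends: the paper's own ``proof'' of Theorem~\ref{thm:a-vwd-mc} is just the one-line remark that one proceeds exactly as in Theorem~\ref{thm:a-vwd}, and you have correctly identified that the invariance/chaining argument is independent of the aggregation rule while only the final regret computation changes. Your max-cost computation (with $\OMV_{MC}(\mathcal{I}) = \epsilon_1/2$, regret $B-p$ at the all-$B$ realization, and the symmetric case at the all-$0$ realization) checks out.
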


Given the negative result, we move on to implementation in minimax dominant strategies in the hope of getting an analogous positive result as Theorem~\ref{thm:deltmedian}.

\subsection{Implementation in minimax dominant strategies}

When it comes to implementation in minimax dominant strategies, we again see that even in the case of the maxCost objective function one can do a lot better under this 
solution concept than under very weak dominance. But before we see the exact bounds one can obtain here, recall that Theorem~\ref{prop-onto} rules out the existence of 
mechanisms that are anonymous, minimax dominant, and onto. Hence, our focus will be on non-onto mechanisms. We note that the ideas in the following section can 
be broadly described as being similar to the ones in Section~\ref{sec:mdAC} since here, too, we focus on similar ``grid-based'' mechanisms. 

\subsubsection{Non-onto mechanisms}
In this section we show that there exists a mechanism, $\frac{\delta}{2}$-equispaced-phantom-half, that implements $\left(\frac{B}{4}+\frac{3\delta}{8}\right)$-$\OMV_{MC}$ 
in 
minimax dominant strategies. The mechanism is similar to the $\frac{\delta}{2}$-equispaced-median mechanism shown in Algorithm~\ref{algo:mech} and can be considered as an 
extension to the phantom-half mechanism proposed by \citet{golomb17}. Hence, we only highlight the changes in the description below. 

\textbf{$\frac{\delta}{2}$-equispaced-phantom-half.} Consider the mechanism described in Algorithm~\ref{algo:mech}. We need to make only two changes: i) instead of the 
definition of $A$ used in Algorithm~\ref{algo:mech}, we define it to be the set $\{g_1, \cdots, g_j, \cdots, g_m\}$, where $g_j = \frac{B}{2}, g_{i+1} - g_{i} = 
\frac{\delta}{2}$, for $1 \leq i \leq k-1$, $g_0 \geq 0$, and $g_m \leq B$. ii) instead of returning the median of the $l_i$s in line~\ref{step:med}, we return the 
median of the points $\ell_{min}, \frac{B}{2},$ and $\ell_{max}$, where $\ell_{min} = \min_i\{\ell_i\}$ and $\ell_{max} = \max_i\{\ell_i\}$.  

Below, we show that the $\frac{\delta}{2}$-equispaced-phantom-half mechanism described above implements $\left(\frac{B}{4}+\frac{3\delta}{8}\right)$-$\OMV_{MC}$ in 
minimax dominant strategies.

\begin{theorem} \label{thm:phantomHalf}
Given a $\delta \in [0, \frac{2B}{3}]$, the $\frac{\delta}{2}$-equispaced-phantom-half mechanism is anonymous and one that implements 
$\left(\frac{B}{4}+\frac{3\delta}{8}\right)$-$\OMV_{MC}$ in minimax dominant strategies for a $\delta$-uncertain-facility-location-game.
\end{theorem}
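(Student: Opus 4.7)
I plan to follow the same structure as Theorem~\ref{thm:deltmedian}: first establish that the mechanism is minimax dominant, then bound the additive approximation error.

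For minimax dominance, I would reuse Lemma~\ref{lem:md}'s analysis directly. Since the per-agent $\ell_i$ assignment is inherited verbatim from Algorithm~\ref{algo:mech}, its case analysis still shows that $\ell_i$ is the point of the range $A$ minimizing agent $i$'s individual maximum regret. To transfer this to the phantom-half aggregator, I would fix the other agents' $\ell_{-i}$ and view $f(\ell_i) = \text{median}(\min_j \ell_j, B/2, \max_j \ell_j)$ as a function of $\ell_i$: $f$ is constant while $\ell_i$ lies strictly between $\min_{j \neq i} \ell_j$ and $\max_{j \neq i} \ell_j$, and outside this range $f$ tracks $\ell_i$ monotonically subject to clipping at the phantom $B/2$. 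Consequently, any deviation by agent $i$ either preserves the outcome (so the agent is indifferent) or shifts it to another point of $A$, which by the optimality of $\ell_i$ can only be weakly worse in max regret. A subcase analysis in the flavor of Lemma~\ref{lem:md} (split on whether $\ell_i$ equals $\ell_{\min}$, $\ell_{\max}$, or is strictly interior) then settles this part.

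For the approximation bound, I would first simplify $\maxRegret$ for MC along the lines of Appendix~\ref{app:mos-mc}, obtaining
\begin{equation*}
\maxRegret(p, \mathcal{I}) = \max\bigl(|p - m_1|,\ |p - m_2|\bigr), \qquad m_1 := \tfrac{L_1 + L_n}{2}, \quad m_2 := \tfrac{R_1 + R_n}{2},
\end{equation*}
so $\OMV_{MC}(\mathcal{I}) = (m_2 - m_1)/2$, $p_{opt} = (m_1 + m_2)/2 = (L_1 + R_1 + L_n + R_n)/4$, and the quantity to bound reduces to $|p - p_{opt}|$. From the four branches of the $\ell_i$-assignment one can deduce the per-agent envelopes $a_i \in [\ell_i - 3\delta/4,\ \ell_i + \delta/4]$ and $b_i \in [\ell_i - \delta/4,\ \ell_i + 3\delta/4]$; taking min/max over agents then gives
\begin{equation*}
p_{opt} \in \bigl[\tfrac{\ell_{\min}+\ell_{\max}}{2} - \tfrac{\delta}{2},\ \tfrac{\ell_{\min}+\ell_{\max}}{2} + \tfrac{\delta}{2}\bigr].
\end{equation*}
A case split on the position of $B/2$ relative to $\ell_{\min}, \ell_{\max}$ then pins $p$ down: when $\ell_{\min} \leq B/2 \leq \ell_{\max}$ we have $p = B/2$ and the ranges $\ell_{\min}, \ell_{\max} \in [0, B]$ combined with the envelope yield $|p - p_{opt}| \leq B/4 + 3\delta/8$; in the other two cases $p$ equals $\ell_{\min}$ or $\ell_{\max}$, whence $|p - (\ell_{\min}+\ell_{\max})/2| = (\ell_{\max}-\ell_{\min})/2 < B/4$ and the envelope again yields the stated bound.

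The hard part will be the final tightening, from the na\"ive $B/4 + \delta/2$ that the above reasoning readily gives down to the target $B/4 + 3\delta/8$. While individual deviations $|\ell_i - (a_i+b_i)/2|$ can reach $3\delta/8$ (attained in the size-2 branch where both $x_i, y_i$ lie in $[a_i, b_i]$), the simultaneous saturation of the upper bounds on $L_1 + R_1$ and $L_n + R_n$ is obstructed by the tie-breaking conventions in the definitions of $x_i, y_i$; carefully exploiting this together with the phantom $B/2$'s clipping role in the critical case cuts the slack from $\delta/2$ down to $3\delta/8$.
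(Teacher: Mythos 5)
Your skeleton matches the paper's: establish minimax dominance by transplanting the case analysis of Lemma~\ref{lem:md} to the new aggregator, reduce $\maxRegret(p,\mathcal{I})-\OMV_{MC}(\mathcal{I})$ to $|p-p_{opt}|$ with $p_{opt}=\frac{L_1+R_1+L_n+R_n}{4}$ via Lemma~\ref{clm:maxregret-mc}, and then split on whether $p$ is $\frac{B}{2}$, $\ell_{min}$, or $\ell_{max}$. Your per-agent envelopes and the resulting window $p_{opt}\in\bigl[\frac{\ell_{min}+\ell_{max}}{2}-\frac{\delta}{2},\ \frac{\ell_{min}+\ell_{max}}{2}+\frac{\delta}{2}\bigr]$ are correct. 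But as you concede, this only yields $\frac{B}{4}+\frac{\delta}{2}$, and the step that closes the gap to $\frac{3\delta}{8}$ is exactly the content of the theorem that is missing. Worse, your diagnosis of where the extra $\frac{\delta}{8}$ comes from is off target: the tie-breaking conventions in lines~\ref{step:defx}--\ref{step:defy} only resolve exact ties and do not improve the worst-case offsets $|x_i-a_i|,|y_i-b_i|\le\frac{\delta}{4}$, so they cannot be the source of the saving.

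What the paper actually does is treat the two pairs $(L_1,R_1)$ and $(L_n,R_n)$ \emph{asymmetrically} rather than symmetrically around $\frac{\ell_{min}+\ell_{max}}{2}$. For the direction $p_{opt}-p>0$ (say), it bounds the far pair by the absolute constraint $\frac{L_n+R_n}{4}\le\frac{B}{2}$ (costing no $\delta$ at all), and charges all the $\delta$-slack to the near pair via $\frac{R_1-L_1}{4}\le\frac{\delta}{4}$ (Claim~\ref{clm:LRdelta}) plus the single grid offset $\frac{L_1}{2}\le\frac{\ell_{min}+\delta/4}{2}$, for a total of $\frac{\delta}{4}+\frac{\delta}{8}=\frac{3\delta}{8}$; the $\frac{B}{4}$ then comes from $\ell_{min}\le\frac{B}{2}$ (resp.\ $\ell_{max}\ge\frac{B}{2}$). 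Moreover, in the cases $p=\ell_{min}$ and $p=\ell_{max}$ one direction of the deviation is bounded by $\frac{3\delta}{4}$, and it is only the hypothesis $\delta\le\frac{2B}{3}$ that lets this be absorbed into $\frac{B}{4}+\frac{3\delta}{8}$ --- your argument never invokes that hypothesis, which is a further sign that the tightening step is not actually in hand. So the proposal is a correct derivation of a weaker bound ($\frac{B}{4}+\frac{\delta}{2}$) together with an unproven, and as sketched unpromising, plan for the claimed constant.
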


\begin{proof} [Proof (sketch)]
 The proof that the $\frac{\delta}{2}$-equispaced-phantom-half mechanism is minimax dominant is very similar to the proof of Lemma~\ref{lem:md} where we show that the 
$\frac{\delta}{2}$-equispaced-median mechanism is minimax dominant. Therefore, below we only show that it achieves an approximation bound of 
$(\frac{B}{4} + \frac{3\delta}{8})$.

To prove the bound, we begin by defining some notation and making a few observations. Let $\mathcal{L}_0 = \left([a_1, b_1], \cdots, [a_n, b_n] \right)$ be a profile of 
candidate locations of the agents and let $\{L_1, \cdots, L_n\}$ and $\{R_1, \cdots, R_n\}$ be the sorted order of the $a_i$s and $b_i$s, respectively. Also, let 
$\ell_i$ be the $\ell_i$ associated 
with the agent $i$ in the mechanism (see Algorithm~\ref{algo:mech}), $\ell_{min} = \min_i\{\ell_i\}$, and $\ell_{max} = \max_i\{\ell_i\}$. From the mechanism we know that 
$\ell_i \in [x_i, y_i]$, where $x_i$ and $y_i$ are as defined in the mechanism (see lines~\ref{step:defx}-\ref{step:defy} in Algorithm~\ref{algo:mech}). Additionally, for two 
agents $i$ and $j$, if $a_i \leq a_j$, then it is clear from the definition of $x_k$s (see line~\ref{step:defx} in Algorithm~\ref{algo:mech}) that $x_i \leq x_j$. Similarly, 
it again follows from the definition of $y_k$s that for two agent $i$ and $j$, if $b_i \leq b_j$, then $y_i \leq y_j$.

Next, consider the agent associated with $L_1$ (i.e., the agent who reports the smallest left endpoint). Without loss of generality we can assume that this agent is agent 1. 
From the discussion above we know that for every agent $i > 1$, we have that $\ell_i \geq x_i \geq x_1$ (since their left endpoint are at least $L_1$). Therefore, $\ell_{min} 
= \min_j\{\ell_j\} \geq x_1$, and since $x_1 \geq L_1 - 
\frac{\delta}{4}$ (this easily follows from the definition of $x_i$ and the fact that the points in $A$ are placed at a distance of $\frac{\delta}{2}$ apart), we have that 
$\ell_{min} \geq L_1 - \frac{\delta}{4}$. Now, one can employ a similar line of reasoning to see all of the following: $\ell_{max} \leq R_n + 
\frac{\delta}{4}$, $\ell_{min} \leq R_1 + \frac{\delta}{4}$, and $\ell_{max} \geq L_n - \frac{\delta}{4}$.

Given all the observations above, we are now ready to prove our bound. To do this, let us consider the following cases. In each of these cases, we will show that the 
output 
$p$ of the $\frac{\delta}{2}$-equispaced-phantom-half mechanism is at a distance of at most $(\frac{B}{4} + \frac{3\delta}{8})$ from $p_{opt}$, where from 
Proposition~\ref{prop:optSol-mc} we know that $p_{opt} = \frac{L_1 + R_1 + L_n + R_n}{4}$. 

\textbf{Case 1: $p = \frac{B}{2}$.} In this case, we have,
 \begin{align*}
 \abs{p_{opt} - p} &\leq \max\left(\frac{R_1 + R_{n}+ L_1 + L_{n}}{4} - p, p - \frac{R_1 + R_{n}+ L_1 + L_{n}}{4}\right)\\ 
  &= \max\left(\frac{R_1 - L_{1}}{4} + \frac{R_n + L_{n}}{4} + \frac{L_1}{2} - p, p - \frac{R_n}{2} + \frac{R_n - L_{n}}{4} - \frac{R_1 + L_{1}}{4}\right)\\
  &\leq \max\left(\frac{\delta}{4} + \frac{\ell_{min} + \frac{\delta}{4}}{2}, \frac{B}{2} - \frac{(\ell_{max} -\frac{\delta}{4})}{2} + \frac{\delta}{4}\right) 
\tag{{\scriptsize since  $p = \frac{B}{2}$,  $R_i - L_i \leq \delta$ (using Claim~\ref{clm:LRdelta}), $\ell_{min} \geq L_1 -\frac{\delta}{4}$, and $\ell_{max} \leq R_n + 
\frac{\delta}{4}$}}\\  
&\leq \max\left(\frac{3\delta}{8} + \frac{B}{4}, \frac{B}{4} + \frac{3\delta}{8}\right) \tag{{\scriptsize since $\ell_{min} \leq \frac{B}{2} \leq \ell_{max}$}}\\
&= \frac{B}{4} + \frac{3\delta}{8}
 \end{align*}

\textbf{Case 2: $p = \ell_{min}$.} In this case, we have,
 \begin{align*}
 \abs{p_{opt} - p} &\leq \max\left(\frac{R_1 + R_{n}+ L_1 + L_{n}}{4} - p, p - \frac{R_1 + R_{n}+ L_1 + L_{n}}{4}\right)\\ 
  &\leq \max\left(\frac{R_1 - L_{1}}{4} + \frac{R_n + L_{n}}{4} + \frac{L_1}{2} - p, p - \frac{R_1}{2} - \frac{L_1}{2}\right)
  \tag{{\scriptsize since $L_1 \leq L_n$ and $R_1 \leq R_n$}}\\
  &\leq \max\left(\frac{\delta}{4} + \frac{B}{2} + \frac{\delta}{8} - \frac{\ell_{min}}{2}, \frac{R_1}{2} - \frac{L_1}{2} + \frac{\delta}{4}\right) 
\tag{{\scriptsize since  $p = \ell_{min}$, $R_i - L_i \leq \delta$ (using Claim~\ref{clm:LRdelta}), $\ell_{min} \geq L_1 - \frac{\delta}{4}$, and $\ell_{min} \leq R_1 + 
\frac{\delta}{4}$}}\\  
&\leq \max\left(\frac{3\delta}{8} + \frac{B}{4}, \frac{3\delta}{4}\right) \tag{{\scriptsize since $\frac{B}{2} \leq \ell_{min} \leq \ell_{max}$}}\\
 &= \frac{B}{4} + \frac{3\delta}{8}  \tag{{\scriptsize since $\delta \leq \frac{2B}{3}$}}
 \end{align*}
 
\textbf{Case 3: $p = \ell_{max}$.} This can be handled analogously as in Case 2.

Finally, since from all the cases above we have that $\abs{p_{opt} - p} \leq (\frac{B}{4}+ \frac{3\delta}{8})$, one can easily see using 
Lemma~\ref{clm:maxregret-mc} that for $\mathcal{I} = [a_1, b_1] \times \cdots 
\times [a_n, b_n]$,  $\maxRegret(p, \mathcal{I}) - \text{OMV}_{MC}(\mathcal{I})$ is also bounded by $\left(\frac{B}{4}+\frac{3\delta}{8}\right)$.
\end{proof}

Given this result, it is natural to ask if we have a lower bound like the one in Section~\ref{sec:mdAC}. Unfortunately, the only answer we have is the obvious lower 
bound of $\frac{B}{4}$ that follows from the result of \citet[Theorem 15]{golomb17} who showed that under exact reports, and when using deterministic mechanisms, one cannot 
achieve a bound lower than $\frac{B}{4}$.

\section{Conclusions}

The standard assumption in mechanism design that the agents are precisely aware of their complete preferences may not be realistic in many situations. Hence, we 
believe that there is a need to look at models that account for partially informed agents and, at the same time, design mechanisms that provide robust guarantees. In this 
paper, we looked at such a model in the context of the classic single-facility location problem, where an agent 
specifies an interval instead of an exact location, and our focus was on designing robust mechanisms that perform well with respect to all the possible underlying true 
locations of the agents. Towards this end, we looked at two solution concepts, very weak dominance and minimax dominance, and 
we showed that, with respect to both the objective functions we considered, while it was not possible to achieve any good mechanism in the context of the former solution 
concept, extensions to the classical optimal mechanisms---i.e., mechanisms that perform optimally in the classical setting where the agents exactly know their 
locations---performed significantly better under the latter, weaker, solution concept.\footnote{In fact, if we 
consider an even weaker solution concept of undominated strategies (see \citep{chiesa12} for definitions), then it is not hard to show that the classical median and 
phantom-half mechanisms perform well under this solution concept (similar observations are also made, for instance, by \citet{chiesa15} with respect to the Vickrey mechanism 
in multi-unit auctions).} Our results are summarized in Table~\ref{tab1}.

There are some immediate open questions in the context of the problem we considered like looking at randomized mechanisms, providing tighter bounds, and potentially even 
finding deterministic mechanisms that perform better than the ones we showed. More broadly, we believe that it will be interesting to revisit the classic problems in 
mechanism 
design, see if one can look at models which take into account partially informed agents, and design mechanisms where one can explicitly relate the performance of the 
mechanism with the quality of preference information.

\printbibliography

\appendix

\section{Additional discussions} \label{app:addDiscussion}

\subsection{{Why regret?}} \label{app:whymR}

As stated in the introduction, our performance measure is based on minimizing the maximum regret. So, one question that could immediately arise is: 
``why regret?'' We argue below why this is a good measure by considering some alternatives.  

 \begin{enumerate}
  \item Perhaps one of the first approaches that comes to mind is to see if we can, for every possible input $I \in \mathcal{I}$, bound the ratio of the objective values of 
\emph{a)} the outcome that is returned by the mechanism and \emph{b)} the optimal outcome for that input. For instance, this is the approach taken by \citet{chiesa12} in the 
case of 
single 
good auctions. However, here this is not a good measure because we can quickly see that this ratio is always unbounded if there exists a point that is in the candidate 
locations of all the agents (i.e., if there is a $p \in [0, B]$ such that for all $i \in [n], p \in [a_i, b_i]$).

\item Another natural approach that could be taken is to show that for all possible inputs we can bound the difference between the objective values of \emph{a)} the outcome 
that is 
returned by the mechanism and \emph{b)} the optimal outcome for that input. For instance, this is the approach taken by \citet{chiesa15}. Technically, this is essentially 
what we 
are 
doing when using regret and finding an answer that has a max.\ regret that is additively close to the max.\ regret associated with the minimax optimal solution (one could 
argue in a similar way even when approximating multiplicatively---i.e., when finding an answer that has a max.\ regret that is multiplicatively close to the max.\ regret 
associated with the minimax optimal solution). The 
reason why using regret is more informative is because if we were to just mention that, for all $I \in \mathcal{I}$, the point $p$ that is returned 
by the mechanism satisfies,
\begin{equation*}
S(I,p) - S(I,p_{I}) \leq X, 
\end{equation*}
where $X$ is the bound we obtain, then the only information this conveys is that for every $I$ we are additively at most $X$-far from the optimal objective 
value, $p_{I}$, for $I$. However, instead, if we were to write it as 
\begin{equation*}
\maxRegret(p, \mathcal{I}) - \maxRegret (p_{opt}, \mathcal{I}) \leq Y, 
\end{equation*}
where $p_{opt}$ is the minimax optimal solution, then this conveys two things: a) for any point $p'$ there is at least one $I \in \mathcal{I}$ such that  
$S(I, p') - S(I, p_I) \geq Z$, where $Z =  \maxRegret (p_{opt}) = \text{optimal minimax value}$ (i.e., it gives us a sense on what is achievable at all---which in turn 
can be thought of as a natural lower bound) and b) the point $p$ that is chosen by the mechanism is at most $(Y+Z)$-far from the optimal objective value for any $I \in 
\mathcal{I}$. Hence, to convey these, we employ the notion of regret.  
 \end{enumerate}
 
 \subsection{Approximating additively vs multiplicatively} \label{app:addvsmul}
Even when working  with regret, when it comes to implementing a particular objective using some solution concept, one could potentially aim to find a solution $p$ such that 
$F = \frac{\maxRegret(p, \mathcal{I})}{\maxRegret(p_{opt},\mathcal{I})}$ is 
bounded (i.e., use a multiplicative approximation rather than additive). Although this is reasonable, there at at least two issues that become apparent: 
\begin{enumerate}
    \item When considering implementation in very weakly dominant strategies, it turns out that it is possible to show that there are no bounded mechanisms when using either 
of the objective functions (this is can be proved by proceeding like in the proof of Theorem~\ref{thm:a-vwd})
    
    \item When considering the objective of minimizing the maximum cost and minimax dominant strategies, it becomes very quickly clear that minimax dominant strategies are 
useless to look at as there are no bounded mechanisms. Why? Because suppose there was one. Then this implies that when the reports are exact---meaning every agent reports a 
single point---the mechanism should always return the optimal solution associated with this location profile, for if otherwise $F$ will not be bounded as minimum maximum 
regret when valuations are exact is zero. However, given that a minimax dominant mechanism is weakly dominant under exact reports, this in turn implies that we now have a 
mechanism that implements the optimal solution associated with the max.\ cost objective in weakly dominant strategies when the reports are 
exact. But then, we already know that there is no such mechanism due to a result by \citet[Theorem 3.2]{procaccia13}.
\end{enumerate}

Hence, we focus on additive approximations. 

\section{Revelation principle for minimax dominant strategies} \label{app:rev-pr}

Below we show that in the setting under consideration the revelation principle holds with respect to minimax dominant strategies.

\begin{lemma} \label{lem:rev-pr}
 Let $\mathcal{M}$ be a mechanism that implements a social choice function $f$ in minimax dominant strategies. Then, there exists a direct mechanism $\mathcal{M'}$ that 
implements $f$ and where for every agent $i$ reporting her candidate locations $K_i$ is a minimax dominant strategy.
\end{lemma}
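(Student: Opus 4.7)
The plan is to apply the canonical revelation-principle construction. Let $s = (s_1, \ldots, s_n)$ be the minimax dominant strategy profile that implements $f$ in $\mathcal{M} = (X, F)$, and define the direct mechanism $\mathcal{M}' = (\mathbb{K}, F')$ by
\begin{equation*}
F'(K_1, \ldots, K_n) = F(s_1(K_1), \ldots, s_n(K_n)).
\end{equation*}
Truthful reporting in $\mathcal{M}'$ reproduces $F(s_1(K_1), \ldots, s_n(K_n)) = f(K_1, \ldots, K_n)$, so $\mathcal{M}'$ implements $f$. What remains is to show that reporting one's true candidate locations is minimax dominant in $\mathcal{M}'$.

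Toward that end, fix agent $i$ with candidate locations $K_i$, an alternative pure strategy $\tau_i \colon \mathbb{K}_i \to \mathbb{K}_i$ in $\mathcal{M}'$, and a strategy profile $\tau_{-i}$ for the other agents in $\mathcal{M}'$. The key observation is that the map $\tau \mapsto s_i \circ \tau$ embeds the pure strategy space $\Sigma_i'$ of $\mathcal{M}'$ into $\Sigma_i$, and extends linearly to an embedding of $\Delta(\Sigma_i')$ into $\Delta(\Sigma_i)$. Outcomes are preserved by this embedding: $F'(\tau_i(K_i), \tau_{-i}(K_{-i})) = F(s_i(\tau_i(K_i)), s_{-i}(\tau_{-i}(K_{-i})))$, and similarly for the identity report and for any mixed deviation $\sigma_i$ of agent $i$. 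Consequently, the max regret in $\mathcal{M}'$ of truth-telling (resp.\ of $\tau_i$) rewrites as the max regret in $\mathcal{M}$ of $s_i$ (resp.\ of $s_i \circ \tau_i$) against the opponent profile $\tilde\tau_{-i} := s_{-i} \circ \tau_{-i}$, except that the inner $\max$ over mixed deviations is restricted to the embedded image of $\Delta(\Sigma_i')$ inside $\Delta(\Sigma_i)$. Applying the minimax dominance of $s_i$ in $\mathcal{M}$ with alternative $s_i \circ \tau_i$ and opponent profile $\tilde\tau_{-i}$ then provides the unrestricted form of the inequality we are after.

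The hard part, and the step that requires the most care, is that the embedding of $\Delta(\Sigma_i')$ into $\Delta(\Sigma_i)$ need not be surjective, so the inner $\max$ over deviations in the regret in $\mathcal{M}'$ is taken over a strictly smaller set than its counterpart in $\mathcal{M}$. Since the pool of alternative deviations appears on both sides of the desired inequality, one cannot simply quote the unrestricted inequality. The argument must verify that the restriction preserves the comparison, leveraging the fact that $s_i$ minimizes max regret globally in $\mathcal{M}$ and that the same restricted deviation pool is used for both candidates being compared in $\mathcal{M}'$.
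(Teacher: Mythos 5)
Your construction is the same as the paper's ($F'(K_1,\ldots,K_n) = F(s_1(K_1),\ldots,s_n(K_n))$, with deviations in $\mathcal{M}'$ pulled back to deviations $s_i\circ\tau_i$ in $\mathcal{M}$), and you correctly locate the one nontrivial step: the inner maximum over deviations in the regret is taken over $\Delta(\Sigma_i')$ embedded in $\Delta(\Sigma_i)$, which need not be all of $\Delta(\Sigma_i)$, so the unrestricted inequality cannot simply be quoted. But you then stop at announcing that ``the argument must verify that the restriction preserves the comparison,'' and the only justification you gesture at --- that the same restricted deviation pool appears on both sides of the inequality in $\mathcal{M}'$ --- does not work. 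The benchmark term $\min_{\sigma_i} C(\ell_i, F(\sigma_i(\cdot), s_{-i}(K_{-i})))$ depends on $\ell_i$ and sits inside $\max_{\ell_i \in K_i}$; replacing it by a pointwise larger function $B'(\ell_i) \geq B(\ell_i)$ shifts the two sides by different, $\ell_i$-dependent amounts, and an inequality of the form $\max_{\ell}[f(\ell)-B(\ell)] \leq \max_{\ell}[g(\ell)-B(\ell)]$ does not in general survive the substitution $B \mapsto B'$. So as written the proof has a genuine gap at exactly the step you flag as ``the hard part.''

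The missing idea, which is how the paper closes it, is that the two benchmarks are in fact \emph{equal}, not merely comparable. When the candidate set is the singleton $\{\ell_i\}$, minimax dominance of $s_i$ collapses to very weak dominance (the inner max over $\sigma_i$ is the same constant on both sides of the defining inequality), so $C(\ell_i, F(s_i(\ell_i), s_{-i}(K_{-i}))) \leq C(\ell_i, F(s_i'(\ell_i), s_{-i}(K_{-i})))$ for every $s_i'$. Since strategies are arbitrary functions, $\{\sigma_i(\ell_i) : \sigma_i \in \Delta(\Sigma_i)\}$ exhausts all mixed actions, so the best-in-hindsight outcome over \emph{all} of $\Delta(\Sigma_i)$ for true location $\ell_i$ is attained at $F(s_i(\ell_i), s_{-i}(K_{-i})) = F'(\ell_i, K_{-i})$ --- an outcome the agent can also reach in the direct mechanism by reporting the degenerate interval $\{\ell_i\} \in \mathbb{K}_i$. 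Hence the restricted and unrestricted inner maxima coincide (both equal $-C(\ell_i, F'(\ell_i, K_{-i}))$), and instantiating the minimax dominance of $s_i$ in $\mathcal{M}$ against the alternative $s_i \circ \tau_i$ yields exactly the inequality needed in $\mathcal{M}'$. You should add this singleton-collapse argument; without it the proof is not complete.
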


\begin{proof}
 Let $(s_1, \cdots, s_n)$ be the minimax dominant strategy in $\mathcal{M}$ such that $f(K_1, 
\cdots, K_n) = F(s_1(K_1), \cdots s_n(K_n))$, where $F(\cdot)$ is the outcome function associated with $\mathcal{M}$. Next, let us define the outcome function, $F'$, 
associated with $\mathcal{M}'$ as 
 \begin{equation} \label{eqn:rev-pr2}
  F'(K_1, \cdots, K_n) = F(s_1(K_1), \cdots s_n(K_n)).
 \end{equation}
 
 Now, using the fact that $(s_1, \cdots, s_n)$ is a minimax dominant strategy in $\mathcal{M}$, we have that $\forall K_i', \forall K_{-i}$,
 \begin{multline} \label{eqn:rev-pr}
  \max_{\ell_i \in K_i} \max_{\sigma_{i} 
\in \Delta(\Sigma_i)} C(\ell_i, F(s_i(K_i), s_{-i}(K_{-i})) - C(\ell_i, F(\sigma_i(\ell_i), 
s_{-i}(K_{-i}))\\ 
\leq  \max_{\ell_i \in K_i} \max_{\sigma_{i} \in \Delta(\Sigma_i)} 
C(\ell_i, F(s_i(K_i'), s_{-i}(K_{-i})) - C(\ell_i, F(\sigma_i(\ell_i), s_{-i}(K_{-i}))).
 \end{multline}
 
 Additionally, if $\ell_i$ is the true location of agent $i$, then, again, using the fact that $(s_1, \cdots, s_n)$ is a minimax dominant strategy, we have that $\forall s_i' 
\in \Sigma_i, \forall K_{-i}$,
  \begin{equation}
  C(\ell_i, F(s_i(\ell_i), s_{-i}(K_{-i})) \leq C(\ell_i, F(s'_i(\ell_i), s_{-i}(K_{-i})).
 \end{equation}
 
 Therefore, 
 \begin{equation*}
  \min_{\sigma_{i} \in \Delta(\Sigma_i)}C(\ell_i, F(\sigma_i(\ell_i), s_{-i}(K_{-i})) = C(\ell_i, F(s_i(\ell_i), s_{-i}(K_{-i})),     
 \end{equation*}
 and using this in Equation~\ref{eqn:rev-pr} we have that $\forall K_i', \forall K_{-i}$,
 \begin{multline} \label{eqn:rev-pr1}
 \max_{\ell_i \in K_i} C(\ell_i, F(s_i(K_i), s_{-i}(K_{-i}))) - C(\ell_i, F(s_i(\ell_i), s_{-i}(K_{-i})))\\ 
\leq  \max_{\ell_i \in K_i} C(\ell_i, F(s_i(K_i'), s_{-i}(K_{-i}))) - C(\ell_i, F(s_i(\ell_i), s_{-i}(K_{-i}))).
 \end{multline}
 
 This in turn implies that using Equation~\ref{eqn:rev-pr2} we have that $\forall K_i', \forall K_{-i}$,
 \begin{equation*} \label{eqn:rev-pr3}
\max_{\ell_i \in K_i} C(\ell_i, F'(K_i, K_{-i})) - C(\ell_i, F'(\ell_i, K_{-i})) 
\leq  \max_{\ell_i \in K_i} C(\ell_i, F'(K_i', K_{-i})) - C(\ell_i, F'(\ell_i, K_{-i})),
 \end{equation*} 
 or in other words that reporting the candidate locations $K_i$ is a minimax dominant strategy in $\mathcal{M}'$. 
\end{proof}

\section{Additional claims} \label{app:addclms}
 
 \begin{claim} \label{clm:LRdelta}
 For all $i \in [n]$, $L_i \leq R_i$. {Additionally, $R_i - L_i \leq \delta$.}
\end{claim}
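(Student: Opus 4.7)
The plan is to prove both parts by contradiction, using a simple pigeonhole-style argument on the sorted orders of left and right endpoints.

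For the first part, $L_i \leq R_i$, I would suppose toward contradiction that $L_i > R_i$ for some index $i$. Since $L_i, L_{i+1}, \ldots, L_n$ are all $\geq L_i$, there are at least $n - i + 1$ agents whose left endpoints exceed $R_i$. But for each such agent $j$, we have $b_j \geq a_j > R_i$, so there are at least $n - i + 1$ right endpoints strictly greater than $R_i$. This contradicts the fact that $R_i$ is the $i$-th smallest right endpoint (which permits at most $n - i$ strictly larger values). A small amount of care is needed to make the bijection between agents and sorted positions precise, but once agents are identified with their positions in the left-endpoint ordering, the counting is immediate.

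For the second part, $R_i - L_i \leq \delta$, I would again argue by contradiction. Suppose $R_i - L_i > \delta$. Consider the $i$ agents whose left endpoints occupy positions $1, \ldots, i$ in the sorted order of left endpoints; each such agent $j$ has $a_j \leq L_i$, and by the $\delta$-uncertain-facility-location-game hypothesis $b_j - a_j \leq \delta$, so $b_j \leq L_i + \delta < R_i$. This produces $i$ distinct agents whose right endpoints are strictly less than $R_i$. However, since $R_i$ is the $i$-th smallest right endpoint in the sorted list $R_1 \leq \cdots \leq R_n$, at most $i - 1$ of the $R_j$'s can be strictly less than $R_i$, giving a contradiction.

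I do not anticipate any real obstacle here; the only mild subtlety is handling possible ties in the sorted orders carefully, but the ``strictly less than'' versus ``less than or equal to'' distinctions above make the counting robust to ties. Both parts follow the same template: translate a violation at sorted position $i$ into a count of agents that exceeds the number of available slots on the other side.
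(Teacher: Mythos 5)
Your proof is correct. For the first part ($L_i \leq R_i$) your argument is essentially the paper's own: both of you observe that the $n-i+1$ agents contributing $L_i, \dots, L_n$ each contribute a right endpoint at least as large as their left endpoint, so at most $i-1$ right endpoints can lie strictly below $L_i$ (equivalently, at most $n-i$ can lie strictly above $R_i$). For the second part the routes diverge. The paper takes the \emph{smallest} index $j$ with $R_j - L_j > \delta$, tracks the right endpoint $b$ of the agent realizing $L_j$ to conclude $b < R_j$, then argues that some agent with a left endpoint among $L_1, \dots, L_{j-1}$ must own a right endpoint $b' \geq R_j$, and derives the contradiction $\delta < R_j - L_j \leq R_j - L_k \leq b' - L_k \leq \delta$. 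You instead reuse the same pigeonhole template as in the first part: if $R_i - L_i > \delta$, the $i$ agents holding the $i$ smallest left endpoints all have right endpoints at most $L_i + \delta < R_i$, which overfills the $i-1$ slots available strictly below the $i$-th smallest right endpoint. Your version is arguably cleaner: it avoids the minimal-counterexample device and the interval-chasing, needs no appeal to minimality at all, and makes both halves of the claim instances of one uniform counting argument; the paper's approach buys nothing extra here. Your attention to strict-versus-weak inequalities also makes the counting robust to ties, as you note.
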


\begin{proof}
Note that the statement is true if we show that for every $L_i$ there are at most $(i-1)$ values in $\{R_k\}_{k\in [n]}$ such that they have a value less than $L_i$ (this is 
enough as this 
would imply that $R_i \geq L_i$). To see why that is true, consider the numbers $L_i, \cdots, L_n$ which are left endpoints of the reports of some agents. We know that each 
of these have a right endpoint associated with them (i.e., a $b'$, where the input is of the form $[L_i, b']$)  that are greater than them. Now, these constitute $(n - i + 
1)$ numbers and since there are only $n$ in total, there can at most $i-1$ of them that have value less than $L_i$.

To see the second part of the claim, consider the smallest $j$ such that $R_j - L_j > \delta$. Now, if $b$ is the right endpoint associated with $L_j$, then the fact that 
$R_j - L_j > \delta$ implies that $b < R_j$ (because otherwise $R_j - L_j \leq b - L_j \leq \delta$). Additionally, this in turn also implies that for some $L_k$, where $k 
\in 
\{1, \cdots, j-1\}$, there is a right endpoint $b'$ associated with $L_k$ such that $b' \geq R_j$. However, since $j$ is the smallest index value such that $R_j - L_j > 
\delta$, we now have a contradiction as $\delta < R_j - L_j < R_j - L_k \leq 
b' - L_k \leq \delta$.
\end{proof}

\begin{claim} \label{clm:lInterval}
 For any input $I = (\ell_1, \cdots, \ell_{n})$, $\ell_i \in [L_i, R_i]$.
\end{claim}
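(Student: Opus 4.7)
The plan is to interpret the claim in the only way that makes the statement consistent with the definitions of $L_i$ and $R_i$: namely, that we first relabel (i.e.\ sort) the agents so that their true locations appear in nondecreasing order, and then $\ell_i$ denotes the $i$-th smallest true location of the profile $I$. Equivalently, one reads the claim as saying that the $i$-th order statistic of any feasible realization lies in $[L_i, R_i]$. Under this reading, I would split the claim into the two symmetric inequalities $L_i \leq \ell_i$ and $\ell_i \leq R_i$, each of which should follow from a short pigeonhole argument. No existing result from the paper is needed, although Claim~\ref{clm:LRdelta} confirms that the interval $[L_i, R_i]$ is nonempty.

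For the upper bound $\ell_i \leq R_i$, I would argue as follows. By definition, $R_i$ is the $i$-th smallest of the right endpoints $\{b_j\}_{j\in[n]}$, so there are at least $i$ agents $j$ whose right endpoints satisfy $b_j \leq R_i$. For each such agent, the feasibility condition $\ell_j \in [a_j,b_j]$ gives $\ell_j \leq b_j \leq R_i$. Hence at least $i$ of the true locations lie weakly below $R_i$, which forces the $i$-th order statistic to satisfy $\ell_i \leq R_i$.

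The lower bound $\ell_i \geq L_i$ is completely symmetric: by definition, $L_i$ is the $i$-th smallest of $\{a_j\}_{j\in[n]}$, so there are at least $n-i+1$ agents $j$ with $a_j \geq L_i$, and for each such agent $\ell_j \geq a_j \geq L_i$. Thus at least $n-i+1$ of the true locations lie weakly above $L_i$, which forces $\ell_i \geq L_i$. Combining the two bounds yields $\ell_i \in [L_i, R_i]$.

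The only real obstacle here is notational rather than mathematical: the claim as stated overloads the symbol $\ell_i$, which in the preliminaries denoted agent $i$'s own location in an unsorted profile. Since $L_i$ and $R_i$ are sorted quantities, the inequality $\ell_i \in [L_i,R_i]$ can only hold uniformly if $\ell_i$ is likewise interpreted as the $i$-th order statistic of the profile (equivalently, we reindex the agents by the order of their true locations). With this convention in place, the proof reduces to the two counting arguments above.
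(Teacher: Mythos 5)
Your proof is correct and follows essentially the same route as the paper's: both rely on the sorted-order convention for $I$ and the same pigeonhole count (at least $n-i+1$ agents have $a_j \geq L_i$, at least $i$ have $b_j \leq R_i$), with the paper phrasing the lower bound as a contradiction and leaving the upper bound as ``analogous'' while you argue both directions directly. Your reading of $\ell_i$ as the $i$-th order statistic is exactly the convention the paper states in its notation section, so there is no gap.
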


\begin{proof}
 Recall that, as mentioned under notations in Appendix~\ref{app:mos-ac}, the $\ell$s are in sorted order and $I$ is valid input if and only if there is a entry associated 
with every agent in 
it (i.e., for every agent 
$i$, $\exists j\colon \ell_j \in [a_i, b_i]$). Now, let us assume for the sake of contradiction that $\ell_i < L_i$. This implies that for $I$ to be a valid input, there has 
to be $(n-i+1)$ values in it that are 
greater than $\ell_i$---one from each agent $i$ who reported an interval $[a_i, b_i]$ such that $a_i\geq L_i$. However, this is not possible since there can only be at most 
$(n-i)$ values greater than $\ell_i$ as $\ell_i$ is the $i$th element in a sorted list.   

Similarly, one can argue analogously for the case when $\ell_i > R_i$. Therefore, $\ell_i \in [L_i, R_i]$.
\end{proof}
  
 \begin{claim} \label{clm:objeq}
 If $p_{opt}$ is a minimax optimal solution, then $obj_1^{AC}(p_{opt}) = obj_2^{AC}(p_{opt})$. 
\end{claim}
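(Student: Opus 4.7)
The plan is to exploit the opposite monotonicities of $obj_1^{AC}$ and $obj_2^{AC}$ on the interval $[L_{k+1}, R_{k+1}]$ that contains $p_{opt}$ by Lemma~\ref{clm:optInterval}, and then derive equality by a standard ``shift the point if one side is strictly larger'' argument.

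First, I would show that on $[L_{k+1}, R_{k+1}]$ the function $p \mapsto obj_1^{AC}(p)$ is continuous, piecewise linear, and non-increasing, while $p \mapsto obj_2^{AC}(p)$ is continuous, piecewise linear, and non-decreasing. Both claims come directly from the formulas in Lemma~\ref{clm:maxregret}: within any piece where the index set $\{i : R_i > p, i \leq k\}$ (respectively $\{i : L_i < p, i \geq k+2\}$) is fixed, the slope in $p$ is $-\tfrac{1}{n}(2(k-j+1) + (n-2k))$ (respectively $+\tfrac{1}{n}(2(h-k-1)+(n-2k))$), both of which are $\leq 0$ (respectively $\geq 0$). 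Continuity across breakpoints follows because the changing term vanishes exactly when the index set changes.

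Second, I would check the boundary behavior. At $p = L_{k+1}$ the set of active right-endpoints in $obj_1^{AC}$ is as large as possible, while the set of active left-endpoints in $obj_2^{AC}$ is as small as possible (so in fact $obj_2^{AC}(L_{k+1}) = 0$ if no $L_i$ with $i \geq k+2$ is strictly greater than $L_{k+1}$, and in general small); in particular $obj_1^{AC}(L_{k+1}) \geq obj_2^{AC}(L_{k+1})$. The symmetric inequality $obj_2^{AC}(R_{k+1}) \geq obj_1^{AC}(R_{k+1})$ holds by the same reasoning at the right endpoint. Combined with continuity and the opposite monotonicities, this forces the existence of a point $p^\star \in [L_{k+1}, R_{k+1}]$ with $obj_1^{AC}(p^\star) = obj_2^{AC}(p^\star)$.

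Third, I would close the argument by contradiction. Suppose $p_{opt}$ is minimax optimal but $obj_1^{AC}(p_{opt}) \neq obj_2^{AC}(p_{opt})$. Assume without loss of generality that $obj_1^{AC}(p_{opt}) > obj_2^{AC}(p_{opt})$, so $\maxRegret(p_{opt}, \mathcal{I}) = obj_1^{AC}(p_{opt})$. Since $obj_1^{AC}$ is continuous and non-increasing while $obj_2^{AC}$ is continuous and non-decreasing, and since strict inequality gives us slack, there exists $\epsilon > 0$ small enough that moving from $p_{opt}$ to $p_{opt} + \epsilon$ (which still lies in $[L_{k+1}, R_{k+1}]$, up to the boundary case handled below) strictly decreases $obj_1^{AC}$ while keeping $obj_2^{AC}$ below $obj_1^{AC}(p_{opt})$, contradicting optimality. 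The symmetric case uses a shift to $p_{opt} - \epsilon$. The only subtlety is the boundary case $p_{opt} = R_{k+1}$ with $obj_1^{AC}(p_{opt}) > obj_2^{AC}(p_{opt})$: but by the boundary observation $obj_1^{AC}(R_{k+1}) \leq obj_2^{AC}(R_{k+1})$, so this is impossible; the analogous remark handles $p_{opt} = L_{k+1}$. Thus $obj_1^{AC}(p_{opt}) = obj_2^{AC}(p_{opt})$, as desired.

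The main obstacle I anticipate is bookkeeping around the breakpoints where the index sets $j$ and $h$ in the definitions of $obj_1^{AC}$ and $obj_2^{AC}$ jump; verifying the one-sided derivatives there (to justify the ``shift by $\epsilon$'' step cleanly) is the part that requires some care, but it reduces to checking that the coefficient of $p$ in each piece has the claimed sign.
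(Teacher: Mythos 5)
Your proof is correct and takes essentially the same route as the paper's: both argue by contradiction that if $obj_1^{AC}(p_{opt}) \neq obj_2^{AC}(p_{opt})$, then the opposite monotonicities of the two expressions from Lemma~\ref{clm:maxregret} let one shift $p_{opt}$ slightly toward decreasing the larger objective and obtain a point of strictly smaller maximum regret. The paper merely instantiates your ``sufficiently small $\epsilon$'' with an explicit value, so the two arguments coincide in substance.
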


\begin{proof}
 For the sake of contradiction, let us assume without loss of generality that $obj_1^{AC}(p_{opt}) < obj_2^{AC}(p_{opt})$. Also, for the point $p_{opt}$, let $j$ be the 
smallest 
index such that $R_{j} > p_{opt}$ and $j\leq k$ (if no such $j$, then set $j = k+1$) and let $h$ be the largest index such that $L_{h} < p_{opt}$ and $h\geq k+2$ (if no 
such $h$, then set $h = k+1$). Note that if $n$ is even, then both $j$ and $h$ cannot be $k+1$, for if so then $obj_1^{AC}(p_{opt})$ would be equal to $obj_2^{AC}(p_{opt})$. 
Now, consider the point $p = p_{opt} - \epsilon$, where 
$\epsilon = \frac{n(obj_2^{AC}(p_{opt}) - obj_1^{AC}(p_{opt}))}{\max((n - 2j + 2), (2h-n))}$, and let us compute $obj_1^{AC}(p)$ and $obj_2^{AC}(p)$. 
\begin{align*}
 obj_1^{AC}(p) &= \frac{1}{n}\left(2\sum_{i = j'}^k (R_i - p) + (n-2k)(R_{k+1} - p)\right) \tag{{\scriptsize $j'$ the smallest index such that $R_{j'} > p$ and $j'\leq k$ (if 
no such $j'$, then set $j' = k+1$)}} \\
 &= \frac{1}{n}\left(2\sum_{i = j}^k (R_i - (p_{opt} - \epsilon)) + (n-2k)(R_{k+1} - (p_{opt} - \epsilon))\right) \tag{{\scriptsize {note that $j' = j$}}\footnotemark} \\
 &= obj_1^{AC}(p_{opt}) + \frac{1}{n}\epsilon (n-2j+2) \\
 &\leq \frac{obj_1^{AC}(p_{opt}) + obj_2^{AC}(p_{opt})}{2} \\
 &< obj_2^{AC}(p_{opt}). 
\end{align*}

\footnotetext{because if not then $p_{opt}$ is not optimal since we can move to $R_{j'}\neq R_j$ and it can be verified that this point has regret less than that of 
$p_{opt}$.}

Computing $ obj_2^{AC}(p)$ similarly we see that $obj_2^{AC}(p) < obj_2^{AC}(p_{opt})$. 
Now, since $obj_1^{AC}(p) < obj_2^{AC}(p_{opt})$ and $obj_2^{AC}(p) < obj_2^{AC}(p_{opt})$, this implies that $p_{opt}$ cannot be the minimax optimal solution 
as its maximum regret is larger than that of $p$'s.
\end{proof}

\begin{claim} \label{clm:atmost3inA}
In the $\frac{\delta}{2}$-equispaced-median mechanism, let $[a, b]$ be an interval of length at most $\delta$ and let $x$ and $y$ be the points that are 
closest (with ties being broken in favour of points in $[a, b]$ in both cases) to $a$ and $b$, respectively. Then, the interval $[x, y]$ can have at most 3 points 
that are also in $A$.    
\end{claim}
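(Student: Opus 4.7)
The plan is to exploit two facts: first, that $A$ has uniform spacing $\delta/2$, so the distance from any point in $[0,B]$ to its nearest element of $A$ is at most $\delta/4$; second, that since both $x$ and $y$ themselves lie in $A$, the length $y-x$ must be a non-negative integer multiple of $\delta/2$. Since $|[x,y]\cap A|\le 3$ is equivalent to $y - x \le \delta$ (three consecutive grid points span exactly $\delta$), it suffices to establish this bound.

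First I would verify that $y \ge x$. Because $a\le b$, the definitions of ``closest point'' force $x\le y$ once ties are properly resolved; the only delicate case is $a = b$, where $x=y$ trivially. Then I would apply the triangle inequality
\begin{equation*}
y - x \;\le\; |y-b|\,+\,(b-a)\,+\,|a-x|,
\end{equation*}
and plug in the three bounds $|a-x|\le \delta/4$, $|b-y|\le \delta/4$, and $b-a\le \delta$ to get $y - x \le 3\delta/2$.

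The main obstacle is to upgrade this to a \emph{strict} inequality $y - x < 3\delta/2$, because the loose bound $3\delta/2$ would permit four grid points. This is exactly where the tie-breaking rule on lines~\ref{step:defx}--\ref{step:defy} enters. I would split into four cases depending on whether there is a tie at $a$ (i.e.\ $a$ is the midpoint of two consecutive grid points), a tie at $b$, both, or neither. When no tie occurs at an endpoint, the corresponding distance bound is strict ($<\delta/4$), which already gives a strict overall inequality. When a tie does occur at $b$, the rule ``prefer a point in $[a,b]$, otherwise prefer the left point'' forces $y = b - \delta/4$, so $y - b = -\delta/4$ rather than $+\delta/4$, saving $\delta/2$. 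Analogously, when a tie occurs at $a$, the rule forces either $x = a + \delta/4$ (when $b - a \ge \delta/4$, so that the right neighbor lies in $[a,b]$) or $x = a - \delta/4$ together with $b - a < \delta/4$ (so the middle term in the triangle inequality is small); in each sub-case one verifies directly that $y-x < 3\delta/2$.

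Finally, since $y-x$ is a non-negative multiple of $\delta/2$ and $y - x < 3\delta/2$, we conclude $y - x \in \{0, \delta/2, \delta\}$, so $[x,y]\cap A$ contains at most the three grid points $x$, $x+\delta/2$, $x+\delta$. The case analysis driven by the tie-breaking rule is the only non-routine part; everything else is the triangle inequality and the integrality of $(y-x)/(\delta/2)$.
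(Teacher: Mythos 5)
Your proof is correct, and it reaches the conclusion by a genuinely different route than the paper. The paper argues by contradiction: it supposes two further grid points $x'<y'$ lie strictly inside $(x,y)$, notes this forces $x<a$ and $y>b$, and then telescopes
$2(b-a) > (y-b)+(y'-x')+(a-x)+(b-a) = (y-x)+(y'-x') = 2\delta$,
where the strict inequalities $b-y'>y-b$ and $x'-a>a-x$ come directly from $x,y$ being the \emph{closest} grid points under the stated tie-breaking; no explicit case analysis on ties and no use of the $\delta/4$ nearest-grid-point bound is needed. You instead prove the quantitative statement $y-x<\tfrac{3\delta}{2}$ directly via the triangle inequality and the $\delta/4$ bound, pay for strictness with a four-way case split on whether $a$ and/or $b$ sit exactly at a midpoint of the grid, and then invoke the integrality of $(y-x)/(\delta/2)$ to conclude $y-x\le\delta$. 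Your version is more explicit and yields the slightly stronger conclusion $y-x\le\delta$ as a reusable fact, at the cost of the tie case analysis that the paper's comparison-with-interior-points trick avoids. Two small points if you write this up: (i) the blanket claim that every point of $[0,B]$ is within $\delta/4$ of $A$ can fail to the right of $g_m$ (the grid need not reach $B$), but there $x=y=g_m$ and the claim is trivial, so you should dispose of that case first; (ii) your reduction ``at most $3$ points iff $y-x\le\delta$'' deserves the one-line justification that $x,y\in A$ and $A$ is equispaced, which you do give via the multiple-of-$\delta/2$ observation.
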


\begin{proof}
Suppose this were not the case and there existed two other points $x'< y'$ such that $x', y' \in A$ and $x', y' \in (x, y)$. Now, for this to happen $x$ has to be less than 
$a$ 
and $y$ has to be greater than $b$, for if otherwise then using the fact that $x$ and $y$ are the points that are the closest (with ties broken in favour of points in $[a, 
b]$) to $a$ and $b$, respectively, one can see that we can have only at most 3 points in $[x, y]$. However, this would imply that 
\begin{align*}
 2(b - a) &= (b- y' + y' - x' + x' - a) + (b-a)\\
 &> (y-b) + (y' - x') + (a-x) + (b-a) \tag{{\scriptsize $b - y' > y - b$ and $x' - a > a - x$, because $x, y$ are closest to $a, b$ and using the tie-breaking 
rule}}\\
 &= (y - x) + (y-x')\\
 &= 2\delta, \tag{{\scriptsize $y - x = y- y' + y' - x' + x' - x = \frac{3\delta}{2}$ as the distance between points in $A$ is $\frac{\delta}{2}$}}
\end{align*}
which in turn contradicts the assumption that $b - a \leq \delta$. 
\end{proof}

\begin{claim} \label{maxR@endp}
 Consider a mechanism $\mathcal{M} = (X, F)$ such that it is very weakly dominant for any agent to report her candidate locations if it is single point. Let $i$ be an agent 
with candidate locations $K_i = [a_i, b_i]$ and $p$ be the outcome of the mechanism for some profile of candidate locations $(K_i, K_{-i})$. Then, if $p_a$ and $p_b$ are the 
outcomes of $\mathcal{M}$ for the profiles $(a_i, K_{-i})$ and  $(b_i, K_{-i})$, respectively,
\begin{equation*}
 \maxRegret_i(p) = \max(|a_i-p| - |a_i-p_a|, |b_i-p| - |b_i-p_b|).
\end{equation*} 
\end{claim}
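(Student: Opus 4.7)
The plan is to rewrite $\maxRegret_i(p)$ in a more tractable form, then exploit a simple monotonicity property of the function $\ell \mapsto |\ell - p| - |\ell - q|$ for any fixed pair $p, q$. Write $F_i(\ell)$ for $F(\ell, K_{-i})$, the outcome when agent $i$ reports the single point $\ell$ and the other agents' reports are held fixed at $K_{-i}$. Because very weak dominance for single-point reports holds in $\mathcal{M}$, for any true location $\ell$ and any alternative report $K'$ we have $|\ell - F_i(\ell)| \leq |\ell - F(K', K_{-i})|$; hence the ``min over strategies'' inside the definition of $\maxRegret_i(p)$ is attained by reporting $\ell$, and
\begin{equation*}
\maxRegret_i(p) \;=\; \max_{\ell \in [a_i, b_i]} \bigl[\, |\ell - p| - |\ell - F_i(\ell)|\,\bigr].
\end{equation*}
The inequality $\maxRegret_i(p) \geq \max(|a_i-p|-|a_i-p_a|,\,|b_i-p|-|b_i-p_b|)$ is then immediate by plugging in $\ell = a_i$ and $\ell = b_i$, so the content of the claim is the reverse inequality.

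For the reverse direction, fix any $\ell \in [a_i, b_i]$ and set $q := F_i(\ell)$. Consider the auxiliary function $\phi(t) := |t - p| - |t - q|$ for $t \in \rr$. A short case analysis on the relative order of $p$ and $q$ shows that $\phi$ is piecewise linear with slopes in $\{-2, 0, 2\}$ and is \emph{monotonic} on all of $\rr$ (non-decreasing if $p \leq q$, non-increasing if $p \geq q$). In particular $\phi(\ell) \leq \max(\phi(a_i), \phi(b_i))$.

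To finish, I relate $\phi(a_i)$ to $g(a_i) := |a_i-p| - |a_i - p_a|$ and similarly for $b_i$. Here is where very weak dominance enters a second time: applied at the true single-point location $a_i$ with alternative report $\ell$, it yields $|a_i - p_a| = |a_i - F_i(a_i)| \leq |a_i - F_i(\ell)| = |a_i - q|$, so $\phi(a_i) = |a_i-p| - |a_i - q| \leq |a_i-p| - |a_i-p_a| = g(a_i)$; the symmetric argument gives $\phi(b_i) \leq g(b_i)$. Chaining these,
\begin{equation*}
|\ell - p| - |\ell - F_i(\ell)| \;=\; \phi(\ell) \;\leq\; \max(\phi(a_i), \phi(b_i)) \;\leq\; \max(g(a_i), g(b_i)),
\end{equation*}
and taking the max over $\ell$ yields the claim.

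The only step that requires any care is the monotonicity of $\phi$; everything else is a direct invocation of the VWD hypothesis on single-point reports. The slight subtlety to watch out for is that $F_i(\ell)$ varies with $\ell$ (so one cannot apply monotonicity to the original function $\ell \mapsto |\ell - p| - |\ell - F_i(\ell)|$ directly); the trick is to freeze $q = F_i(\ell)$ at the specific $\ell$ of interest, use monotonicity with this frozen $q$ to push $\ell$ to an endpoint, and then use VWD to replace $q$ by the endpoint-optimal response $p_a$ or $p_b$ at the cost only of an inequality in the favourable direction.
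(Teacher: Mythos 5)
Your proposal is correct and takes essentially the same route as the paper's proof: both first use very weak dominance for single-point reports to reduce $\maxRegret_i(p)$ to $\max_{\ell \in [a_i,b_i]} \bigl(|\ell - p| - |\ell - F_i(\ell)|\bigr)$, and both then push $\ell$ to an endpoint by combining a triangle-inequality fact with the VWD comparison $|a_i - p_a| \leq |a_i - F_i(\ell)|$ (resp.\ at $b_i$). The only difference is organizational: the paper argues by contradiction with a case split on whether the interior point lies left or right of $p$, whereas you package the endpoint reduction as monotonicity of $\phi(t) = |t-p| - |t-q|$ with $q$ frozen at $F_i(\ell)$ — a slightly cleaner but mathematically equivalent presentation.
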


\begin{proof}
From Equation~\ref{eqn:maxRi} we know that 
\begin{equation*}
 \maxRegret_i(p) = \max_{\ell_i \in K_i} \max_{\sigma_{i} \in \Delta(\Sigma_i)} C(\ell_i, p) - C(\ell_i, F(\sigma_i(K_i), s_{-i}(K_{-i}))),
\end{equation*}
where $s_{-i}(K_{-i})$ is some set of actions played by the others.

Now, since it very weakly dominant for any agent to report her true candidate locations if it is a single point in $\mathcal{M}$, we have that 
\begin{equation*}
 \maxRegret_i(p) = \max_{\ell_i \in K_i} C(\ell_i, p) - C(\ell_i, p_{\ell})
\end{equation*}
where $p_{\ell}$ is the outcome of $\mathcal{M}$ for the profile $(\ell, K_{-i})$.

Given this, let us assume that the claim is false. This implies that there exists a $c \in (a_i, b_i)$ such that $C(c, p) - C(\ell_i, p_{c}) > C(a_i, p) - C(a_i, p_{a})$ and  
$C(c, p) - C(\ell_i, p_{c}) > C(b_i, p) - C(b_i, p_{b})$. However, we will show that this is impossible in both the cases below.

\textbf{Case 1: $c \leq p$.} In this case, we have, 
\begin{align*}
 C(c, p) - C(\ell_i, p_{c}) &= |c - p| - |c - p_c|\\
 &= (p - a_i) + (a_i - c) - |c- p_c|\\
 &= (p-a_i) - |a_i - p_a| + (a_i - c) +  |a_i - p_a| - |c- p_c|\\
 &= C(a_i, p) - C(a_i, p_{a}) + (a_i - c) +  |a_i - p_a| - |c- p_c|\\
 &\leq  C(a_i, p) - C(a_i, p_{a}) + (a_i - c) +  |a_i - p_c| - |c- p_c| \tag{{\scriptsize using very weak dominance for single reports}}\\
 &\leq  C(a_i, p) - C(a_i, p_{a}).
\end{align*}

\textbf{Case 2: $c > p$.} We can handle this analogously as in Case 1 and show that $C(c, p) - C(\ell_i, p_{c}) \leq  C(b_i, p) - C(b_i, p_{b})$. 
\end{proof}

\begin{claim} \label{clm:mr-ac-bounds}
 Let $p_{opt}$ be the minimax optimal solution associated with $\mathcal{I} = [a_1, b_1] \times\cdots\times [a_n, b_n]$ for the avgCost objective. Then, if $p$ is a point 
such that $\abs{p - p_{opt}} = d$ and $k = \lfloor \frac{n}{2} \rfloor$, then 
\begin{equation*}
 \frac{n-2k}{n}  \cdot  d \leq \maxRegret(p, \mathcal{I}) - \OMV_{AC}({\mathcal{I}}) \leq d.
\end{equation*}
 \end{claim}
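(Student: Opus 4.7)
My plan is to leverage the closed-form expression for the maximum regret given by Lemma~\ref{clm:maxregret} together with the equality $obj_1^{AC}(p_{opt}) = obj_2^{AC}(p_{opt})$ from Claim~\ref{clm:objeq}, which in particular gives $\OMV_{AC}(\mathcal{I}) = obj_1^{AC}(p_{opt}) = obj_2^{AC}(p_{opt})$. The upper bound will come from a short Lipschitz argument, while the lower bound uses that $obj_2^{AC}$ (and symmetrically $obj_1^{AC}$) is a convex piecewise-linear function of $p$ whose slope is bounded below in absolute value by $\frac{n-2k}{n}$.

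For the upper bound, I would observe that the average-cost objective $S(I, q) = \frac{1}{n}\sum_i |q - \ell_i|$ is $1$-Lipschitz in $q$, since each term $|q - \ell_i|$ is. Letting $I^* \in \mathcal{I}$ be a profile attaining the maximum in the definition of $\maxRegret(p, \mathcal{I})$, one obtains
\begin{align*}
\maxRegret(p, \mathcal{I}) - \maxRegret(p_{opt}, \mathcal{I}) &\leq \bigl(S(I^*, p) - \min_{p'} S(I^*, p')\bigr) - \bigl(S(I^*, p_{opt}) - \min_{p'} S(I^*, p')\bigr) \\
&= S(I^*, p) - S(I^*, p_{opt}) \\
&\leq |p - p_{opt}| = d,
\end{align*}
which is the required upper bound.

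For the lower bound, I would assume without loss of generality that $p \geq p_{opt}$ (the case $p < p_{opt}$ is symmetric, using $obj_1^{AC}$ in place of $obj_2^{AC}$). By Lemma~\ref{clm:optInterval}, $p_{opt} \geq L_{k+1}$, so both points lie in $[L_{k+1}, B]$, where the formula for $obj_2^{AC}$ from Lemma~\ref{clm:maxregret} can be rewritten as the convex piecewise-linear function
\[ obj_2^{AC}(q) = \frac{2}{n}\sum_{i=k+2}^n \max(q - L_i, 0) + \frac{n-2k}{n}(q - L_{k+1}), \]
because the condition ``$L_i < q$ and $i \geq k+2$'' defining the cutoff $h$ picks out exactly the indices at which $\max(q - L_i, 0) > 0$. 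The right derivative of $obj_2^{AC}$ at any $q \geq L_{k+1}$ equals $\frac{2\,|\{i \geq k+2 : L_i \leq q\}|}{n} + \frac{n-2k}{n}$, which is at least $\frac{n-2k}{n}$, so $obj_2^{AC}(p) - obj_2^{AC}(p_{opt}) \geq \frac{n-2k}{n}(p - p_{opt}) = \frac{n-2k}{n}\,d$.

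Combining this with the bound $\maxRegret(p, \mathcal{I}) \geq obj_2^{AC}(p)$ from Lemma~\ref{clm:maxregret} and the identity $obj_2^{AC}(p_{opt}) = \OMV_{AC}(\mathcal{I})$ from Claim~\ref{clm:objeq}, I conclude $\maxRegret(p, \mathcal{I}) - \OMV_{AC}(\mathcal{I}) \geq \frac{n-2k}{n}\,d$, as required. The main delicate step is verifying the convex reformulation of $obj_2^{AC}$ and the derivative lower bound; for even $n$ the claimed lower bound is trivially zero since $n - 2k = 0$, while for odd $n$ it gives the stated $\frac{d}{n}$ slack.
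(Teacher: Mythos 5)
Your proposal is correct. The lower bound is essentially the paper's argument in mirror image: the paper takes $p<p_{opt}$ and observes that $obj_1^{AC}$ grows at rate at least $\tfrac{n-2k}{n}$ as one moves left of $p_{opt}$ (its terse ``set $j=k+1$'' step isolates the $(n-2k)(R_{k+1}-p)$ term), while you take $p\geq p_{opt}$ and make the same point for $obj_2^{AC}$, but you justify it more transparently by rewriting $obj_2^{AC}$ as a convex piecewise-linear function $\frac{2}{n}\sum_{i\geq k+2}\max(q-L_i,0)+\frac{n-2k}{n}(q-L_{k+1})$ with right derivative at least $\tfrac{n-2k}{n}$ on $[L_{k+1},B]$; this reformulation is valid (the cutoff index $h$ in Lemma~\ref{clm:maxregret} picks out exactly the indices with $\max(q-L_i,0)>0$, and $q\geq L_{k+1}$ keeps the linear term nonnegative), and your use of Lemma~\ref{clm:optInterval} to place both points in that region is the right hypothesis to invoke. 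Your upper bound, however, takes a genuinely different route: instead of the paper's term-by-term manipulation of the $obj_1^{AC}$ formula (splitting the sum at $j'$ and bounding $2k+(n-2k)=n$ contributions of size at most $d$ each), you observe that $S(I,\cdot)-\min_{p'}S(I,p')$ is $1$-Lipschitz for every fixed $I$, hence so is $\maxRegret(\cdot,\mathcal{I})$ as a pointwise maximum (attained on the compact set $\mathcal{I}$), giving $\maxRegret(p,\mathcal{I})-\maxRegret(p_{opt},\mathcal{I})\leq d$ immediately. This is shorter, sidesteps Lemma~\ref{clm:maxregret} entirely for that direction, and generalizes to any objective that is $1$-Lipschitz in the facility location, whereas the paper's computation is tied to the explicit average-cost formula. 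Both halves of your argument are sound.
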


\begin{proof}
If $\{L_i\}_{i \in [n]}$ and $\{R_i\}_{i \in [n]}$ denote the sorted order of the sets $\{a_i\}_{i \in [n]}$ and $\{b_i\}_{i \in [n]}$, respectively, then from 
Lemma~\ref{clm:maxregret} we know that for any point $p$ the maximum regret associated with $p$ can written as $\max(obj_1^{AC}(p),\allowbreak obj_2^{AC}(p))$, where
\begin{itemize}
 \item $obj_1^{AC}(p) = \frac{1}{n}\left(2\sum_{i = j}^k (R_i - p) + (n-2k)(R_{k+1} - p)\right)$, where $j$ is the smallest index such that $R_{j} > p$ and $j\leq k$ (if no 
such $j$, then set $j = k+1$)
 \item $obj_2^{AC}(p) = \frac{1}{n}\left(2\sum_{i = k+2}^h (p - L_i) + (n-2k) (p - L_{k+1})\right)$, where $h$ is the largest index such that $L_{h} < p$ and $h\geq k+2$ (if 
no such $h$, then set $h = k+1$).
\end{itemize}
Additionally, if $p_{opt}$ is the minimax optimal solution, we know from Claim~\ref{clm:objeq} that $\OMV_{AC}({\mathcal{I}}) = \allowbreak obj_1^{AC}(p_{opt})\allowbreak = 
obj_2^{AC}(p_{opt})$.

Now, let us assume without loss of generality that $p < p_{opt}$. Then, it clear that from above that $obj_1^{AC}(p) > obj_1^{AC}(p_{opt})$ and that $obj_2^{AC}(p) < 
obj_2^{AC}(p_{opt})$. So considering $obj_1^{AC}(p)$, we have 
\begin{align*}
 obj_1^{AC}(p) &= \frac{1}{n}\left(2\sum_{i = j}^k (R_i - p) + (n-2k)(R_{k+1} - p)\right)\\
   &= \frac{1}{n}\left(2\sum_{i = j}^{j'-1} (R_i - p) + 2\sum_{i = j'}^k (R_i - p)+  (n-2k)(R_{k+1} - p_{opt}) + (n-2k)(p_{opt} - p)\right) \tag{{\scriptsize where 
$j' \geq j$ is the smallest index such that $R_{j'} > p_{opt}$ }} \nonumber\\
&\leq \frac{1}{n}\left(2\sum_{i = j}^{j'-1} (p_{opt} - p) +  n \cdot obj_1^{AC}(p_{opt}) + 2\sum_{i = j'}^k (p_{opt} - p)  + (n-2k)(p_{opt} - p)\right) \tag{{\scriptsize 
since $R_i \leq p_{opt}$ for $i \leq j'-1$ and $n \cdot obj_1^{AC}(p_{opt}) = 2\sum_{i = j'}^k (R_i - p)+  (n-2k)(R_{k+1} - p_{opt})$ }} \nonumber\\
  &\leq \frac{1}{n}\left(2kd + (n-2k) d + n \cdot obj_1^{AC}(p_{opt})\right) \nonumber \tag{{\scriptsize $p_{opt} - p \leq d$ and $j \geq 1$}} 
\nonumber\\
  &= d + \OMV_{AC}({\mathcal{I}})  
\end{align*}

To see the upper bound, it is easy to see from above that one can set $j = k+1$ and hence we would have that $obj_1^{AC}(p) - \OMV_{AC}({\mathcal{I}}) \geq 
\frac{(n-2k)}{n}(p_{opt} - p) 
= \frac{(n-2k)}{n} \cdot d$.
\end{proof}

\section{\texorpdfstring{Minimax optimal solution for avgCost}{}} \label{app:mos-ac}
Given the candidate locations $K_i = [a_i, b_i]$ for all $i$, where, for some $\delta \in [0, B]$, $b_i - a_i \leq \delta$, in this section we are concerned with computing 
the minimax optimal solution $p_{opt}$ such that $p_{opt} = \argmin_{p \in [0, B]}\max_{I \in \mathcal{I}} \left(S(I, p) - \min_{p' \in {[0,B]}} S(I, p')\right)$, 
where $\mathcal{I} = [a_1, b_1] \times \cdots \times [a_n, b_n]$ and $S$ is the average cost function. Note that from the discussion in Section~\ref{sec:avgCost} we know that 
for any $I \in 
\mathcal{I}$, $\min_{p' \in {[0,B]}} S(I, p') = S(I, p_I)$, where $p_I$ is the median of the points in the vector $I$. Therefore, we can rewrite the definition of $p_{opt}$ 
as $p_{opt} = \argmin_{p \in [0, B]}\max_{I \in \mathcal{I}} S(I, p) - S(I, p_I)$. Next, before we move on further to computing $p_{opt}$, we introduce the following 
notation. 

\textbf{Notation:} Consider the left endpoints associated with all the agents, i.e., the set $\{a_i\}_{i \in [n]}$. We denote the sorted order of these points as 
$L_1, \cdots, L_{n}$ (throughout, by sorted order we mean sorted in non-decreasing order). Similarly, we denote the sorted order of the right endpoints, 
i.e., the points in the set $\{b_i\}_{i \in [n]}$, as $R_1, \cdots, R_{n}$. Additionally, for $i \in [n]$, we use $M_i$ to denote the mean of $L_i$ and 
$R_i$ (i.e., $M_i = \frac{L_i + R_i}{2}$). Throughout, for $\mathcal{I} = K_1\times\cdots\times K_n$, we refer to an element of $\mathcal{I}$ as an ``input'' and whenever we 
refer to an input $I \in \mathcal{I}$, where $I = (\ell_1, \cdots, \allowbreak \ell_{n})$, we assume without loss of generality that the $l_i$s are in sorted order (because 
the agents can always be re-indexed so that this is true). Also, given a point $p$, let $\mathcal{I}_1(p) \subseteq \mathcal{I}$ be the set of all inputs $I_1 = ( \ell_1, 
\cdots, \ell_{n})$ such that $\ell_{k+1} \geq p$. Similarly, let $\mathcal{I}_2(p) \subseteq \mathcal{I}$ be the set of all inputs $I_2 = ( \ell'_1, \cdots, \ell'_{n})$ such 
that $\ell'_{k+1} < p$. Often when the point $p$ is clear from the context, we write $\mathcal{I}_1$ and $\mathcal{I}_2$ to refer to $\mathcal{I}_1(p)$ and 
$\mathcal{I}_2(p)$, 
respectively.

Armed with the notation defined above, we can now prove the following lemma (which is also stated in Section~\ref{sec:avgCost}), which gives a concise formula to find the 
maximum regret associated with a point $p$ for the 
average cost 
objective. 

\maxRlemma*

\begin{proof}
 To prove this, consider the maximum regret associated with locating the facility at $p$ which is given by $\max_{I \in \mathcal{I}} S(I, p) - S(I, p_I)$. Given the fact 
that $\mathcal{I} = \mathcal{I}_1 \cup \mathcal{I}_2$, we can rewrite this as $\max(\max_{I_1 \in \mathcal{I}_1} S(I_1, p) - S(I_1, p_{I_1}), \max_{I_2 \in \mathcal{I}_2} 
S(I_2, p) - S(I_2, p_{I_2}))$. Now, let us consider each of these terms separately in the cases below.
 
 {\textbf{Case 1:} $\max_{I_1 \in \mathcal{I}_1} S(I_1, p) - S(I_1, p_{I_1})$.} Let us consider an arbitrary input $I_1=(\ell_1, \cdots, \allowbreak 
\ell_{n})$ that belongs to $\mathcal{I}_1$ (if $\mathcal{I}_1 = \emptyset$, then we define $\max_{I_1 \in \mathcal{I}_1} S(I_1, p) - S(I_1, p_{I_1}) = 0$). Now, the 
regret associated with $I_1$ is given by
 \begin{align*}
  \regret(p, I_1) &= S(I_1, p) - S(I_1, p_{I_1})\\
  &= \frac{1}{n}\left(\sum_{i=1}^{n} |\ell_i - p| - \left(\sum_{i=1}^{n} |\ell_i - \ell_{k+1}|\right)\right) \tag{{\scriptsize as $p_{I_1} = \ell_{k+1}$}}\\
  &= \frac{1}{n}\left(\sum_{i=1}^{k} |\ell_i - p| + \sum_{i=1}^{n-k} (\ell_{k+i} - p) - \left(\sum_{i=1}^{k} (\ell_{k+1} - \ell_i) + \sum_{i=2}^{n-k} (\ell_{k+i} - 
\ell_{k+1}) 
\right)\right) \tag{{\scriptsize as $\ell_{k+1} \geq p$ and $\ell_1\leq\cdots\leq\ell_{n}$}}\\
  &= \frac{1}{n}\left(\sum_{i=1}^{k} \left(|\ell_i - p| + \ell_i - p\right) + (n-2k) (\ell_{k+1} - p)\right)\\
  &= \frac{1}{n}\left(\sum_{i=1}^{j-1} \left((p - \ell_i) + \ell_i - p\right) +  \sum_{i=j}^{k} \left((\ell_i - p) + \ell_i - p\right)+  (n-2k)(\ell_{k+1} - p) 
\right) \tag{{\scriptsize where $j$ is the smallest index such that $l_j > p$ and $j\leq k$}}\\
   &= \frac{1}{n}\left(2\sum_{i=j}^{k} (\ell_i - p) + (n-2k)(\ell_{k+1} - p)\right)
 \end{align*}  
 
 Therefore, if $obj_1^{AC}(p) = \max_{I_1 \in \mathcal{I}_1} S(I_1, p) - S(I_1, p_{I_1})$, then we have that
 \begin{equation*}
   obj_1^{AC}(p) = \max_{I_1 \in \mathcal{I}_1} S(I_1, p) - S(I_1, p_{I_1}) = \max_{I_1 \in \mathcal{I}_1} \allowbreak \frac{1}{n}\left(2\sum_{i=j}^{k} (\ell_i - p) 
+ \allowbreak (n-2k)(\ell_{k+1} - p)\right).
 \end{equation*}

And so, now since $\ell_i \in [L_i, R_i]$ (this is proved in Claim~\ref{clm:lInterval}, which is in Appendix~\ref{app:addclms}), we have that 
\begin{equation*}
obj_1^{AC}(p) = \frac{1}{n}\left(2\sum_{i = j}^k (R_i - p) + (n-2k)(R_{k+1} - p)\right), 
\end{equation*}
where $j$ is smallest index such that $R_{j} > p$ and $j\leq k$.
 
 \textbf{Case 2:} $\max_{I_2 \in \mathcal{I}_2} S(I_2, p) - S(I_2, p_{I_2})$. Like in the previous case, consider an arbitrary input $I_2 = ( \ell'_1, \cdots, 
\allowbreak \ell'_{n})$ that belongs to $\mathcal{I}_2$ (if $\mathcal{I}_2 = \emptyset$, then we define $\max_{I_2 \in \mathcal{I}_2} S(I_2, p) - S(I_2, p_{I_2}) = 
0$). Now, doing a similar analysis as in Case 1, we will see that the regret associated with $I_2$ is given by $\frac{1}{n}\left(2\sum_{i = k+2}^h (p - \ell'_i) + (n-2k)(p - 
\ell'_{k+1})\right)$, 
where $h$ 
is the largest index such that $\ell_{h} < p$ and $h\geq k+2$. Therefore, $obj_2^{AC}(p) = \max_{I_2 \in \mathcal{I}_2} S(I_2, p) - S(I_2, p_{I_2}) = 
\frac{1}{n}\left(2\sum_{i 
= k+2}^h (p - 
L_i) + (n-2k)(p - L_{k+1})\right)$, where $h$ is largest index such that $L_{h} < p$ and $h\geq k+2$. 
 
 Hence, combining both the cases we have that the maximum regret associated with $p$ is $\max(obj_1^{AC}(p), obj_2^{AC}(p))$.
\end{proof}

Using the lemma proved above, one can show that for the minimax optimal solution, $p_{opt}$, $obj_1^{AC}(p_{opt}) \allowbreak = obj_2^{AC}(p_{opt})$ (this is proved in 
Claim~\ref{clm:objeq}, which is in Appendix~\ref{app:addclms}). And this observation in turn brings us to our next lemma (which is, again, also stated in 
Section~\ref{sec:avgCost}) which shows that 
$p_{opt}$ is always in the interval $[L_{k+1}, R_{k+1}]$. 

\optInterval*

\begin{proof}
 Let us assume for the sake of contradiction that $p_{opt} < L_{k+1}$ or $p_{opt} > R_{k+1}$. We will consider each of these cases separately and show that for both the cases 
$M_{k+1}$ has a lesser maximum regret, thus contradicting the fact that $p_{opt}$ is the minimax optimal solution.
 
 \textbf{Case 1: $p_{opt} < L_{k+1}$.} From Lemma~\ref{clm:maxregret} we know that $n \cdot obj_1^{AC}(M_{k+1}) = 2\sum_{i = j}^k (R_i - M_{k+1}) + (n-2k) (R_{k+1} - 
M_{k+1})$, where 
$j$ is 
smallest index such that $R_{j} > M_{k+1}$ and $j\leq k$ (if there is no such $j$, then set $j = k +1$). Now, consider the input $I_1 = (R_1, \cdots, R_{2k+1})$ that belongs 
to $\mathcal{I}_1(p_{opt})$ and let us 
calculate the regret associated with $p_{opt}$ for $I_1$. 
 \begin{align*}
  \regret(p_{opt}, I_1) &= \frac{1}{n}\left(2\sum_{i = j'}^k (R_i - p_{opt}) + (n-2k) (R_{k+1} - p_{opt})\right) \tag{{\scriptsize $j'$ is the smallest index such that $R_j' 
> 
p_{opt}$ and $j' \leq k$}}\\
  &\geq \frac{1}{n}\left(2\sum_{i = j}^k (R_i - p_{opt}) + (n-2k) (R_{k+1} - p_{opt})\right) \tag{{\scriptsize $j' \leq j$ since $R_j > M_{k+1} > p_{opt}$}}\\
  &> \frac{1}{n}\left(2\sum_{i = j}^k (R_i - L_{k+1}) + (n-2k) (R_{k+1} - p_{opt})\right) \tag{{\scriptsize as $p_{opt} < L_{k+1}$}}\\
  &> \frac{1}{n}\left(2\sum_{i = j}^k (R_i - (2M_{k+1} - R_{k+1})) + (n-2k)(R_{k+1} - (2M_{k+1} - R_{k+1}))\right) \tag{{\scriptsize as $L_{k+1} = 2M_{k+1} - R_{k+1}$}}\\
  &=obj_1^{AC}(M_{k+1}) + \frac{1}{n}(n-2j+2)(R_{k+1} - M_{k+1}).
 \end{align*}
 Now, since $p_{opt}$ is the optimal solution and $\maxRegret(p_{opt}, \mathcal{I}) = obj_1^{AC}(p_{opt})$ (from Claim~\ref{clm:objeq}), we have that $obj_1^{AC}(p_{opt}) 
\geq 
\regret(p_{opt}, I_1)$, which in turn implies that from above we have 
 \begin{equation} \label{eq:obj1}
  obj_1^{AC}(p_{opt}) > obj_1^{AC}(M_{k+1}) + \frac{1}{n}(n-2j+2)(R_{k+1} - M_{k+1}).
 \end{equation}

 Next, let us consider $obj_2^{AC}(p_{opt})$. Note that, given Claim~\ref{clm:lInterval} and since $p_{opt} < L_{k+1}$, $\mathcal{I}_2(p_{opt}) = \emptyset$. Therefore, 
$obj_2^{AC}(p_{opt}) = 0$, and so from Claim~\ref{clm:objeq} we know that since $p_{opt}$ is an optimal solution, $obj_1^{AC}(p_{opt}) = obj_2^{AC}(p_{opt}) = 0$. However, 
$j\leq k+1$, and 
hence from Equation~\ref{eq:obj1} we have that $obj_1^{AC}(p_{opt}) > 0$, thus in turn contradicting the fact that $p_{opt}$ is the minimax optimal solution.  

  \textbf{Case 2: $p_{opt} > R_{k+1}$.} We can do a similar analysis as in Case 1 to again show that this cannot be the case. 
  
 Hence, from both the cases above, we have that $p_{opt} \in [L_{k+1}, R_{k+1}]$.
\end{proof}

\begin{algorithm}[tb]
\noindent\fbox{%
\begin{varwidth}{\dimexpr\linewidth-4\fboxsep-4\fboxrule\relax}
\begin{algorithmic}[1]
 \footnotesize
  \Input For each agent $i$, their input interval $[a_i, b_i]$ 
  \Output Minimax optimal solution for the given instance
  \State $p_{opt} \leftarrow 0$, $mR \leftarrow \infty$
  \State Let $\{L_i\}_{i\in[n]}$ be the sorted order the points in $\{a_i\}_{i\in[n]}$ and $\{R_i\}_{i\in[n]}$ be the sorted order the points in $\{b_i\}_{i\in[n]}$
  \State Let $C = \{R_i, L_j | i\leq k+1, j\geq k+1, L_{k+1} < R_i, L_j < R_{k+1}\}$ and $H = \{h_1, \cdots, h_{|C|}\}$ be the sorted order of the points in $C$.  
  \For{each $h_i \in H$, where $i\in \{1, \cdots, |C|\}$}
    \State $x_i \leftarrow \#(R_j)$, where $R_j \geq h_i$ and $j\leq k$
    \State $y_i \leftarrow \#(L_j)$, where $L_j \leq h_i$ and $j\geq k+2$
    \State $S^1_i \leftarrow \sum (R_j)$, where $R_j \geq h_i$ and $j\leq k$
    \State $S^2_i \leftarrow \sum (L_j)$,where $L_j \leq h_i$ and $j\geq k+2$  
  \EndFor 
  \For{each $[h_i, h_{i+1}]$, where $i\in \{1, \cdots, |C|\}$}
    \If{$obj_1^{AC}(h_i) == obj_2^{AC}(h_i)$ and $ obj_1^{AC}(h_i) < mR$} \Comment{{\footnotesize Is $h_i$ a possible solution?}}
      \State $p_{opt} \leftarrow h_i, mR \leftarrow obj_1^{AC}(h_i)$
   \EndIf
    \If{$obj_1^{AC}(h_{i+1}) == obj_2^{AC}(h_{i+1})$ and $obj_1^{AC}(h_{i+1}) < mR$} \Comment{{\footnotesize Is $h_{i+1}$ a possible solution?}}
      \State $p_{opt} \leftarrow h_{i+1}, mR \leftarrow obj_1^{AC}(h_{i+1})$
   \EndIf 
   \State $p_i \leftarrow \frac{(n-2k) M_{k+1} + S^1_{i+1} + S^2_i}{x_{i+1} + y_i + (n-2k)}$ \Comment{{\footnotesize If $p_{opt} \in (h_i, h_{i+1})$, then use the fact that
$obj_1^{AC}(p_{opt}) = obj_2^{AC}(p_{opt})$}}
   \If{$p_i \in (h_i, h_{i+1})$ and $obj_1^{AC}(p_i) < mR$} \Comment{{\footnotesize Is $p_{i}$ feasible and is it a possible solution?}}
      \State $p_{opt} \leftarrow p_i, mR \leftarrow obj_1^{AC}(p_i)$
   \EndIf   
  \EndFor
  \State return $p_{opt}$ 
\end{algorithmic}
\end{varwidth}
}
\caption{Computing the minimax optimal solution.}
\label{algo1}
\end{algorithm}

Equipped with the lemmas proved above, we are now ready to compute the minimax optimal solution.

\begin{proposition} \label{prop:optSol}
 Algorithm~\ref{algo1} computes $p_{opt}$ in $O(n\log n)$ time.
\end{proposition}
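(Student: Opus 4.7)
The plan is to show that the algorithm's search reduces to an $O(n)$-sized set of candidate points that provably contains $p_{opt}$, and that the per-candidate work is constant after an $O(n\log n)$ preprocessing step.

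First, by Lemma~\ref{clm:optInterval} I may restrict the search to $p \in [L_{k+1}, R_{k+1}]$, and by Lemma~\ref{clm:maxregret} together with Claim~\ref{clm:objeq} the optimum satisfies $obj_1^{AC}(p_{opt}) = obj_2^{AC}(p_{opt})$. The key structural observation is that on this interval both $obj_1^{AC}(\cdot)$ and $obj_2^{AC}(\cdot)$ are piecewise linear and monotone (non-increasing and non-decreasing, respectively). The slope of $obj_1^{AC}$ changes exactly at those points $R_i$ with $i \le k+1$ lying in $[L_{k+1}, R_{k+1}]$, since these are precisely the values of $p$ where the index $j$ in Lemma~\ref{clm:maxregret} changes; analogously, $obj_2^{AC}$'s slope changes at the corresponding $L_j$'s with $j \ge k+1$. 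The set $C$ constructed by the algorithm is exactly the union of these two families of breakpoints together with the boundary points, and sorting it yields $H=(h_1,\ldots,h_{|C|})$ with $|C|=O(n)$.

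On each sub-interval $[h_i,h_{i+1}]$ both $obj_1^{AC}$ and $obj_2^{AC}$ are affine, so their pointwise maximum is minimized either (i) at one of the two endpoints $h_i, h_{i+1}$, when the two lines do not cross inside the sub-interval, or (ii) at the unique interior crossing, i.e., the solution of $obj_1^{AC}(p)=obj_2^{AC}(p)$ restricted to this piece. A direct calculation using the precomputed counts $x_{i+1}, y_i$ (which determine the slopes on the piece) and sums $S^1_{i+1}, S^2_i$ (which determine the intercepts) solves this linear equation and yields exactly the closed form $p_i=((n-2k)M_{k+1}+S^1_{i+1}+S^2_i)/(x_{i+1}+y_i+(n-2k))$ used in the algorithm. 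The algorithm then compares every endpoint $h_i$ at which $obj_1^{AC}=obj_2^{AC}$ and every feasible interior crossing $p_i\in(h_i,h_{i+1})$, keeping the one with smallest max-regret. Correctness follows because $p_{opt}$ must either equal some $h_i$ or lie strictly inside some sub-interval, and in both cases it is enumerated.

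For the runtime, sorting $\{L_i\}_i$, $\{R_i\}_i$ and $C$ is $O(n\log n)$, and $|C|=O(n)$. Crucially, although Algorithm~\ref{algo1} writes the computation of $x_i, y_i, S^1_i, S^2_i$ as an independent loop per $h_i$, they can all be maintained incrementally in a single scan of $H$ in $O(n)$ total time; the main loop then does $O(1)$ arithmetic per sub-interval. Hence the overall running time is $O(n\log n)$, dominated by the initial sorts. The mildly delicate points---ties among the $L$'s and $R$'s, the degenerate piece where $j=k+1$ and $obj_1^{AC}$ has zero slope (when $n$ is even), and sub-intervals on which $obj_1^{AC}\equiv obj_2^{AC}$---are the main source of bookkeeping, but in each such case every minimax-optimal point remains either a breakpoint in $H$ or a solution of the explicit linear equation above, so the enumeration is still exhaustive.
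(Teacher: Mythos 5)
Your proposal is correct and follows essentially the same route as the paper's own (sketched) proof: restrict to $[L_{k+1},R_{k+1}]$ via Lemma~\ref{clm:optInterval}, enumerate the breakpoints in $C$, and on each affine piece test the endpoints and the interior solution of $obj_1^{AC}(p)=obj_2^{AC}(p)$ given by the closed form $p_i$. You simply supply more detail than the paper does (the piecewise-linearity/monotonicity justification, the derivation of $p_i$, and the incremental maintenance of the prefix sums for the running time), all of which is consistent with the algorithm as stated.
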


\begin{proof}[Proof (sketch)]
 We know from Lemma~\ref{clm:optInterval} that $p_{opt} \in [L_{k+1}, R_{k+1}]$. So, all we are doing in Algorithm~\ref{algo1} is to consider all the points in $C = \{R_i, 
L_j | i\leq k+1, j\geq k+1, R_i > L_{k+1}, L_j < R_{k+1}\}$ in sorted order (which is the set $H$ in the algorithm) and check if for every interval $[h_i, h_{i+1}]$, whether 
$p_{opt} = h_i$, $p_{opt} = h_{i+1}$, or $p_{opt} \in (h_i, h_{i+1})$ (lines 10--21). In the last case, since there are no points in $C$ that are between $h_i$ and $h_{i+1}$, 
we can use the fact that for an optimal point $p_{opt}$, $obj_1^{AC}(p_{opt}) = obj_2^{AC}(p_{opt})$ (from Claim~\ref{clm:objeq}, which is in Appendix~\ref{app:addclms}) to 
obtain a value of $p_i$ (line 17). In 
line 18 we 
just check 
if this point actually lies in $(h_i, h_{i+1})$ and also see if it is better than the best solution we currently have. 

One can see that through the cases we consider in Algorithm~\ref{algo1} we are trying out all possible values the minimax optimal solution can take and hence the algorithm is 
correct. Also, it is easy to see that this can be done in $O(n\log n)$ time.
\end{proof}

\section{Minimax optimal solution for maxCost} \label{app:mos-mc}
Given the candidate locations $[a_i, b_i]$ for all $i$,  where, for some $\delta \in [0, B]$, $b_i - a_i \leq \delta$, here we are concerned with computing the minimax 
optimal solution $p_{opt}$ such that $p_{opt} = \argmin_{p \in [0, B]} \allowbreak \max_{I \in \mathcal{I}} \left(S(I, p) - \min_{p' \in {[0,B]}} S(I, p')\right)$, where 
$\mathcal{I} = [a_1, b_1] \times \cdots \times [a_n, b_n]$ and $S$ is the maximum 
cost function. Note that from the discussion in Section~\ref{sec:maxCost} we know that if $I = (\ell_1, \cdots, \ell_n)$ (as stated in Section~\ref{sec:avgCost}, we can 
assume without loss of 
generality that the $l_i$s are in sorted order) is a valid input in $\mathcal{I}$, then $\min_{p' \in {[0,B]}} S(I, p') = S(I, p_I)$, where $p_I = \frac{\ell_1 + \ell_n}{2}$.
Therefore, we can rewrite the definition of $p_{opt}$ as $p_{opt} = \argmin_{p \in [0, B]}\max_{I \in \mathcal{I}} S(I, p) - S(I, 
p_I)$. 

Next, we prove the following lemma.

\begin{restatable}{lemma}{maxregretMC} \label{clm:maxregret-mc}
 Given a point $p$, the maximum regret associated with $p$ for the maximum cost objective can written as $\max(obj^{MC}_1(p),\allowbreak obj^{MC}_2(p))$, where
\begin{itemize}
 \item $obj^{MC}_1(p) = \frac{R_1 + R_{n}}{2} - p$
 \item $obj^{MC}_2(p) = p - \frac{L_1 + L_{n}}{2}$.
\end{itemize}
\end{restatable}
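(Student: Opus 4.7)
The plan is to first get a clean closed-form expression for the regret of a single input and then maximize it over $\mathcal{I}$. Fix any input $I = (\ell_1, \ldots, \ell_n)$ in sorted order. Because the maximum cost $\max_i |\ell_i - p|$ over agents is attained at one of the two extremes $\ell_1$ or $\ell_n$, and because $\min_{p'} S(I, p') = S(I, p_I) = \frac{\ell_n - \ell_1}{2}$ with $p_I = \frac{\ell_1 + \ell_n}{2}$, I would split into the three positional cases ($p \leq \ell_1$, $\ell_1 \leq p \leq \ell_n$, and $p \geq \ell_n$) and compute directly. In every case the algebra collapses to
\[
S(I, p) - S(I, p_I) \;=\; \left| p - \tfrac{\ell_1 + \ell_n}{2}\right|.
\]
In other words, the regret for the max-cost objective is exactly the distance from $p$ to the optimal point for $I$.

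With that identity in hand, maximizing over $I \in \mathcal{I}$ reduces to choosing the minimum and maximum coordinates of a feasible input so as to push $\frac{\ell_1+\ell_n}{2}$ as far from $p$ as possible. Using Claim~\ref{clm:lInterval}, the sorted $i$-th coordinate $\ell_i$ must lie in $[L_i, R_i]$, so in particular $\ell_1 \in [L_1, R_1]$ and $\ell_n \in [L_n, R_n]$. Therefore $\frac{\ell_1+\ell_n}{2}$ can be at most $\frac{R_1+R_n}{2}$ and at least $\frac{L_1+L_n}{2}$. Splitting the absolute value gives the two candidate upper bounds $\frac{R_1+R_n}{2} - p$ and $p - \frac{L_1+L_n}{2}$, whose maximum upper bounds $\maxRegret(p, \mathcal{I})$.

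To show these bounds are tight, I would exhibit explicit inputs achieving them. For $obj^{MC}_1(p)$, let every agent $i$ pick her right endpoint $b_i$ as her true location; the resulting sorted input has smallest coordinate $R_1$ and largest coordinate $R_n$, and is clearly in $\mathcal{I}$, yielding regret $\left|p - \frac{R_1+R_n}{2}\right| \geq \frac{R_1+R_n}{2} - p$. Analogously, having every agent pick $a_i$ produces a feasible input with extremes $L_1$ and $L_n$, achieving $p - \frac{L_1+L_n}{2}$. Combining the two matching upper and lower bounds yields $\maxRegret(p, \mathcal{I}) = \max(obj^{MC}_1(p), obj^{MC}_2(p))$.

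I do not expect any real obstacle: the only substantive step is the regret-equals-distance-to-midpoint identity, which is a short case check, and feasibility of the extremal inputs is immediate because each agent's location may be chosen independently from her own interval.
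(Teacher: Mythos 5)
Your proposal is correct and follows essentially the same route as the paper: both reduce the regret of an input $I$ to the (signed) distance from $p$ to the midpoint $\frac{\ell_1+\ell_n}{2}$ and then invoke Claim~\ref{clm:lInterval} to range $\ell_1$ over $[L_1,R_1]$ and $\ell_n$ over $[L_n,R_n]$, the paper merely organizing the sign split as a partition of $\mathcal{I}$ into inputs with midpoint on either side of $p$ rather than as a single absolute-value identity. Your explicit all-right-endpoints and all-left-endpoints witnesses make the achievability step (which the paper leaves implicit) slightly more transparent, but the argument is otherwise the same.
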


\begin{proof}
 Consider the maximum regret associated with locating the facility at $p$, which is given by $\max_{I \in \mathcal{I}} S(I, p) - S(I, p_I)$, and the sets $\mathcal{I}'_1, 
\mathcal{I}'_2$, where $\mathcal{I}'_1 \subseteq \mathcal{I}$ is the set of all inputs such that for every $I'_1 = (\ell_1, \cdots, 
\ell_{n})$ that belongs to $\mathcal{I'}_1$, $\frac{\ell_1 + \ell_{n}}{2} \geq p$ and $\mathcal{I}'_2 \subseteq \mathcal{I}$ is the set of all inputs such that for 
every $I'_2 = \langle \ell'_1, \cdots, 
\ell'_{n}\rangle$ that belongs to $\mathcal{I}'_2$, $\frac{\ell'_1 + \ell'_{n}}{2} < p$. Since $\mathcal{I} = \mathcal{I}'_1 \cup \mathcal{I}'_2$, we can rewrite this as 
$\max(\max_{I'_1 \in \mathcal{I}'_1} S(I'_1, p) - S(I'_1, p_{I'_1}), \max_{I'_2 \in \mathcal{I'}_2} 
S(I'_2, p) - S(I'_2, p_{I'_2}))$. Now, let us consider each of these terms separately in the cases below.
 
 {\textbf{Case 1:} $\max_{I'_1 \in \mathcal{I}'_1} S(I'_1, p) - S(I'_1, p_{I'_1})$.} Let us consider an arbitrary input $I'_1 = \langle \ell_1, \cdots, \allowbreak 
\ell_{n}\rangle$ that belongs to $\mathcal{I}'_1$ (if $\mathcal{I}'_1 = \emptyset$, then we define $\max_{I'_1 \in \mathcal{I}'_1} S(I'_1, p) - S(I'_1, p_{I_1}) = 0$). Now, 
the regret associated with $I'_1$ is given by
\begin{align*} 
  \regret(p, I'_1) &= S(I'_1, p) - S(I'_1, p_{I'_1})\\
  &= \max(|\ell_1 - p|, |\ell_n - p|) - \left|\ell_n - \frac{\ell_1 + \ell_{n}}{2}\right|\nonumber\\
  &= (\ell_n - p) - \left(\ell_n - \frac{\ell_1 + \ell_{n}}{2}\right) \tag{{\scriptsize as $p \leq \frac{\ell_1 + \ell_{n}}{2}$}} \nonumber\\
  &= \frac{\ell_1 + \ell_{n}}{2} - p.
 \end{align*}
 
 Therefore, making use of the fact that $\ell_i \in [L_i, R_i]$ (this is proved in Claim~\ref{clm:lInterval}, which is in the appendix), we have that 
 \begin{align}\label{eqn:mc1}
  \max_{I'_1 \in \mathcal{I}'_1} S(I'_1, p) - S(I'_1, p_{I'_1}) &= \max_{I'_1 \in \mathcal{I}'_1}\left(\frac{\ell_1 + \ell_{n}}{2} - p\right) = \frac{R_1 + R_{n}}{2} - p = 
obj^{MC}_1(p). 
 \end{align}
 
  \textbf{Case 2:} $\max_{I'_2 \in \mathcal{I}'_2} S(I'_2, p) - S(I'_2, p_{I'_2})$. Like in the previous case, consider an arbitrary input $I'_2 = \langle \ell'_1, 
\cdots,\allowbreak \ell'_{n}\rangle$ that belongs to $\mathcal{I}'_2$ (if $\mathcal{I}'_2 = \emptyset$, then we define $\max_{I'_2 \in \mathcal{I}'_2} S(I'_2, p) - S(I'_2, 
p_{I'_2}) = 0$). Now, doing a similar analysis as in Case 1, we will see that the regret associated with $I'_2$ is given by $p - \frac{\ell'_1 + \ell'_{n}}{2}$. Hence, 
again, making use of the fact that $\ell_i \in [L_i, R_i]$ (see Claim~\ref{clm:lInterval}), we have that
  \begin{align}\label{eqn:mc2}
  \max_{I'_2 \in \mathcal{I}'_2} S(I'_2, p) - S(I'_2, p_{I'_2}) &= \max_{I'_2 \in \mathcal{I}'_2}\left(p - \frac{\ell'_1 + \ell'_{n}}{2}\right) = p - \frac{L_1 + L_{n}}{2} = 
obj^{MC}_2(p). 
 \end{align}
 
Now, from equations~\ref{eqn:mc1} and~\ref{eqn:mc2} we have our lemma.
\end{proof}

Equipped with the lemma proved above, we can now find the minimax optimal solution for the maximum cost objective.

\begin{restatable}{proposition}{optMC} \label{prop:optSol-mc}
 The minimax optimal solution, $p_{opt}$, for the maximum cost objective is given by $p_{opt} = \frac{L_1 + R_1 + L_n + R_n}{4}$.
\end{restatable}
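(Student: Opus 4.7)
The plan is to apply Lemma~\ref{clm:maxregret-mc} directly and exploit the fact that the two component functions of the maximum regret are linear in $p$ with opposite slopes. Specifically, from the lemma we have
\begin{equation*}
\maxRegret(p,\mathcal{I}) = \max\!\left(\tfrac{R_1+R_n}{2} - p,\; p - \tfrac{L_1+L_n}{2}\right),
\end{equation*}
where the first term is strictly decreasing in $p$ and the second is strictly increasing in $p$. Hence the maximum of the two is a convex, piecewise-linear function of $p$ that is minimized exactly at the point where the two pieces agree.

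First I would set $obj^{MC}_1(p) = obj^{MC}_2(p)$ and solve for $p$, which yields $p = \tfrac{L_1+R_1+L_n+R_n}{4}$. Second, I would verify that this candidate lies in the feasible range $[0,B]$; since each of $L_1, R_1, L_n, R_n \in [0,B]$, the average is as well. Third, to confirm that it truly is the minimizer (rather than a spurious crossing of two negative-valued lines), I would note that Claim~\ref{clm:LRdelta} gives $L_1 \leq R_1$ and $L_n \leq R_n$, so $\tfrac{L_1+L_n}{2} \leq \tfrac{L_1+R_1+L_n+R_n}{4} \leq \tfrac{R_1+R_n}{2}$, which ensures both pieces are non-negative at $p_{opt}$.

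Finally, I would wrap up the argument by observing that for any $p' < p_{opt}$ the decreasing branch $obj^{MC}_1$ strictly exceeds its value at $p_{opt}$, and for any $p' > p_{opt}$ the increasing branch $obj^{MC}_2$ strictly exceeds its value at $p_{opt}$; in either case $\maxRegret(p',\mathcal{I}) > \maxRegret(p_{opt},\mathcal{I})$. Therefore $p_{opt} = \argmin_{p \in [0,B]} \maxRegret(p,\mathcal{I}) = \tfrac{L_1+R_1+L_n+R_n}{4}$. There is no real obstacle here: once Lemma~\ref{clm:maxregret-mc} is in hand, the proof reduces to minimizing the maximum of two linear functions with opposite slopes, which is a one-line calculation.
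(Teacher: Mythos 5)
Your proposal is correct and follows essentially the same route as the paper: both reduce the problem, via Lemma~\ref{clm:maxregret-mc}, to minimizing the maximum of one decreasing and one increasing linear function of $p$, which is achieved where $obj^{MC}_1$ and $obj^{MC}_2$ balance, giving $p_{opt} = \frac{L_1+R_1+L_n+R_n}{4}$. Your additional checks (that the candidate lies in $[0,B]$ and that both branches are non-negative there, via $L_1 \leq R_1$ and $L_n \leq R_n$) are sensible tidying of details the paper's sketch leaves implicit, but they do not change the argument.
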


\begin{proof}[Proof (sketch)]
 Note that the statement immediately follows if we prove that for a minimax optimal solution, $p_{opt}$, $obj^{MC}_1(p_{opt}) = obj^{MC}_2(p_{opt})$. And the latter 
proposition 
is easy to see, for if $obj^{MC}_1(p_{opt}) \neq obj^{MC}_2(p_{opt})$, then the maximum regret associated with  $p_{opt}$ is greater than $\frac{R_1+R_n-L_1-L_n}{4}$, 
whereas the maximum regret associated with the point $\frac{L_1 + R_1 + L_n + R_n}{4}$ is exactly $\frac{R_1 + R_n - L_1 - L_n}{4}$.
\end{proof}

%
%

\end{document}